\newcommand{\set}[1]{\{ #1 \}}
\newcommand{\Nat}{\ensuremath{\mathbb{N}}}
\newcommand{\eqdef}{\stackrel{\mbox{\begin{tiny}def\end{tiny}}}{=}} 
\newcommand {\emptyword}{\ensuremath{\epsilon}}
\newcommand {\pspace} {\textsc{pspace}}
\newcommand {\expspace} {\textsc{expspace}}
\newcommand {\exptime} {\textsc{exptime}}
\newcommand {\nonelementary} {\textsc{non-elementary}}
\newcommand {\elementary} {\textsc{elementary}}
\newcommand{\cut}[1]{}
\mathchardef\mhyphen="2D 
\newcommand{\hide}[1]{}
\newcommand\cA{\mathcal{A}}
\newcommand\cB{\mathcal{B}}
\newcommand\cE{\mathcal{E}}
\newcommand\cG{\mathcal{G}}
\newcommand\Ll{\mathcal{L}}
\newcommand\replaceall{\mathsf{replaceAll}}
\newcommand\sreplaceall{\mathsf{sreplaceAll}}
\newcommand\reverse{\mathsf{reverse}}
\newcommand\indexof{\mathsf{IndexOf}}
\newcommand\revsym{\pi}
\newcommand\strline{\mathsf{SL}}
\newcommand\vars{\mathsf{Vars}}
\newcommand{\ASSERT}[1]{\textbf{assert}(#1)}
\newcommand{\arity}{r}
\newcommand{\FA}{FA}
\newcommand{\FFA}{2FA}
\newcommand{\FT}{FT}
\newcommand{\FFT}{2FT}
\newcommand{\FunFT}{FFT}
\newcommand{\PT}{PT}
\newcommand{\ialphabet}{\Sigma}
\newcommand{\EndLeft}{\ensuremath{\vartriangleright}}
\newcommand{\EndRight}{\ensuremath{\vartriangleleft}}
\newcommand{\Lang}{\mathcal{L}}
\newcommand{\Tran}{\mathcal{T}}
\newcommand{\Aut}{\ensuremath{\mathcal{A}}}
\newcommand{\Transducer}{\ensuremath{T}}
\newcommand{\controls}{\ensuremath{Q}}
\newcommand{\finals}{\ensuremath{F}}
\newcommand{\transrel}{\ensuremath{\delta}}
\newcommand{\Left}{\ensuremath{-1}}
\newcommand{\Right}{\ensuremath{1}}
\newcommand{\defn}[1]{\emph{#1}}
\newcommand{\conacc}{\Omega}
\newcommand{\reginvrel}{\textbf{RegInvRel}}
\newcommand{\prerec}{\reginvrel}
\newcommand{\regmondec}{\textbf{RegMonDec}}
\newcommand\rcdim{\mathsf{art}}
\newcommand\rcdep{\mathsf{asgn}}
\newcommand\rcasrt{\mathsf{asrt}}
\newcommand\rcphi{\mathsf{fnsize}}
\newcommand\rcpsi{\mathsf{fasize}}
\newcommand{\Pre}{\textsf{Pre}}
\newcommand\bigO{\mathcal{O}}
\newcommand\tiles{\Theta}
\newcommand\hrel{H}
\newcommand\vrel{V}
\newcommand\tile{t}
\newcommand\inittile{t_I}
\newcommand\fintile{t_F}
\newcommand\tileheight{h}
\newcommand\tilewidth{\ell}
\newcommand\expheight{n}
\newcommand\linlen{m}
\newcommand\tilesnum[1]{\Theta_{#1}}
\newcommand\hrelnum[1]{H_{#1}}
\newcommand\vrelnum[1]{V_{#1}}
\newcommand\inittilenum[1]{\inittile^{#1}}
\newcommand\fintilenum[1]{\fintile^{#1}}
\newcommand\goodnums[2]{\ap{S_{#1}}{#2}}
\newcommand\tenc[2]{[#2]_{#1}}
\newcommand\nmax[1]{\text{MAX}_{#1}}
\newcommand\numeq{\bowtie}
\newcommand\numplus{\oplus}
\newcommand\numsep{\#}
\newcommand\passsep{\natural}
\newcommand\tilerow{r}
\newcommand{\tup}[1]{\left( #1 \right)}
\newcommand\ap[2]{{#1}\mathord{\brac{#2}}}
\newcommand{\opset}{\mathscr{O}}
\newcommand{\strlineconcat}{$\strline$[$\concat$, $\sreplaceall$, $\reverse$, \FT]}
\newcommand{\strlinefft}{$\strline$[$\concat$, $\replaceall$, $\reverse$, \FunFT]}
\newcommand\brac[1]{\left(#1\right)}
\newcommand\setcomp[2]{\left\{{#1}\ \middle|\ {#2}\right\}}
\newcommand\lang[1]{\mathcal{L}\mathord{\brac{#1}}}
\newcommand\rowdelim{\#}
\newcommand\resetchar{!}
\newcommand\bit{b}
\newcommand\nbit[2]{{#2}_{#1}}
\newcommand\repl[3]{\$^{#3}_{#1, #2}}
\newcommand\replall[1]{\$_{#1}}
\newcommand\simplerepl[2]{\$^{#2}_{#1}}
\newcommand\caleybox[1]{\llparenthesis #1 \rrparenthesis}
\newcommand\internalchar{\flat}
\newcommand\transducerbench{\textsc{Transducer}}
\newcommand\slogbench{\textsc{SLOG}}
\newcommand\slogbenchr{\textsc{SLOG (replace)}}
\newcommand\slogbenchra{\textsc{SLOG (replaceall)}}
\newcommand\kaluzabench{\textsc{Kaluza}}
\newtheorem{fact}{Fact}
\newcommand{\OMIT}[1]{}
\newcommand\shortlong[2]{#2}
\newif\ifdraft\draftfalse
\newcommand{\anthony}[1]{\color{red} {AL: #1 :LA} \color{black}}
\newcommand{\zhilin}[1]{\color{brown} {ZL: #1 :LZ} \color{black}}
\newcommand{\tl}[1]{\color{blue} {TL: #1 :LT} \color{black}}
\newcommand{\mat}[1]{\color{cyan} {MH: #1 :HM} \color{black}}
\newcommand{\anthony}[1]{}
\newcommand{\zhilin}[1]{}
\newcommand{\tl}[1]{}
\newcommand{\mat}[1]{}
\newcommand{\concat} {\circ}
\newcommand{\replace} {{\sf replace}}
\newcommand{\str} {{\sf Str}}
\newcommand{\intnum} {{\sf Int}}
\newcommand{\regexp} {{\sf RegExp}}
\newtheorem{theorem}{Theorem}[section]
\newtheorem{remark}[theorem]{Remark}
\newtheorem{definition}[theorem]{Definition}
\newtheorem{example}[theorem]{Example}
\newtheorem{proposition}[theorem]{Proposition}
\newtheorem{lemma}[theorem]{Lemma}
\begin{document}

\title{Decision Procedures for Path Feasibility of String-Manipulating Programs with Complex Operations}
\titlerunning{String-Manipulating Programs with Complex Operations}

\author{
    Taolue Chen \and
    Matthew Hague \and
    Anthony W. Lin \and
    Philipp R\"ummer \and
    Zhilin Wu
}
\authorrunning{
    T. Chen \and
    M. Hague \and
    A. W. Lin \and
    P. R\"ummer \and
    Z. Wu
}

\institute{
  Birkbeck, University of London
  \and
  Royal Holloway, University of London
  \and
  University of Oxford
  \and
  Uppsala University
  \and
  Institute of Software, Chinese Academy of Sciences
}

\maketitle


\begin{abstract}
    The design and implementation of decision procedures for checking path 
    feasibility in string-manipulating programs is an important problem, with 
    such applications as symbolic execution 
    of programs with strings and automated detection of cross-site scripting
    (XSS) vulnerabilities in web applications. A (symbolic) path is given as a
    finite sequence of assignments and assertions (i.e. without loops), and 
    checking 
    its feasibility amounts to determining the existence of inputs that yield
    a successful execution.
    Modern 
    programming languages
    (e.g. JavaScript, PHP, and Python) support many complex string
    operations, and strings are also often implicitly modified during
    a computation in some intricate fashion (e.g. by some autoescaping 
    mechanisms). 
    
    In this paper we provide two general semantic conditions which
    together ensure the decidability of path feasibility: (1) each
    assertion admits regular monadic decomposition (i.e.\ is an effectively recognisable relation), and (2)
    each assignment uses a (possibly nondeterministic) function whose inverse 
    relation preserves regularity. We show that the semantic conditions are 
    \emph{expressive}
    since they are satisfied by a multitude of string operations including
    concatenation, one-way and two-way finite-state transducers, $\replaceall$
    functions (where the replacement string could contain variables), 
    string-reverse functions, regular-expression matching, and 
    some (restricted) forms of letter-counting/length functions. 
    The semantic conditions also strictly subsume existing decidable
    string theories (e.g. straight-line fragments, and acyclic logics), and 
    most existing benchmarks (e.g. most of Kaluza's, and all of SLOG's, 
    Stranger's, and SLOTH's benchmarks). Our semantic conditions also yield
    a conceptually \emph{simple} decision procedure, as well as an 
    \emph{extensible} architecture of a string solver
    in that a user may easily incorporate his/her own string functions into the
    solver by simply providing code for the pre-image computation without 
    worrying about other parts of the solver.
    Despite these, the semantic conditions are unfortunately too general to 
    provide a fast and complete decision procedure. We provide strong 
    theoretical evidence for this in the form of complexity results.
    To rectify this problem, we propose two solutions. Our main
    solution is to allow only
    partial string functions (i.e., prohibit nondeterminism) in condition (2).
    This restriction is satisfied in many cases in practice, and
    yields decision procedures that are effective in both theory and
    practice. Whenever nondeterministic functions are still needed (e.g. the
    string function split), our second solution is to provide a syntactic
    fragment that provides a support of nondeterministic functions, and 
    operations like one-way transducers, $\replaceall$ (with constant replacement
    string), the string-reverse function, concatenation, and regular-expression
    matching. We show that this fragment can be reduced to an existing 
    solver SLOTH that exploits fast model checking algorithms like IC3.

    We provide an efficient implementation of our decision procedure 
    (assuming our first solution above, i.e., deterministic partial string 
    functions)  in a new string solver OSTRICH.
    Our implementation provides built-in support for concatenation, reverse, functional transducers (\FunFT{}), and $\replaceall$ and provides a framework for extensibility to support further string functions.
    We demonstrate the 
    efficacy of our new solver against other competitive solvers.
\end{abstract}



\section{Introduction}
\label{sec:intro}

Strings are a fundamental data type in virtually all programming languages.
Their generic nature can, however, lead to many subtle programming
bugs, some with security consequences, e.g., cross-site scripting
(XSS), which is among the OWASP Top 10 Application Security Risks
\cite{owasp17}. One effective
automatic testing method for identifying subtle programming errors
is based on \emph{symbolic execution}
\cite{king76} and combinations with dynamic analysis
called \emph{dynamic symbolic execution} \cite{jalangi,DART,EXE,CUTE,KLEE}.
See \cite{symbex-survey} for an excellent survey. Unlike purely random testing,
which runs only \emph{concrete} program executions on different
inputs, the techniques of symbolic execution analyse \emph{static} paths
(also called symbolic executions) through the software system under test.
Such a path can be viewed as a constraint $\varphi$ (over
appropriate data domains) and the hope is that a fast
solver is available for checking the satisfiability of $\varphi$ (i.e. to check
the \emph{feasibility} of the static path), which can be used for generating
inputs that lead to certain parts of the program or an erroneous behaviour.

\OMIT{
A symbolic execution analysis relies on constraint solvers at its core. When
the system
under test is a program that uses string data type (including most programs
written in scripting languages like JavaScript/PHP), these techniques
crucially need constraint solvers over the string domain (a.k.a. \emph{string
solvers}). This in fact has been one main reason behind the development of
the theory, implementation, and symbolic execution applications of string
solving in the past decade, e.g., see
\cite{BTV09,Berkeley-JavaScript,HAMPI,Stranger,Vijay-length,YABI14,Abdulla14,LB16,fang-yu-circuits,Abdulla17,CCHLW18,HJLRV18,S3,TCJ16,Z3-str,Z3-str2,Z3-str3,cvc4,Saner,RVG12,jalangi,expose}.
}

\OMIT{
\smallskip
\noindent
\textbf{Constraints from Symbolic Executions. }
}
Constraints from symbolic
execution on string-manipulating programs can be understood in terms of the
problem of path
feasibility over a bounded program $S$ with neither loops nor
branching (e.g. see \cite{BTV09}). 
That is, $S$ is a sequence of assignments and conditionals/assertions, i.e.,
generated by the grammar
%
\begin{equation}
    S ::= \qquad y := f(x_1,\ldots,x_\arity) \ |\
    \text{\ASSERT{$g(x_1,\ldots,x_\arity)$}}\ |\
            S; S\
            \label{eq:symbex}
\end{equation}
where $f: (\Sigma^*)^\arity \to \Sigma^*$ is  a partial string function and $g\subseteq (\Sigma^*)^\arity$ is a string relation.
The following is a simple example of a symbolic execution $S$
which uses string variables ($x$, $y$, and $z$'s) and string constants
(letters \texttt{a} and \texttt{b}), and the concatenation operator ($\concat$):
\begin{equation}
        z_1 := x \concat \text{\texttt{ba}} \concat y;\quad
        z_2 := y \concat \text{\texttt{ab}} \concat x;\quad
        \ASSERT{z_1 == z_2}
        \label{eq:intro_ex}
\end{equation}
The problem of \defn{path feasibility/satisfiability}\footnote{
    It is equivalent to
satisfiability of string constraints in the SMT framework
\cite{SMT-CACM,SMT-chapter,KS08}.
Simply convert a symbolic execution $S$
into a \emph{Static Single Assignment} (SSA) form (i.e. use a new
variable
on l.h.s. of each assignment) and treat assignments as equality,
e.g., formula for the above example is
        $z_1 = x + \text{\texttt{ba}} + y \wedge
        z_2 = y + \text{\texttt{ab}} + x\ \wedge
        z_1 = z_2$, where $+$ denotes the string concatenation operation.
}
asks whether, for a given program $S$, there exist \emph{input} strings (e.g.
$x$ and $y$ in (\ref{eq:intro_ex}))
that can successfully take
$S$ to the end of the program while satisfying all the assertions.
This path can be satisfied by
assigning $y$ (resp.~$x$) to $\texttt{b}$ (resp.~the empty string).
In this paper, we will also allow nondeterministic
functions $f: (\Sigma^*)^\arity \to 2^{\Sigma^*}$ since nondeterminism can be a
useful modelling construct. For example, consider the code in
Figure \ref{fig:pair}. It ensures that each element in \verb+s1+ (construed as a list delimited by
\verb+-+) is longer than each element in \verb+s2+.  If $f: \Sigma^* \to
2^{\Sigma^*}$ is a function that nondeterministically outputs a substring
delimited by \verb+-+, our symbolic execution analysis can be reduced to
feasibility of the path:
\[
x := f(s_1);\quad
y := f(s_2);\quad
\ASSERT{\texttt{len}(x) \leq \texttt{len}(y)}
\]

In the last few decades much research on the satisfiability problem of string constraints
suggests that it takes very little for a string constraint language to
become undecidable.
For example,
although the existential theory of
concatenation and regular constraints (i.e. an atomic expression is either
$E = E'$, where $E$ and $E'$ are concatenations of string constants and variables, or $x \in L$, where $L$ is 
\begin{wrapfigure}{r}{0.5\textwidth}
\begin{flushleft}
\begin{minipage}{0.4\textwidth}
\begin{minted}{python}
# s1, s2: strings with delimiter '-'
for x in s1.split('-')
  for y in s2.split('-')
    assert(len(x) > len(y))
\end{minted}
\end{minipage}
\end{flushleft}
    \caption{A Python code snippet \label{fig:pair}}
\end{wrapfigure}
%
a regular language) is decidable and
in fact \pspace-complete \cite{Plandowski,diekert,J16}, the theory becomes 
undecidable when enriched with letter-counting
\cite{buchi}, i.e., expressions of the form $|x|_a = |y|_b$, where
$|\cdot|_a$ is a function mapping a word to the number of occurrences of the
the letter $a$ in the word. Similarly, although
finite-state transductions \cite{LB16,DV13,BEK} are crucial for expressing many
functions used in string-manipulating programs --- including autoescaping
mechanisms (e.g. backslash escape, and HTML escape in JavaScript), and the
$\replaceall$ function with a constant replacement pattern ---
checking a simple formula of the form $\exists x R(x,x)$, for a given rational
transduction\footnote{A rational transduction is a transduction defined by a rational transducer, namely, a finite automaton over the alphabet $(\Sigma \cup \{\varepsilon\})^2$, where $\varepsilon$ denotes the empty string.} $R$, can easily encode the Post Correspondence Problem
\cite{Morvan00}, and therefore is undecidable.

Despite the undecidability of allowing various operations in string constraints,
in practice it is common for a string-manipulating program to contain multiple
operations (e.g. concatenation and finite-state transductions), and so a path
feasibility solver nonetheless needs to be able to handle them.
This is one reason why some string solving practitioners
opted to support more string operations and settle with incomplete solvers
(e.g. with no guarantee of
termination) that could still solve some constraints that arise in practice,
e.g., see
\cite{S3,TCJ16,Z3-str,Z3-str2,Z3-str3,Berkeley-JavaScript,YABI14,Abdulla17,Stranger,HAMPI,cvc4,trau18}.
For example, the tool S3 \cite{S3,TCJ16} supports general recursively-defined
predicates and uses a number of incomplete heuristics to detect unsatisfiable
constraints. As another example, the tool Stranger \cite{Stranger,YABI14}
supports concatenation, $\replaceall$ (but with both pattern and replacement
strings being constants), and regular constraints, and performs widening (i.e.
an overapproximation) when a concatenation operator is seen in the analysis.
Despite the excellent performance of some of these solvers on several existing
benchmarks,
there are good reasons for designing decision procedures with stronger
theoretical guarantees, e.g., in the form of decidability (perhaps accompanied
by a complexity analysis). One such reason is that string constraint solving is
a research area in its infancy with an insufficient range of benchmarking
examples to convince us that if a string solver works well on existing
benchmarks, it will also work well on future benchmarks.
\OMIT{
To exacerbate the situation, since solvers are typically used only as part of a
program analysis engine (which already deals with an undecidable problem) and
have to be queried \emph{multiple} times in an analysis of a single program,
the lack of a
}
A theoretical
result provides a kind of robustness guarantee upon which a practical
solver could further improve and optimise.
\OMIT{
\anthony{I'm not sure about the following ...}
Robustness is an important property for a solver
Another reason is that
solvers are typically used only as part of a program
analysis engine (which already deals with an undecidable problem) and have to
be queried \emph{multiple} times in an analysis of a single program.
The presence of a stronger theoretical guarantee could help minimise
unpredictability
}

\OMIT{
Having a rich
constraint language accompanied with a complete solver

,
the lack of a stronger theoretical guarantee --- for example, in the form of
decidability --- is not an entirely satisfactory state of affairs.
}

\OMIT{
This is not an entirely satisfactory state of affairs
Since solvers are often queried multiple times in a symbolic execution analysis,
we believe
}

Fortunately, recent years have seen the possibility of recovering some
decidability of string constraint languages with multiple string operations,
while retaining applicability for constraints that arise in practical
symbolic execution applications. This is done by imposing syntactic restrictions
including acyclicity \cite{BFL13,Abdulla14}, solved form \cite{Vijay-length},
and straight-line \cite{LB16,HJLRV18,CCHLW18}. These restrictions are known
to be satisfied by many existing string constraint benchmarks, e.g., Kaluza
\cite{Berkeley-JavaScript}, Stranger \cite{Stranger}, SLOG
\cite{fang-yu-circuits,HJLRV18}, and mutation XSS benchmarks of \cite{LB16}.
However, these results are
unfortunately rather fragmented, and it is difficult to extend the
comparatively limited number of supported string operations. In the following,
we will elaborate this point more precisely.
The acyclic logic of \cite{BFL13} permits only rational
transductions, in which the $\replaceall$ function with constant
pattern/replacement strings and regular constraints (but not concatenation)
can be expressed. On the other hand, the acyclic logic of
\cite{Abdulla14} permits concatenation, regular constraints, and the length
function, but neither the  $\replaceall$ function nor transductions. This
logic is in fact quite related to the solved-form logic proposed earlier by
\cite{Vijay-length}. The straight-line logic of
\cite{LB16} unified the earlier logics by allowing concatenation,
regular constraints, rational transductions, and length and letter-counting
functions. It was pointed out by  \cite{CCHLW18}
that this logic cannot express the $\replaceall$ function with the replacement string
provided as a variable, which was never studied in the context of
verification and program analysis. Chen \emph{et al.} proceeded by showing that
a new straight-line logic with the more general $\replaceall$ function and
concatenation is decidable, but becomes undecidable when the length function is
permitted.

Although the aforementioned results have been rather successful in capturing
many
string constraints that arise in applications (e.g. see the benchmarking
results of \cite{Vijay-length} and \cite{LB16,HJLRV18}), many natural problems
remain unaddressed. \emph{To what extent can one combine these operations without
sacrificing decidability?} For example, can a useful decidable logic
permit
the more general  $\replaceall$, rational transductions, and concatenation at the
same time?
\emph{To what extent can one introduce new string operations without sacrificing
decidability?} For example, can we allow the string-reverse function (a
standard library function, e.g., in Python), or more generally functions given
by two-way transducers (i.e. the input head can also move to the left)? Last
but not least, since there are a plethora of
complex string operations, it is impossible for a solver designer to incorporate
all the string operations that will be useful in all application domains. Thus,
\emph{can (and, if so, how do) we design an effective string solver that
can easily be extended with user-defined string functions, while providing a
strong completeness/termination guarantee?}
Our goal is to provide theoretically-sound and practically implementable
solutions to these problems.

\smallskip
\noindent
\textbf{Contributions.}
        We provide two general semantic conditions (see Section
        \ref{sec-dec}) which together ensure
decidability of path feasibility for string-manipulating programs:
        \begin{enumerate}
            \item[(1)] the conditional $R \subseteq (\ialphabet^*)^k$ in each
                assertion admits a regular monadic decomposition, and
            \item[(2)] each assignment uses a function $f: (\ialphabet^*)^k \to
                2^{\ialphabet^*}$ whose inverse relation preserves
                ``regularity''.
        \end{enumerate}
        Before describing these conditions in more detail, we comment on the
        four main features (4Es) of our decidability result:
        (a) \emph{Expressive}: the two conditions are satisfied by most string
        constraint benchmarks (existing and new ones including those of
        \cite{Berkeley-JavaScript,Stranger,fang-yu-circuits,HJLRV18,LB16}) and
        strictly generalise
        several expressive and decidable constraint languages (e.g. those of
        \cite{LB16,CCHLW18}),
        (b) \emph{Easy}: it leads to a decision procedure that is conceptually
        simple (in particular, substantially simpler than many existing ones),
        (c) \emph{Extensible}: it provides an extensible architecture of
        a string solver that allows users to easily incorporate their own
        user-defined functions to the solver, and (d) \emph{Efficient}:
        it provides a sound basis of our new fast string solver OSTRICH that is
        highly competitive on string constraint benchmarks.
        We elaborate the details of the two aforementioned semantic
        conditions, and our contributions below.

        The first semantic condition simply means that $R$ can be effectively
        transformed into
a finite union $\bigcup_{i=1}^n (L_i^{(1)} \times \cdots \times L_i^{(k)})$ of Cartesian 
products of regular languages. (Note that this is \emph{not} the intersection/product of regular languages.)
A relation that is definable in this way is
often called a \defn{recognisable relation} \cite{CCG06}, which
        is one standard extension of the notion of regular languages (i.e. unary
        relations) to general $k$-ary relations. The
framework of recognisable relations can express interesting conditions that
might at a first glance seem beyond ``regularity'', e.g., $|x_1| + |x_2| \geq 3$
        as can be seen below in
        Example \ref{ex:length}. Furthermore, there are algorithms
        (i.e. called \emph{monadic decompositions} in \cite{monadic-decomposition})
        for deciding whether a given relation represented in highly expressive
        symbolic representations
        (e.g. a \defn{synchronised rational relation} or a
        \defn{deterministic rational relation}) is recognisable and, if so,
        output a symbolic representation of the recognisable relation
        \cite{CCG06}. On the other hand, the second condition means
        that the pre-image $f^{-1}(L)$ of a regular language $L$ under the
        function $f$ is a $k$-ary recognisable relation. This is an expressive
        condition (see Section \ref{sec-core}) satisfied by many string
        functions including concatenation,
        the string reverse function, one-way and two-way finite-state
        transducers, and the $\replaceall$ function where the replacement string
        can contain variables. Therefore, we obtain strict generalisations of
        the decidable string constraint languages in \cite{LB16} (concatenation,
        one-way transducers, and regular constraints) and in \cite{CCHLW18}
        (concatenation, the $\replaceall$ function, and regular constraints).
        In addition, many string solving benchmarks (both existing and
        new ones) derived from practical applications satisfy our two semantics
        conditions including the benchmarks of SLOG \cite{fang-yu-circuits}
        with replace and $\replaceall$, the benchmarks of Stranger
        \cite{Stranger}, ${\sim}80\%$ of Kaluza benchmarks
        \cite{Berkeley-JavaScript}, and the transducer benchmarks of
        \cite{LB16,HJLRV18}.
        \OMIT{
        , and finally new benchmarking examples derived from
        auto-sanitisation web-templating that we introduce in this paper
        }
        We provide a simple and clean decision procedure (see Section
        \ref{sec-dec})
        which propagates the regular language constraints in a \emph{backward}
        manner via the regularity-preserving pre-image computation.
        Our semantic conditions also naturally lead to
        extensible architecture of a
        string solver: a user can easily extend our solver with one's own
        string functions by simply providing one's code for computing
        the pre-image $f^{-1}(L)$ for an input regular language $L$ without
        worrying about other parts of the solver.

        Having talked about the Expressive, Easy, and Extensible features of
        our decidability result (first three of the four Es), our decidability result does not immediately lead to an Efficient decision procedure and a fast string solver. A substantial proportion
        of the remaining
        paper is dedicated to analysing the cause of the problem and proposing
        ways of addressing it which are effective from both theoretical and
        practical standpoints.

        Our hypothesis is that allowing general string relations
        $f: (\ialphabet^*)^k \to 2^{\ialphabet^*}$ (instead of just partial
        functions $f: \ialphabet^* \to \ialphabet^*$), although broadening the
        applicability of the resulting theory (e.g. see Figure \ref{fig:pair}),
        makes the constraint solving problem considerably more difficult.
        One reason is that propagating $n$ regular constraints
        $L_1,\ldots,L_n$ backwards through a string relation $f:
        (\ialphabet^*)^k \to 2^{\ialphabet^*}$ seems to require performing a
        product automata construction for $\bigcap_{i=1}^n L_i$ before
        computing a recognisable relation for $f^{-1}(\bigcap_{i=1}^n L_i)$.
        To make things worse, this product construction has to be done
        \emph{for practically every variable in the constraint}, each of which
        causes an exponential blowup. We illustrate this with a concrete
        example in Example~\ref{ex:nondistr}. We provide a strong piece of theoretical evidence that 
        unfortunately
        this is unavoidable in the worst case. More
        precisely, we show (see Section \ref{sec-core}) that the complexity of
        the path feasibility problem
        with binary relations represented by one-way finite transducers
        (a.k.a. \emph{binary rational relations}) and
        the $\replaceall$ function (allowing a variable in the replacement
        string) has a \nonelementary{} complexity (i.e., time/space complexity
        cannot be bounded by a fixed tower of exponentials) with a single
        level of exponentials
        caused by a product automata construction for each variable in the
        constraint.
        This is especially surprising since allowing
        either binary rational relations or the aforementioned
        $\replaceall$ function results in a constraint language whose complexity
        is at most double exponential time and single exponential space
        (i.e. \expspace{}); see \cite{LB16,CCHLW18}.
        To provide further evidence
        of our hypothesis, we
        accompany this with another lower bound (also see Section
        \ref{sec-core}) that the path feasibility
        problem has a \nonelementary{} complexity for relations that are
        represented by two-way finite transducers (without the $\replaceall$
        function), which are possibly one of the
        most natural and well-studied classes of models of string relations $f:
        \ialphabet^* \to 2^{\ialphabet^*}$ (e.g. see \cite{AD11,EH01,FGRS13}
        for the model).

        We propose two remedies to the problem. The first one is
        to allow only string functions in our constraint language. This allows
        one to avoid the computationally expensive product automata construction
        for each variable in the constraint. In fact, we show (see
        Section \ref{sec:implemented-alg}) that the
        \nonelementary{} complexity for the case of binary rational relations and
        the $\replaceall$ function can be substantially brought down to
        double exponential time and single exponential space (in fact,
        \expspace{}-complete) if the binary rational relations are restricted
        to partial functions. In fact, we prove that this complexity still
        holds if we additionally allow the string-reverse function and the
        concatenation operator. The \expspace{} complexity might still sound prohibitive, but
        the highly competitive performance of our new solver OSTRICH (see below) shows that this is not the case.

        Our second solution (see Section \ref{sec:strlineconact}) is
        to still allow string relations, but find an appropriate syntactic
        fragment of our semantic conditions that yield better computational
        complexity. Our proposal for such a fragment is to \emph{restrict the
        use of $\replaceall$ to constant replacement strings}, but allow
        the string-reverse function and binary rational relations. The
        complexity of this fragment is shown to be \expspace-complete, building
        on the result of \cite{LB16}. There are at least two advantages of
        the second solution. While string relations are supported, our algorithm
        reduces the problem to constraints which can be handled by the
        existing solver SLOTH \cite{HJLRV18} that has a reasonable performance.
        Secondly, the fully-fledged length constraints (e.g. $|x| = |y|$ and
        more generally linear arithmetic expressions on the lengths of string
        variables) can be incorporated into this syntactic fragment without
        sacrificing decidability or increasing the \expspace{} complexity.
        Our experimentation and the comparison of our tool with SLOTH (see below)
        suggest that \emph{our first proposed solution is to be strongly
        preferred when string relations are not used in the constraints}.

        \OMIT{
        The following
        example should clarify this:
        \begin{equation*}
            y := f(x);\quad
            z := g(x);\quad
            \ASSERT{z \in L_1}
            \ASSERT{z \in L_2}
            \label{eq:intro_ex}
        \end{equation*}
        for functions $f,g: \ialphabet^* \to 2^{\ialphabet^*}$ and regular
        languages $L_1,\L_2 \subseteq \ialphabet^*$.
        }
        \OMIT{
        In spite of decidability, the generality of our semantic conditions
        could indicate that the resulting solver might exhibit an exorbitant
        worst-case computational complexity. This is unfortunately the case.
        More precisely, we show a nonelementary complexity
        (i.e. running time cannot be bounded by a tower of exponentials of a
        fixed height) is unavoidable even for a natural and well-studied class
        of string
        relations $f: (\ialphabet^*) \to 2^{\ialphabet^*}$: those definable by
        two-way finite-state transducers \cite{??}. See Section \ref{sec-core}.
        We accompany this by
        another \nonelementary{} lower bound for the case when the function
        $f: (\ialphabet^*)^k \to 2^{\ialphabet^*}$ can be \emph{either} the
        $\replaceall$ function with a variable replacement string and a constant
        pattern \cite{CCHLW18} \emph{or} a relation $f: (\ialphabet^*) \to
        2^{\ialphabet^*}$ represented by a one-way transducer \cite{LB16}.
        This lower bound is significant because allowing only one of
        these two classes of functions results in a problem with a substantially
        lower computational complexity: \expspace{}-complete.

        To rectify this problem, we propose two syntactic
    }

        We have implemented our
        first proposed decision procedure
        in a new fast string solver OSTRICH\footnote{As an aside, in contrast
        to an emu, an ostrich is known
        to be able to walk backwards, and hence the name of our solver,
        which propagates regular constraints in a backward direction.}
        (\emph{Optimistic STRIng Constraint
        Handler}). 
        Our solver provides built-in support for concatenation, reverse, functional transducers (\FunFT{}), and $\replaceall$.
        Moreover, it is designed to be extensible and adding support for new string functions is a straight-forward task.
        We compare OSTRICH with several state-of-the-art string solving tools 
        --- including SLOTH \cite{HJLRV18}, CVC4 \cite{cvc4}, and Z3
        \cite{Z3-str3} --- on a wide range of challenging benchmarks
        --- including SLOG's replace/replaceall 
        \cite{fang-yu-circuits}, Stranger's \cite{Stranger}, mutation XSS
        \cite{LB16,HJLRV18}, and the benchmarks of Kaluza that satisfy our semantic conditions (i.e.\ ${\sim}80\%$ of them) \cite{Berkeley-JavaScript}.
        It is the only tool that was able to return an answer on all of the benchmarks we used.
        Moreover, it significantly outperforms SLOTH, the only tool comparable 
        with OSTRICH in terms of theoretical guarantees and closest in 
        terms of expressibility.
        It also competes well with CVC4 --- a fast, but incomplete solver --- 
        on the benchmarks for which CVC4 was able to return a conclusive 
        response. 
        We report details of OSTRICH and empirical results in 
        Section \ref{sec:impl}.


        \OMIT{
        , which are a
strict subset of \defn{synchronised rational relations} (a.k.a. \defn{automatic
relations} \cite{BG04}), which in turn are a strict subset of
\defn{rational relations} (i.e. rational transductions).
}


\OMIT{
It is a long-standing open problem whether the
existential theory of concatenation enriched with the length function (i.e.
atomic expressions of the form $|x| = |y|$, where $|\cdot|$ denotes the length
of the word) is decidable \cite{?}.
By the same token, consider the theory of rational transductions \cite{?}, which
are generalisations of finite-state automata (called \defn{finite-state
transducers}) as recognisers of functions or relations over strings.
They can model many important
functions used in string-manipulating programs including autoescaping mechanisms
(e.g. backslash escape, and HTML escape in JavaScript), and the replace-all
function with a constant replacement pattern. Unfortunately, checking a simple
formula of the form $\exists x R(x,x)$, for a given rational transduction $R$,
can easily encode the Post Correspondence Problem, and therefore is undecidable.
To obtain decidability of the positive existential theory, a syntactic condition
that prevents ``cycles'' in the formula has to be enforced \cite{?}.
}



\section{Preliminaries}\label{sec-prel}

\noindent
\textbf{General Notation.} 
Let $\mathbb{Z}$ and $\Nat$ denote the set of integers and natural numbers respectively. For $k \in \Nat$, let $[k] = \{1,\ldots, k\}$. For a vector $\vec{x}=(x_1,\ldots, x_n)$, let $|\vec{x}|$ denote the length of $\vec{x}$ (i.e., $n$) and  $\vec{x}[i]$ denote $x_i$ for each $i \in [n]$.  Given a function $f: A \to B$ and $X \subseteq B$, we use $f^{-1}(X)$ to define the pre-image of $X$ under $f$, i.e., $\{ a \in A: f(a) \in X \}$.

\smallskip
\noindent
\textbf{Regular Languages.}
Fix a finite \emph{alphabet} $\Sigma$. Elements in $\Sigma^*$ are called \emph{strings}. Let $\varepsilon$ denote the empty string and  $\Sigma^+ = \Sigma^* \setminus \{\varepsilon\}$. We will use $a,b,\ldots$ to denote letters from $\Sigma$ and $u, v, w, \ldots$ to denote strings from $\Sigma^*$. For a string $u \in \Sigma^*$, let $|u|$ denote the \emph{length} of $u$ (in particular, $|\varepsilon|=0$), moreover, for $a \in \Sigma$, let $|u|_a$ denote the number of occurrences of $a$ in $u$. A \emph{position} of a nonempty string $u$ of length $n$ is a number $i \in [n]$ (Note that the first position is $1$, instead of  0). In addition, for $i \in [|u|]$, let $u[i]$ denote the $i$-th letter of $u$. For a string $u \in \Sigma^*$, we use $u^R$ to denote the reverse of $u$, that is, if $u = a_1 \cdots a_n$, then $u^R= a_n \cdots a_1$.
For two strings $u_1, u_2$, we use $u_1 \cdot u_2$ to denote the \emph{concatenation} of $u_1$ and $u_2$, that is, the string $v$ such that $|v|= |u_1| + |u_2|$ and for each $i \in [|u_1|]$, $v[i]= u_1[i]$, and for each $i \in |u_2|$, $v[|u_1|+i]=u_2[i]$. Let $u, v$ be two strings. If $v = u \cdot v'$ for some string $v'$, then $u$ is said to be a \emph{prefix} of $v$. In addition, if $u \neq v$, then $u$ is said to be a \emph{strict} prefix of $v$. If $u$ is a prefix of $v$, that is, $v = u \cdot v'$ for some string $v'$, then 
we use $u^{-1} v$ to denote $v'$. In particular, $\varepsilon^{-1} v = v$.

A \emph{language} over $\Sigma$ is a subset of $\Sigma^*$. We will use $L_1, L_2, \dots$ to denote languages. For two languages $L_1, L_2$, we use $L_1 \cup L_2$ to denote the union of $L_1$ and $L_2$, and $L_1 \cdot L_2$ to denote the concatenation of $L_1$ and $L_2$, that is, the language $\{u_1 \cdot u_2 \mid u_1 \in L_1, u_2 \in L_2\}$. For a language $L$ and $n \in \Nat$, we define $L^n$, the \emph{iteration} of $L$ for $n$ times, inductively as follows: $L^0=\{\varepsilon\}$ and $L^{n} =L \cdot L^{n-1}$ for $n > 0$. We also use $L^*$ to denote an arbitrary number of iterations of $L$, that is, $L^* = \bigcup \limits_{n \in \Nat} L^n$. Moreover, let $L^+ = \bigcup \limits_{n \in \Nat \setminus \{0\}} L^n$.

\begin{definition}[Regular expressions $\regexp$]
	\[e \eqdef \emptyset \mid \varepsilon \mid a \mid e + e \mid e \concat e \mid e^*, \mbox{ where } a \in \Sigma. \]
	Since $+$ is associative and commutative, we also write $(e_1 + e_2) + e_3$ as $e_1 + e_2 + e_3$ for brevity. We use the abbreviation $e^+ \equiv e \concat e^*$. Moreover, for $\Gamma = \{a_1, \ldots, a_n\}\subseteq \Sigma$, we use the abbreviations $\Gamma \equiv a_1 + \cdots + a_n$ and $\Gamma^\ast \equiv (a_1 + \cdots + a_n)^\ast$. 
\end{definition}
We define $\Ll(e)$ to be the language defined by $e$, that is, the set of strings that match $e$, inductively as follows: $\Ll(\emptyset) =\emptyset$,
$\Ll(\varepsilon) =\{\varepsilon\}$,
%
$\Ll(a)= \{a\}$,
%
$\Ll(e_1 + e_2) = \Ll(e_1) \cup \Ll(e_2)$,
%
$\Ll(e_1 \concat e_2) = \Ll(e_1) \cdot \Ll(e_2)$,
%
$\Ll(e_1^*)=(\Ll(e_1))^*$.
In addition, we use $|e|$ to denote the number of symbols occurring in $e$.


\smallskip
\noindent
\textbf{Automata models.} We review some background from automata theory;
for more, see \cite{Kozen-automata,HU79}. Let $\ialphabet$ be a finite set (called
\defn{alphabet}).

\begin{definition}[Finite-state automata] \label{def:nfa}
	A \emph{(nondeterministic) finite-state automaton}
	(\FA{}) over a finite alphabet $\ialphabet$ is a tuple $\Aut =
	(\ialphabet, \controls, q_0, \finals, \transrel)$ where 
	$\controls$ is a finite set of 
	states, $q_0\in \controls$ is
	the initial state, $\finals\subseteq \controls$ is a set of final states, and 
	$\transrel\subseteq \controls \times 
	\ialphabet \times  \controls$ is the
	transition relation. 
\end{definition}

For an input string $w = a_1 \dots a_n$, a \emph{run} of $\Aut$ on $w$
is a sequence of states $q_0,\ldots, q_n$ such that $(q_{j-1}, a_{j}, q_{j}) \in
\transrel$  for every $j \in [n]$.
The run is said to be \defn{accepting} if $q_n \in \finals$.
A string $w$ is \defn{accepted} by $\Aut$ if there is an accepting run of
$\Aut$ on $w$. In particular, the empty string $\varepsilon$ is accepted by $\Aut$ iff $q_0 \in F$. The set of strings accepted by $\Aut$ is denoted by $\Lang(\Aut)$,
a.k.a., the language \defn{recognised} by $\Aut$.
The \defn{size} $|\Aut|$ of $\Aut$ is defined to be $|\controls|$; we will
use this when we discuss computational complexity.

For convenience, we will also refer to an \FA{} without initial and final states, that is, a pair $(Q, \delta)$, as a \emph{transition graph}.

\smallskip
\noindent\textbf{Operations of \FA{}s.} For an FA $\Aut=(Q, q_0, F, \delta)$, $q \in Q$ and $P \subseteq Q$, we use $\Aut(q, P)$ to denote the FA $(Q, q, P, \delta)$, that is, the FA obtained from $\Aut$ by changing the initial state and the set of final states to $q$ and $P$ respectively. We use $q \xrightarrow[\Aut]{w} q'$ to denote that a string $w$ is accepted by $\Aut(q, \{q'\})$.

Given two \FA{}s $\Aut_1 = (Q_1, q_{0,1}, F_{1}, \delta_1)$ and $\Aut_2 = (Q_2, q_{0,2}, F_2, \delta_2)$, the \emph{product} of $\Aut_1$ and $\Aut_2$, denoted by $\Aut_1 \times \Aut_2$, is defined as $(Q_1 \times Q_2, (q_{0,1}, q_{0,2}), F_1 \times F_2, \delta_1 \times \delta_2)$, where $\delta_1 \times \delta_2$ is the set of tuples $((q_1,q_2), a, (q'_1, q'_2))$ such that $(q_1, a, q'_1) \in \delta_1$ and $(q_2, a, q'_2) \in \delta_2$. Evidently, we have $\Lang(\Aut_1 \times \Aut_2) = \Lang(\Aut_1) \cap \Lang(\Aut_2)$.

Moreover, let $\Aut=(Q, q_0, F, \delta)$, we define $\Aut^\revsym$ as $(Q, q_f, \{q_0\}, \delta')$, where $q_f$ is a newly introduced state not in $Q$ and $\delta'$ comprises the transitions $(q', a, q)$ such that $(q, a, q') \in \delta$ as well as the transitions $(q_f, a, q)$ such that $(q, a, q') \in \delta$ for some $q' \in F$. 
Intuitively, $\Aut^\revsym$ is obtained from $\Aut=(Q, q_0, F, \delta)$ by reversing
the direction of each transition of $\Aut$ and swapping initial and final states. The new state $q_f$ in $\Aut^\revsym$ is introduced to meet the unique initial state requirement in the definition of FA. Evidently, $\Aut^\revsym$  recognises the reverse language of $\Lang(\Aut)$, namely, the language $\{u^R \mid u \in \Lang(\Aut)\}$.

It is well-known (e.g. see \cite{HU79}) that regular expressions and \FA{}s are 
expressively equivalent, and generate precisely all \emph{regular languages}.
In particular, from a regular expression, an equivalent \FA{} can be constructed 
in linear time. Moreover, regular languages are closed under Boolean
operations, i.e., union, intersection, and complementation.
 
\begin{definition}[Finite-state transducers]
	Let $\ialphabet$ be an alphabet. A \emph{(nondeterministic) finite  transducer} (\FT{}) $\Transducer$  over $\ialphabet$ is a tuple $(\ialphabet,  \controls, q_0, \finals, \transrel)$, where $\transrel$ is  a finite subset of $\controls \times  \ialphabet \times 
	\controls \times \ialphabet^*$. 
\end{definition}
The notion of runs of \FT{}s on an input string can be seen as a generalisation 
of \FA{}s by adding outputs. More precisely, given a string $w = a_1 \dots a_n$, a \emph{run} of $\Transducer$ on $w$
is a sequence of pairs $(q_1, w'_1), \ldots, (q_n, w'_n) \in \controls \times \Sigma^*$ 
such that for every $j \in [n]$, $(q_{j-1}, a_j, q_{j}, w'_j) \in
\transrel$.
The run is said to be \defn{accepting} if  $q_n \in \finals$.
When a run is accepting, $w'_1 \ldots w'_n$ is said to be the \emph{output} of the
run. Note that some of these $w'_i$s could be empty strings.
A word $w'$ is said to be an output of $\Transducer$ on $w$ if there is an accepting run of
$\Transducer$ on $w$ with output $w'$. We use $\Tran(\Transducer)$ to denote the
\emph{transduction} defined by $\Transducer$, that is, the relation comprising
the pairs $(w, w')$ such that $w'$ is an output of $\Transducer$ on $w$. 

We remark that an \FT{} usually defines a \emph{relation}. We shall speak of \emph{functional
	transducers}, i.e., transducers that define functions instead of relations. (For instance, deterministic transducers are always functional.) We will use \FunFT{} to denote the class of functional transducers. 

To take into consideration the outputs of transitions, we define the \emph{size} $|\Transducer|$ of $\Transducer$ as the sum of the sizes of transitions in $\Transducer$, where the size of a transition $(q, a, q', w')$ is defined as $|w'|+1$. 

\begin{example}\label{exmp-ft}
	We give an example \FT{}    
for the function \textbf{escapeString}, which backslash-escapes every occurrence of \texttt{'} and 
	\texttt{"}. The \FT{} has a single state, i.e., $Q=\{q_0\}$ and 
	the transition relation $\delta$ comprises
	$(q_0, \ell,  q_0, \ell)$ for each $\ell \neq  \texttt{'}$  or $\texttt{"} $, 
	$(q_0,\texttt{'}, q_0, \textbackslash\texttt{'})$, 
	$(q_0,\texttt{"}, q_0, \textbackslash\texttt{"})$,  
	and the final state $F=\{q_0\}$. We remark that this \FT{} is functional. \qed
\end{example}




\OMIT
{
Moreover, in this paper, we also consider the two-way extensions of finite-state automata and transducers, whose definitions are put below.

\begin{definition}[Two-way finite-state automata] \label{def:2nfa}
	A \emph{(nondeterministic) two-way finite-state automaton}
	(\FFA{}) over a finite alphabet $\ialphabet$ is a tuple $\Aut =
	(\ialphabet, \EndLeft, \EndRight, \controls, q_0, \finals, \transrel)$ where 
	$\controls, q_0, \finals$ are as in \FA{}s, $\EndLeft$ (resp.~$\EndRight$) a left (resp.~right) input tape end 
	marker, 
	and the
	transition relation  $\transrel\subseteq \controls \times 
	\overline{\ialphabet}\times \{\Left, \Right\}\times \controls$, where $\overline{\ialphabet} = \ialphabet \cup \{\EndLeft, \EndRight\}$. 
	Here, we assume 
	that
	there are no transitions that take the head of the tape past the left/right
	end marker (i.e.~$(p,\EndLeft,\Left,q), (p,\EndRight,\Right,q) \notin
	\transrel$ for every $p, q \in \controls$).
%
	%
	
	Whenever understood we will only tacitly mention $\ialphabet$, 
	$\EndLeft$, and $\EndRight$ in $\Aut$.  
\end{definition}

The notion of runs of \FFA{} on an input string is exactly the same as that of
Turing machines on a read-only input tape. More precisely, for a string 
$w = a_1 \dots a_n$, a \emph{run} of $\Aut$ on $w$
is a sequence of pairs $(q_0,i_0),\ldots, (q_m,i_m) \in \controls \times [0, n+1]$ 
defined as follows. Let $a_0 = \EndLeft$ and $a_{n+1} = \EndRight$. The
following conditions then have to be satisfied: $i_0 = 0$, and for every $j \in [0, m-1]$, we have $(q_j,a_{i_j}, dir, q_{j+1}) \in
\transrel$ and $i_{j+1} = i_j + dir$ for some $dir \in  \{\Left, \Right\}$.

The run is said to be \defn{accepting} if $i_m = n+1$ and $q_m \in \finals$.
A string $w$ is \defn{accepted} by $\Aut$ if there is an accepting run of
$\Aut$ on $w$. The set of strings accepted by $\Aut$ is denoted by $\Lang(\Aut)$,
a.k.a., the language \defn{recognised} by $\Aut$.
The \defn{size} $|\Aut|$ of $\Aut$ is defined to be $|\controls|$; this will
be needed when we talk about computational complexity.

Note that	an \FA{} can be seen as a \FFA{} such that $\transrel \subseteq \controls \times \overline{\ialphabet} \times
	\{\Right\} \times \controls $, with the two end markers $\EndLeft, \EndRight$ omitted. 
\FFA{} and \FA{} recognise precisely the same class of languages, i.e., 
\emph{regular languages}. The following result is standard and can be found in textbooks on automata theory
(e.g. \cite{HU79}). 

\begin{proposition}\label{prop-2nfa-nfa}
	Every \FFA{} $\Aut$ can be transformed in exponential time into an equivalent \FA{} of size $2^{\bigO(|\Aut| \log |\Aut|)}$. 
\end{proposition}

\begin{definition}[Two-way finite-state transducers]
	Let $\ialphabet$ be an alphabet. A \emph{nondeterministic two-way finite  transducer} (\FFT{}) $\Transducer$  over $\ialphabet$ is a tuple $(\ialphabet, \EndLeft, \EndRight, \controls, q_0, \finals, \transrel)$, where $\ialphabet, \controls, q_0, \finals$ are as in \FFT{}s, and $\transrel \subseteq \controls \times \overline{\ialphabet}\times \{\Left, \Right\}\times 
	\controls \times \ialphabet^*$, satisfying the syntactical constraints of \FFA{}s, and the additional constraint that the output must be $\epsilon$ when reading $\EndLeft$ or $\EndRight$. Formally, for each transition $(q, \EndLeft, dir, q', w)$ or $(q, \EndRight, dir, q', w)$ in $\delta$, we have $w=\epsilon$.
	%
	%
	%
\end{definition}
%
%
The notion of runs of \FFT{}s on an input string can be seen as a generalisation 
of \FFA{}s by adding outputs. More precisely, given a string $w = a_1 \dots a_n$, a \emph{run} of $\Transducer$ on $w$
is a sequence of tuples $(q_0, i_0, w'_0),\ldots, (q_m, i_m, w'_m) \in \controls \times
[0, n+1] \times \Sigma^*$ 
such that, if $a_0 =\ \EndLeft$ and $a_{n+1} =\ \EndRight$, 
we have $i_0 = 0$, and  for every $j \in [0, m-1]$, $(q_j, a_{i_j}, dir, q_{j+1}, w'_j) \in
\transrel$, $i_{j+1} = i_j + dir$ for some $dir \in \{\Left, \Right\}$, and $w'_0 = w'_m = \varepsilon$.
The run is said to be \defn{accepting} if $i_m = n+1$ and $q_m \in \finals$.
When a run is accepting, $w'_0 \ldots w'_m$ is said to be the \emph{output} of the
run. Note that some of these $w'_i$s could be empty strings.
A word $w'$ is said to be an output of $\Transducer$ on $w$ if there is an accepting run of
$\Transducer$ on $w$ with output $w'$. We use $\Tran(\Transducer)$ to denote the
\emph{transduction} defined by $\Transducer$, that is, the relation comprising
the pairs $(w,w')$ such that $w'$ is an output of $\Transducer$ on $w$. 

We remark that a \FFT{} usually defines a relation. We shall speak of \emph{functional transducers}, i.e., transducers that define functions instead of relations. (For instance, deterministic transducers are always functional.) 

Note that an \FT{} over	$\ialphabet$ is a \FFT{} such that $\transrel \subseteq \controls \times \overline{\ialphabet} \times
	\{\Right\} \times \controls \times \ialphabet^*$, with the two endmarkers $\EndLeft, \EndRight$ omitted.
}

\smallskip
\noindent
\textbf{Computational Complexity.}
In this paper, we will use computational complexity theory to provide
evidence that certain (automata) operations in our generic decision procedure
are unavoidable.
In particular, we shall deal with the following computational complexity
classes (see \cite{HU79} for more details): \pspace{} (problems solvable in polynomial
space and thus in exponential time),  \expspace{} (problems solvable
in exponential space and thus in double exponential time), and \nonelementary{} (problems not a member of the class \elementary{}, where \elementary{} comprises elementary recursive functions, which is the union of the complexity classes \exptime{}, 2-\exptime, 3-\exptime, $\ldots$, or alternatively, the union of the complexity classes \expspace, 2-\expspace, 3-\expspace, $\ldots$). Verification
problems that have complexity \pspace{} or beyond (see \cite{BK08}
for a few examples) have substantially benefited from techniques
such as symbolic model checking \cite{McMillan}. 


\section{Semantic conditions and A generic decision procedure}\label{sec-dec}

Recall that we consider symbolic executions of string-manipulating programs defined by the rules
\begin{equation} \label{eq:mainsymexe}
    S ::= \qquad y := f(x_1,\ldots,x_\arity) \ |\
    \text{\ASSERT{$g(x_1,\ldots,x_\arity)$}}\ |\ 
            S; S\ 
\end{equation}
where $f: (\Sigma^*)^\arity \to 2^{\Sigma^*}$ is a \emph{nondeterministic} partial string function and $g \subseteq (\Sigma^*)^\arity$
is a string relation. 
%
Without loss of generality, we assume that symbolic executions are in Static Single Assignment (SSA) form.\footnote{Each symbolic execution can be turned into the SSA form by using a new variable on the left-hand-side of each assignment.}

In this section, we shall provide two general semantic conditions for symbolic executions. The main result is that, whenever the symbolic execution generated by \eqref{eq:mainsymexe} satisfies these two conditions, the path feasibility problem is decidable. 
We first define the concept of recognisable relations which, intuitively, are simply a finite union of 
Cartesian products of regular languages. 
\begin{definition}[Recognisable relations] \label{def:recrel}
	An $\arity$-ary relation $R\subseteq \Sigma^*\times \cdots\times \Sigma^*$ is \emph{recognisable}  if $R=\bigcup_{i=1}^n L^{(i)}_1\times \cdots\times L^{(i)}_\arity$ where $L^{(i)}_j$ is regular for each $j\in [\arity]$. A \emph{representation} of a recognisable relation $R=\bigcup_{i=1}^n L^{(i)}_1\times \cdots\times L^{(i)}_\arity$ is $(\Aut^{(i)}_1, \ldots, \Aut^{(i)}_\arity)_{1 \le i \le n}$ such that each $\Aut^{(i)}_j$ is an \FA{} with $\Lang(\Aut^{(i)}_j)=L^{(i)}_j$. The tuples $(\Aut^{(i)}_1, \ldots, \Aut^{(i)}_\arity)$ are called the \emph{disjuncts} of the representation and the \FA{}s $\Aut^{(i)}_j$ are called the \emph{atoms} of the representation.
	%
\end{definition}

We remark that the recognisable relation is more expressive than it appears to be. For instance, it can be used to encode some special length constraints, as demonstrated in Example~\ref{ex:length}. 

\begin{example} \label{ex:length}
Let us consider the relation $|x_1|+|x_2| \ge 3$ where $x_1$ and $x_2$ are strings  over the alphabet $\ialphabet$. Although syntactically $|x_1|+|x_2| \ge 3$ is a length constraint, it indeed defines a recognisable relation. To see this, $|x_1|+|x_2| \ge 3$ is equivalent to 
the disjunction of $|x_1| \ge 3$, $|x_1| \ge 2 \wedge |x_2| \ge 1$, $|x_1| \ge 1 \wedge |x_2| \ge 2$, and $|x_2| \ge 3$, 
where each disjunct describes a cartesian product of regular languages. For instance, in $|x_1| \ge 2 \wedge |x_2| \ge 1$, $|x_1| \ge 2$ requires that $x_1$ belongs to the regular language $\Sigma \cdot \Sigma^+$, while $|x_2| \ge 1$ requires that $x_2$ belongs to the regular language $\Sigma^+$.  \qed
\end{example}
The equality binary predicate $x_1 = x_2$ is a standard non-example of 
recognisable
relations; in fact, expressing $x_1 = x_2$ as a union $\bigcup_{i \in I} L_i 
\times H_i$ of products requires us to have $|L_i| = |H_i| = 1$, which in 
turn forces us to have an infinite index set $I$.

\OMIT{
On the other hand, length constraints in general go beyond recognisable relations, as shown in Example~\ref{ex:length-eq}.
\begin{example}\label{ex:length-eq}
Let us consider the relation $|x_1| = |x_2|$.  This relation is not a recognisable relation, which can be proved by contraction. To the contrary, suppose that $|x_1| = |x_2|$ is a recognisable relation, say, $\bigcup \limits_{1 \le i \le n} L^{(i)}_{1} \times L^{(i)}_{2}$, where $L^{(i)}_{1}, L^{(i)}_{2}$ are regular languages. Let $a \in \ialphabet$ and consider the string pairs $(a^j, a^j)$ for $j \in \Nat$. Since $n$ is finite, by the pigeon-hole principle, there are $i: 1 \le i \le n$  and $j_1, j_2 \in \Nat$ such that $j_1 \neq j_2$, $(a^{j_1}, a^{j_1}) \in L^{(i)}_{1} \times L^{(i)}_{2}$, and $(a^{j_2}, a^{j_2}) \in L^{(i)}_{1} \times L^{(i)}_{2}$. From this, we deduce that $(a^{j_1}, a^{j_2}) \in L^{(i)}_1 \times L^{(i)}_2$, thus $|a^{j_1}| = |a^{j_2}|$, a contradiction.
\end{example}
}


The first semantic condition, \emph{Regular Monadic Decomposition} is stated as follows.
\begin{center}
\framebox{\parbox{0.9\textwidth}{\regmondec: For each assertion $\ASSERT{g(x_1,\ldots, x_\arity)}$ in $S$,  $g$ is a recognisable relation, a representation of which, in terms of Definition~\ref{def:recrel}, is effectively computable. 
}}
\end{center}

When $\arity=1$, the \regmondec{} condition 
requires that $g(x_1)$ is regular and may be given by an \FA{} $\Aut$, in which case $x_1\in\Lang(\Aut)$.

\smallskip

The second semantic condition concerns the pre-images of string operations.
%
A string operation $f(x_1, \ldots, x_\arity)$ with $\arity$ parameters ($\arity \geq 1$) gives rise to a relation $R_f\subseteq (\Sigma^*)^\arity \times \Sigma^*$. Let $L \subseteq \ialphabet^*$. The \emph{pre-image} of $L$ under $f$, denoted by $\Pre_{R_f}(L)$, is 
\[\left\{(w_1,\ldots, w_\arity) \in (\Sigma^*)^\arity \mid \exists w.\ w\in f(w_1, \ldots, w_\arity)\text{ and } w\in L  \right\}.\]
For brevity, we use $\Pre_{R_f}(\Aut)$ to denote $\Pre_{R_f}(\Lang(\Aut))$ for an \FA{} $\Aut$. 
The second semantic condition, i.e. the inverse relation of $f$ preserves regularity, is formally stated as follows.
\begin{center}
\framebox{\parbox{0.9\textwidth}{\prerec{}: For each operation $f$ in $S$ and each \FA{} $\Aut$,  $\Pre_{R_f}(\Aut)$ is a recognisable relation, a representation of which (Definition~\ref{def:recrel}), can be effectively computed from $\Aut$ and $f$.
}}
\end{center}
When $\arity=1$, this \prerec{} condition would state that the pre-image
of a regular language under the operation $f$ is \emph{effectively regular}, i.e. an \FA{} can be computed to represent the pre-image of the regular language under $f$. 

\begin{example}
Let $\Sigma = \{a, b\}$. Consider the string function $f(x_1, x_2) = a^{|x_1|_a + |x_2|_a} b^{|x_1|_b + |x_2|_b}$. (Recall that $|x|_a$ denotes the number of occurrences of $a$ in $x$.) We can show that for each \FA{} $\Aut$, $\Pre_{R_f}(\Aut)$ is a recognisable relation. Let $\Aut$ be an \FA{}. W.l.o.g. we assume that $\Lang(\Aut) \subseteq a^* b^*$. It is easy to observe that $\Lang(\Aut)$ is a finite union of the languages  $\{a^{c_1 p + c_2} b^{c'_1 p'+c'_2} \mid p \in \Nat, p' \in \Nat\}$, where $c_1,c_2,c'_1, c'_2$ are natural number constants. Therefore, to show that $\Pre_{R_f}(\Aut)$ is a recognisable relation, it is sufficient to show that $\Pre_{R_f}(\{a^{c_1 p + c_2} b^{c'_1 p'+c'_2} \mid p \in \Nat, p' \in \Nat\})$ is a recognisable relation.
 

Let us consider the typical situation that $c_1 \neq 0$ and $c'_1 \neq 0$. Then $\Pre_{R_f}(\{a^{c_1 p + c_2} b^{c'_1 p'+c'_2} \mid p \in \Nat, p' \in \Nat\})$ is the disjunction of $L^{(i, i')}_{1} \times L^{(j, j')}_{2}$ for $i, j, i', j' \in \Nat$ with $i + j = c_2$, and $i' + j' = c'_2$, where $L^{(i,i')}_1 = \{u \in \ialphabet^* \mid |u|_a \ge i, |u|_a \equiv i \bmod c_1, |u|_b \ge i',  |u|_b \equiv i' \bmod c'_1\}$, $L^{(j, j')}_2 = \{v \in \ialphabet^* \mid |v|_a \ge j, |v|_a \equiv j \bmod c_1, |v|_b \ge j',  |v|_b \equiv j' \bmod c'_1\}$. Evidently, $L^{(i, i')}_{1}$ and  $L^{(j, j')}_{2}$ are regular languages. Therefore, $\Pre_{R_f}(\{a^{c_1 p + c_2} b^{c'_1 p'+c'_2} \mid p \in \Nat, p' \in \Nat\})$ is a finite union of cartesian products of regular languages, and thus a recognisable relation. \qed
\end{example}

Not every string operation satisfies the \prerec{} condition, as demonstrated by Example~\ref{exmp-not-prerec}.
\begin{example}\label{exmp-not-prerec}
Let us consider the string function $f$ on the alphabet $\{0,1\}$ that transforms the unary representations of natural numbers into their binary representations, namely, $f(1^n) = b_0 b_1 \ldots b_m$ such that $n = 2^m b_0  + \cdots + 2 b_{m-1} + b_m$ and $b_0 = 1$. For instance, $f(1^4) = 100$. We claim that $f$ does not satisfy the \prerec{} condition. To see this, consider the regular language $L=\{10^i \mid i \in \Nat\}$. Then $\Pre_{R_f}(L)$ comprises the strings $1^{2^j}$ with $j \in \Nat$, which is evidently non-regular.
    Incidentally, this is an instance of the well-known Cobham's theorem
    (cf.~\cite{pippenger-book}) that the sets of numbers definable by finite
    automata in unary are strictly subsumed by the sets of numbers definable 
    by finite automata in binary. \qed
\end{example}

\medskip
We are ready to state the main result of this section.
\begin{theorem}
	The path feasibility problem is decidable for symbolic executions satisfying the \regmondec{} and \prerec{} conditions.
	\label{th:gen}
\end{theorem}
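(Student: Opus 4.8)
The plan is to decide path feasibility by a single \emph{backward} pass that propagates regular constraints through the (loop-free, SSA) program, maintaining at each program point a \emph{recognisable relation} describing exactly the initial valuations from which the remaining computation can still succeed. Write $S = S_1;\cdots;S_m$, let $\mathcal{V}$ be the finite set of all variables occurring in $S$, and let $p_0,\ldots,p_m$ be the program points, with $p_k$ between $S_k$ and $S_{k+1}$ (so $p_0$ is the entry and $p_m$ the exit). For a point $p$ and a valuation $\nu \in (\Sigma^*)^{\mathcal{V}}$, say $\nu \in \mathsf{Feas}(p)$ if, starting from $p$ with the values $\nu$, some run of the suffix $S_{k+1};\cdots;S_m$ satisfies every assertion and never applies a function outside its domain. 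Because each non-input variable is overwritten before it is ever read, the initial values of non-input variables are irrelevant, so $\mathsf{Feas}(p_0) \neq \emptyset$ holds \emph{iff} there is a choice of input strings making $S$ succeed, i.e.\ iff $S$ is path-feasible. I would therefore prove, by downward induction on $k$, that each $\mathsf{Feas}(p_k)$ is a recognisable relation over $\mathcal{V}$ whose representation (Definition~\ref{def:recrel}) is effectively computable; decidability then follows since emptiness of a recognisable relation $\bigcup_i \prod_{v} \Lang(\Aut^{(i)}_v)$ reduces to finding a disjunct $i$ with every atom $\Aut^{(i)}_v$ nonempty, which is decidable.

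The base case is $\mathsf{Feas}(p_m) = (\Sigma^*)^{\mathcal{V}}$, the full product, which is trivially recognisable. For the inductive step, suppose $\mathsf{Feas}(p_k)$ is recognisable and computable. If $S_k = \ASSERT{g(x_1,\ldots,x_\arity)}$ then $\mathsf{Feas}(p_{k-1}) = \mathsf{Feas}(p_k) \cap \widehat{g}$, where $\widehat{g}$ is the relation $g$ lifted to $\mathcal{V}$ (coordinates outside $\{x_1,\ldots,x_\arity\}$ left unconstrained). By \regmondec{}, $g$, and hence $\widehat{g}$, is recognisable with a computable representation; and recognisable relations over the fixed coordinate set $\mathcal{V}$ are effectively closed under intersection, since $\bigl(\bigcup_i \prod_v L^{(i)}_v\bigr) \cap \bigl(\bigcup_{i'} \prod_v M^{(i')}_v\bigr) = \bigcup_{i,i'} \prod_v (L^{(i)}_v \cap M^{(i')}_v)$ and regular languages are closed under intersection. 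Hence $\mathsf{Feas}(p_{k-1})$ is recognisable and computable.

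The crux is the assignment case $S_k = \bigl(y := f(x_1,\ldots,x_\arity)\bigr)$. Running $S_k$ backward, $\nu \in \mathsf{Feas}(p_{k-1})$ iff there is some $w \in f(\nu(x_1),\ldots,\nu(x_\arity))$ with $\nu[y \mapsto w] \in \mathsf{Feas}(p_k)$. Writing $\mathsf{Feas}(p_k) = \bigcup_i \prod_v L^{(i)}_v$, the membership $\nu[y\mapsto w]\in\mathsf{Feas}(p_k)$ for a fixed disjunct $i$ means $w \in L^{(i)}_y$ together with $\nu(v) \in L^{(i)}_v$ for all $v \neq y$; and the existence of such a $w$ is precisely $(\nu(x_1),\ldots,\nu(x_\arity)) \in \Pre_{R_f}(L^{(i)}_y)$. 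By \prerec{}, taking an \FA{} for $L^{(i)}_y$, the pre-image $\Pre_{R_f}(L^{(i)}_y) = \bigcup_s \prod_{j=1}^{\arity} M^{(i,s)}_j$ is recognisable and computable. Substituting this in, $\mathsf{Feas}(p_{k-1}) = \bigcup_{i,s} P_{i,s}$, where each $P_{i,s}$ is the product over $\mathcal{V}$ that leaves the $y$-coordinate unconstrained (its pre-assignment value is irrelevant), assigns $L^{(i)}_v$ to every $v \neq y$ that is not an argument, and assigns to each argument variable the intersection of $L^{(i)}_{x_j}$ with $M^{(i,s)}_j$ over all positions $j$ at which it occurs. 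Since regular languages are closed under intersection, every atom of every $P_{i,s}$ is regular, so $\mathsf{Feas}(p_{k-1})$ is again recognisable, and it is computable because pre-images (by \prerec{}) and all the \FA{} operations used are effective. This closes the induction. Note that nondeterminism and partiality of $f$ need no special treatment: the existential ``$\exists w \in f(\ldots)$'' inside $\Pre_{R_f}$ automatically discards argument tuples on which $f$ is undefined, and a dead assignment is just the instance where $L^{(i)}_y = \Sigma^*$.

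I expect the main difficulty to be bookkeeping rather than mathematical depth: keeping the union-of-products form manifestly preserved across every step, in particular correctly intersecting the per-coordinate languages when a variable is both live and an argument, or occurs as a repeated argument $x_j = x_{j'}$, and correctly resetting the assigned variable's coordinate. The genuinely nontrivial content is entirely outsourced to the two hypotheses: \regmondec{} supplies the recognisable representation needed in the assertion case, and \prerec{} supplies it in the assignment case; beyond these, the proof uses only closure of regular languages under intersection and decidability of \FA{} emptiness.
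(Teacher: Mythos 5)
Your proof is correct and follows essentially the same route as the paper's: a backward pass that eliminates the last assignment by decomposing the assertions on the assigned variable via \regmondec{}, intersecting the resulting regular constraints, and pushing them through the assignment via \prerec{}, until only regular constraints on the input variables remain, whose joint satisfiability reduces to \FA{} nonemptiness. The only difference is presentational: you carry the full union-of-products relation $\mathsf{Feas}(p_k)$ deterministically as an explicit invariant at each program point, whereas the paper nondeterministically guesses one disjunct at a time and rewrites the program --- the two are interchangeable for decidability.
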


\OMIT{
To prove Theorem~\ref{th:gen}, we will present a decision procedure of  generic nature.


\tl{zhilin prefers to keep this example; let's see whether we have space.}

Let us demonstrate the procedure by the following example with arity $\arity=1$.

\begin{example} \label{ex:unif}
	Consider the symbolic execution $S$
	\[
	\ASSERT{x \in \Aut_0}; \quad y := f(x); \quad z := f'(y); 
	\quad \ASSERT{y \in \Aut_1}; \quad \ASSERT{z \in \Aut_2}
	\]
	where $\Aut_0,\Aut_1,\Aut_2$ are \FA{}s, $f, f': \ialphabet^* \to 2^{\ialphabet^*}$ satisfy the \reginvrel{} 
	condition. To check the path feasibility of $S$, we repeatedly \emph{remove the last
		assignment} from the program by the pre-image computation. More
	precisely, we compute an \FA{} $\Aut_2'$ to represent $\Pre_{R_{f'}}(\Lang(\Aut_2))$, the pre-image 
	of $\Lang(\Aut_2)$ under $f'$, which by assumption is a regular language,
	yielding the equi-path-feasible program:
	\[
	\ASSERT{x \in \Aut_0}; \quad y := f(x); \quad \ASSERT{y \in \Aut_1}; \quad  \ASSERT{y \in \Aut_2'} 
	\]
	Computing an \FA{} $\Aut'_1$ to represent $\Pre_{R_f}(\Lang(\Aut_1) \cap \Lang(\Aut'_2))$, the pre-image of $\Lang(\Aut_1) \cap \Lang(\Aut'_2)$ under $f$, yields
	a program with only two assertions,
	$\ASSERT{x \in \Aut_0}; \ASSERT{x \in \Aut'_1}$, whose
	%
	satisfiability amounts to checking $\Lang(\Aut_0) \cap \Lang(\Aut_1') \neq \emptyset$
	by a finite-state automata algorithm.
	\qed
\end{example}
The example demonstrates how the generic decision procedure works for assertions and string operations \emph{with only one argument}. The decision procedure for the more general cases where the assertions and string operations possibly have multiple arguments follows the similar idea, but importantly regular languages are to be replaced by recognisable relations. 
}

\begin{proof}[Proof of Theorem~\ref{th:gen}]
We present a nondeterministic decision procedure from which the theorem follows.

	Let $S$  be a symbolic execution, $y := f(\vec{x})$ (where
	$\vec{x} = x_1,\ldots,x_\arity$) be the last
	assignment in $S$, and $\rho := \{ g_1(\vec{z}_1),\ldots, g_s(\vec{z}_s)\}$ be the set of all constraints in assertions of $S$ that involve $y$ (i.e. $y$ occurs in $\vec{z}_i$ for all $i \in [s]$). For each $i \in [s]$, let $\vec{z}_i = (z_{i,1}, \ldots, z_{i, \ell_i})$. Then by the \regmondec{} assumption, $g_i$ is a recognisable relation and a representation of it, say $\left(\Aut^{(j)}_{i, 1}, \ldots, \Aut^{(j)}_{i, \ell_i} \right)_{1 \le j \le n_i}$ with $n_i \ge 1$, can be effectively computed.

For each $i \in [s]$, we nondeterministically choose one tuple $(\Aut^{(j_i)}_{i, 1}, \ldots, \Aut^{(j_i)}_{i, \ell_i})$ (where $1 \le j_i \le n_i$), and for all $i \in [s]$, replace $\ASSERT{g_i(\vec{z}_i)}$ in $S$ with $\ASSERT{z_{i,1} \in \Aut^{(j_i)}_{i, 1}}; \ldots; \ASSERT{z_{i,\ell_i} \in \Aut^{(j_i)}_{i, \ell_i}}$. Let $S'$ denote the resulting program.

We use $\sigma$ to denote the set of all the \FA{}s $\Aut^{(j_i)}_{i, i'}$ such that $1 \le i \le s$, $1 \le i' \le \ell_i$, and $\ASSERT{y \in \Aut^{(j_i)}_{i, i'}}$ occurs in $S'$. We then compute the product \FA{} $\Aut$ from \FA{}s $\Aut^{(j_i)}_{i, i'} \in \sigma$ such that  $\Lang(\Aut)$ is the intersection of  the languages defined by \FA{}s in $\sigma$.
By the 
	\reginvrel{} assumption, $g' = \Pre_{R_f}(\Aut)$ is a recognisable relation and a representation of it can be effectively computed.
%

Let $S''$ be the symbolic execution obtained from $S'$ by  (1) removing $y := f(\vec{x})$ along with all assertions involving $y$ (i.e. the assertions $\ASSERT{y \in \Aut^{(j_i)}_{i, i'}}$ for $\Aut^{(j_i)}_{i, i'} \in \sigma$), (2)  and adding the assertion $\ASSERT{g'(x_1,\ldots, x_r)}$.

\hide{
Let $S'$ be the symbolic execution 
 obtained by (1) removing $y := f(\vec{x})$ along with all assertions with conditionals in $\rho$, and (2) adding all the assertions $\ASSERT{z_{i,i'} \in \Aut^{(j_i)}_{i, i'}}$ such that $i \in [s]$, $i' \in [\ell_i]$ and $z_{i,i'} \neq y$. We use $\sigma$ to denote the set of all the \FA{}s $\Aut^{(j_i)}_{i, i'}$ such that $1 \le i \le s$, $1 \le i' \le \ell_i$, and $z_{i,i'} = y$. We then compute the product \FA{} $\Aut$ from \FA{}s $\Aut^{(j_i)}_{i, i'} \in \sigma$ such that  $\Lang(\Aut)$ is the intersection of  the languages defined by \FA{}s in $\sigma$.
By the 
	\reginvrel{} assumption, $g' = \Pre_{R_f}(\Aut)$ is a recognisable relation and a representation of it can be effectively computed.
%
Let $S''$ be the symbolic execution 
	\[
	S'; \quad \ASSERT{g'(x_1,\ldots, x_r)}\enspace.
	\]
%
}

It is straightforward to verify that $S$ is path-feasible iff there is a nondeterministic choice resulting in $S'$ that is path-feasible, moreover, $S'$ is path feasible iff $S''$ is path-feasible. Evidently, $S''$ has one less assignment than $S$. 
Repeating these steps, the procedure will terminate when $S$ becomes a conjunction of
	assertions on input variables, the feasibility of which can be checked via language nonemptiness checking of \FA{}s. To sum up, the correctness of the (nondeterministic) procedure follows since the path-feasibility is preserved for each step, and the termination is guaranteed by the finite number of assignments. 
\end{proof}

Let us use the following example to illustrate the generic decision procedure. 
\begin{example}
Consider the symbolic execution 
	\[
	\ASSERT{x \in \Aut_0};\  y_1 := f(x);\  z := y_1 \concat y_2;\ 
	 \ASSERT{y_1 \in \Aut_1};\ \ASSERT{y_2 \in \Aut_2};\ \ASSERT{z \in \Aut_3}
	\]
	where $\Aut_0,\Aut_1, \Aut_2, \Aut_3$ are \FA{}s illustrated in Figure~\ref{fig-gen-dec}, and $f: \ialphabet^* \to 2^{\ialphabet^*}$ is the function mentioned in Section~\ref{sec:intro} that nondeterministically outputs a substring delimited by -. At first, we remove the assignment $z = y_1 \concat y_2$ as well as the assertion $\ASSERT{z \in \Aut_3}$. 
Moreover, since the pre-image of $\concat$ under $\Aut_3$, denoted by $g$, is a recognisable relation represented by $(\Aut_3(q_0, \{q_i\}), \Aut_3(q_i, \{q_0\}))_{0 \le i \le 2}$, we add the assertion $\ASSERT{g(y_1,y_2)}$, and get following program
\[
	\begin{array}{l}
	\ASSERT{x \in \Aut_0};\  y_1 := f(x);\  \ASSERT{y_1 \in \Aut_1};\ \ASSERT{y_2 \in \Aut_2};\ \ASSERT{g(y_1,y_2)}. 	
	\end{array}
\]
To continue, we nondeterministically choose one tuple, say $(\Aut_3(q_0, \{q_1\}), \Aut_3(q_1, \{q_0\}))$,  from the representation of $g$, and replace $\ASSERT{g(y_1,y_2)}$ with $\ASSERT{y_1 \in \Aut_3(q_0, \{q_1\})}; \ASSERT{y_2 \in  \Aut_3(q_1, \{q_0\})}$, and get the program
\[
	\begin{array}{l}
	\ASSERT{x \in \Aut_0};\  y_1 := f(x);\  \ASSERT{y_1 \in \Aut_1};\ \ASSERT{y_2 \in \Aut_2}; \\
	\hspace{4mm} \ASSERT{y_1 \in \Aut_3(q_0, \{q_1\})};\ \ASSERT{y_2 \in  \Aut_3(q_1, \{q_0\})}. 	
	\end{array}
\]
Let $\sigma$ be $\{\Aut_1,  \Aut_3(q_0, \{q_1\})\}$, the set of \FA{}s occurring in the assertions for $y_1$ in the above program. Compute the product $\Aut'=\Aut_1 \times \Aut_3(q_0, \{q_1\})$ and $\Aut'' = \Pre_{R_f}(\Aut')$ (see Figure~\ref{fig-gen-dec}).

Then we remove $y_1:=f(x)$, as well as the  assertions that involve $y_1$,  namely, $\ASSERT{y_1 \in \Aut_1}$ and $\ASSERT{y_1 \in \Aut_3(q_0, \{q_1\})}$, and add the assertion $\ASSERT{x \in \Aut''}$, resulting in the program 
\[
	\begin{array}{l}
	\ASSERT{x \in \Aut_0};\   \ASSERT{y_2 \in \Aut_2}; \ \ASSERT{y_2 \in \Aut_3(q_1, \{q_0\})};\ \ASSERT{x \in \Aut''}. 	
	\end{array}
\]
It is not hard to see that $\mbox{-}\ a\ \mbox{-} \in \Lang(\Aut_0) \cap \Lang(\Aut'') $ and $abb \in \Lang(\Aut_2) \cap \Lang(\Aut_3(q_1, \{q_0\}))$. Then the assignment $x = \mbox{-}\ a\ \mbox{-}$, $y_1 = a$, $y_2 = abb$, and $z =  aa bb$ witnesses the path feasibility of the original symbolic execution.\qed
\end{example}

\begin{figure}[htbp]
\includegraphics[scale=0.6]{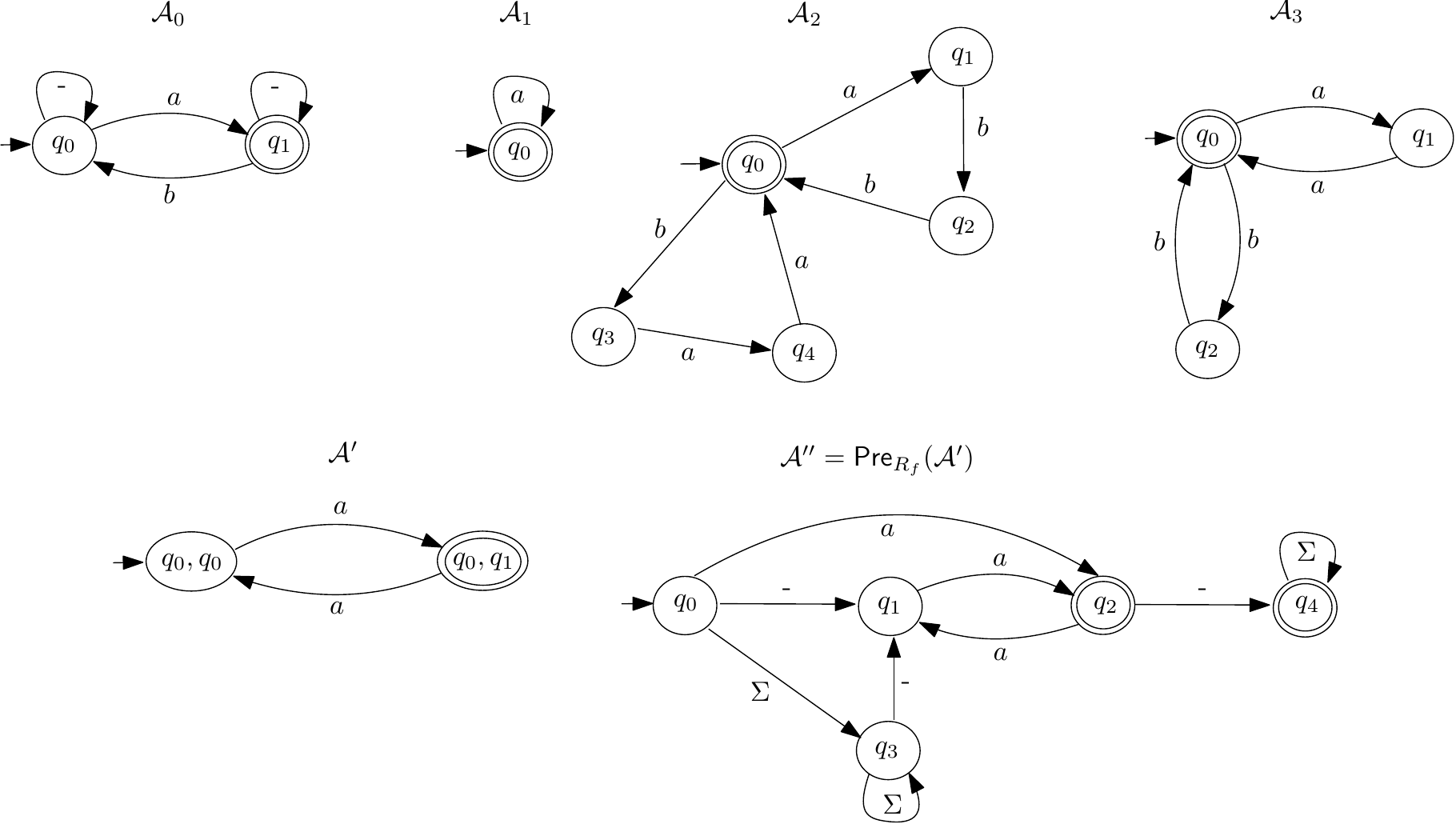}
\caption{$\Aut_0, \Aut_1, \Aut_2, \Aut_3, \Aut', \Pre_{R_f}(\Aut')$, where $\Sigma = \{a, b, \mbox{-}\}$}
\label{fig-gen-dec}
\end{figure}

\begin{remark}
Theorem~\ref{th:gen} 
gives two semantic conditions which are sufficient to render the path feasibility problem decidable. A natural question, however, is how 
to check whether a given symbolic execution satisfies the two semantic conditions. The answer to this meta-question highly depends on the classes of string operations and relations under consideration. Various classes of relations which admit finite representations have been studied in the literature. They include, in an ascending order of expressiveness,  recognisable relations, synchronous relations, deterministic rational relations, and rational relations, giving rise to a strict hierarchy. (We note that slightly different terminologies tend to be used in the literature, for instance, synchronous relations in \cite{CCG06} are called regular relations in \cite{BFL13} which are also known as automatic relations, synchronised rational relations, etc. One may consult the survey \cite{choffrut-survey} and \cite{CCG06}.)
It is known \cite{CCG06} that determining whether a given deterministic rational relation is recognisable is decidable (for binary relations, this can be done in doubly exponential time), and deciding whether a synchronous relation is recognisable can be done in exponential time \cite{CCG06}.  
Similar results are also mentioned in \cite{BLSS03,Libkin03}.


 
By these results, one can check, for a given symbolic execution where the string relations in the assertion  and the relations induced by the string operation are all \emph{deterministic rational} relations, whether it satisfies the two semantic conditions.   Hence, one can check algorithmically whether Theorem~\ref{th:gen} is applicable. 
\qed
\end{remark}

\OMIT{ 
\subsection{Complexity}

In this section, we provide a refined version of Theorem \ref{th:gen} that accounts for
complexity. 
To this end, 
we consider a \emph{representation} of 
the recognisable relation $R$, 
%
which is a collection of 
tuples $(\Aut^{(i)}_1, \ldots, \Aut^{(i)}_\arity)_{i\in [n]}$  such that 
$L^{(i)}_j = \Lang(\Aut^{(i)})_j$ for each $j$.
Each tuple $(\Aut^{(i)}_1, \ldots, \Aut^{(i)}_\arity)$ is called a 
\emph{disjunct}, and  each \FA{} $\Aut^{(i)}_j$ is called an \emph{atom}.

We propose
to use (a variant of) conjunctions of regular constraints called
\emph{conjunctive \FA{}}. 
More precisely, each conjunctive \FA{} is a tuple $\Aut = ((\controls, \transrel), \conacc)$
such that $(\controls, \transrel)$ is a transition graph of an \FA{} and 
$\conacc \subseteq \controls \times \controls$ is a \emph{conjunctive acceptance 
	condition}, i.e., a string $w$ is accepted by $\Aut$ if 
for \emph{every} $(q, q') \in \conacc$, there is a run of $(\controls, \transrel)$ on
$w$ from $q$ to $q'$. 
%
Evidently, a  conjunctive \FA{} $\Aut=((\controls, \transrel), \conacc)$ is a \emph{succinct} representation of the product of \FA{}s $\cB_{q,q'}$, which has a size exponentially large than that of $\Aut$ in the worst case.
%
%
Accordingly, a representation of $R$ is called a \emph{conjunctive
	representation} if every atom in the representation is a conjunctive \FA{}.
Since each \FA{} $\Aut = 
(\ialphabet, \controls, q_0, \finals, \transrel)$
can be assumed to have only one final state (i.e. $\finals = \{q_F\}$), each 
\FA{} can be identified with the conjunctive \FA{}
$((\controls,\transrel),\{(q_0,q_F)\})$.
The \emph{size} of a conjunctive representation $((\controls, \transrel), \conacc)$ is defined as $|\controls|$, and  the  \emph{atom size} of a conjunctive representation of $R$ is defined to be the maximum size of the atoms therein.

The regularity condition \prerec{} now incorporates conjunctive
\FA{}s as representations of atoms: 
\emph{for each function $f$ in $S$ and each conjunctive \FA{} $\Aut$, 
	there is an algorithm that runs in $\ell(|f|,|\Aut|)^{c_0}$ space which enumerates 
	each disjunct of a conjunctive representation of $f^{-1}(\Aut)$, whose atom size is bounded by $\ell(|f|,|\Aut|)$, where $\ell(\cdot, \cdot)$ is a monotone function and $c_0 > 0$ is a constant}.

Here $|f|$ is the size of a representation of $f$ to be made concrete when it is concerned. 
%

We also define several size parameters of $S$ as follows:
\begin{itemize}
\item $\rcdep(S)$: the number of assignments in $S$, 
\item $\rcdim(S)$: the maximum arity of string functions in the assignments of $S$, 
\item $\rcphi(S)$: the maximum size of the representations of these string functions,  
\item $\rcasrt(S)$: the number of assertions in $S$, and 
\item $\rcpsi(S)$:the maximum size of the \FA{}s appearing in the assertions of 
$S$.
\end{itemize}
Moreover,
for $\ell:\Nat^2\rightarrow \Nat$, we define $\ell^{\langle n \rangle}(j, k)$ ($n\geq 1$) as $\ell^{\langle 1 \rangle}(j,k)= \ell(j, k)$ and $\ell^{\langle n+1 \rangle }(j, k) = \ell(j, \ell^{\langle n \rangle}(j,k))$.

\begin{theorem}\label{thm-generic-dec}
	Given a symbolic execution $S$ satisfying \prerec{}, the 
	path feasibility problem of $S$ can be decided \emph{nondeterministically} 
	with space
	$$(\rcdim(S)+1)^{\rcdep(S)}  \rcasrt(S) \left(\ell^{\langle \rcdep(S)
		\rangle}(\rcphi(S), \rcpsi(S)) \right)^{O(1)}$$
	%
	%
\end{theorem}


\subsection*{Proof of Theorem \ref{thm-generic-dec}}\label{app:algo}

In this proof, we will use the following proposition.
\begin{proposition}\label{prop-conj-fa-prod}
For a pair of conjunctive \FA{}s $\Aut_1 = ((\controls_1, \transrel_1), S_1)$
    and $\Aut_2 = ((\controls_2, \transrel_2), S_2)$, the conjunctive \FA{} 
    $\Aut = (\controls_1 \cup \controls_2, \transrel_1 \cup \transrel_2, S_1
    \cup S_2)$ recognises $\Lang(\Aut_1) \cap \Lang(\Aut_2)$.
\end{proposition}
This is immediate from the definition of acceptance for \FA{}.
\OMIT{
\begin{proof}
Let $\Aut_1 = ((\controls_1, \transrel_1), S_1)$ and $\Aut_2 = ((\controls_2, \transrel_2), S_2)$ be two conjunctive \FA{}s such that $S_1 = \{(p_1, p'_1),\ldots, (p_m, p'_m)\}$ and $S_2 = \{(q_1, q'_1), \ldots, (q_n, q'_n)\}$ with $m \le n$. Then we construct a conjunctive \FA{} $\Aut = (\controls_1 \times \controls_2, \transrel_1 \times \transrel_2, S)$, where 
$$S= \{((p_i, q_i), (p'_i, q'_i)) \mid i \in [m]\} \cup \{((p_m, q_{m+i}), (p'_m, q_{m+i})) \mid i \in [n-m]\}.$$ 
Note $|S|  = \max(|S_1|, |S_2|)$. Evidently $\Lang(\Aut) = \Lang(\Aut_1) \cap \Lang(\Aut_2)$. 
\end{proof}
}

\paragraph{Spelling out the algorithm in full.}
To facilitate the complexity analysis, we will spell out the modified algorithm 
in detail. Let $S$ be a symbolic execution satisfying \prerec{} . 
We give a nondeterministic algorithm to decide the path feasibility of
$S$. 

By definition, $S$ has $\rcdep(S)$ assignments.
First, for each variable $x \in \vars(S)$,
define $\cE_{\rcdep(S)}(x)$  as the set of conjunctive \FA{}s $((\controls, \transrel), \{(q_0, q_F)\})$ such that $\ASSERT{x \in \Aut}$ occurs in $S$, where $\Aut= (\controls, q_0, \{q_F\}, \transrel)$. 
Note that at this stage each conjunctive \FA{} in $\cE_{\rcdep(S)}(x)$ is 
actually just a normal \FA{}. Let $i:=\rcdep(S)$ and iterate the following procedure until $i=0$.


\OMIT{
We start from the last assignment of $S$,  set $i:=\rcdep(S)$, and construct $\cE_{\rcdep(S)}(x)$ as follows: For each 
$x \in \Aut$ appearing in an assertion of $S$, where $\Aut$ is an FA, 
we \emph{nondeterministically} select one state $q \in \finals$ and include $((\controls, \transrel), \{(q_0, q)\})$ into $\cE_{\rcdep(S)}(x)$. Then we iterate the following procedure to compute $\cE_{i}$ until $i=0$.  
}
%

Suppose the $i$-th assignment of $S$ is $y:= f(x_1,
\cdots, x_\arity)$ with $\arity\geq 1$ (note here possibly $x_{j_1} = x_{j_2}$ for some $j_1 < j_2$), and 
$\cE_i(y)=\{\Aut_1, \ldots, \Aut_n\}$, 
where $\Aut_j$ is a conjunctive FA for each $j \in [n]$.
By \prerec{}, we can enumerate each disjunct $(\Aut^{(\iota)}_{j, 1}, \ldots, \Aut^{(\iota)}_{j, \arity})$ of a conjunctive representation of 
$f^{-1}(\Aut_j)$ in $(\ell(|f|,|\Aut_j|))^{c_0}$ space, where each atom is of size at most $\ell(|f|,|\Aut_j|)$.
Then $\cE_{i-1}(x)$ for $x \in  \{x_1,\cdots, x_\arity\}$ is constructed as follows: 
\begin{enumerate}
\item For each $j \in [n]$, nondeterministically select a disjunct $\left(\Aut_{j, 1}, \ldots, \Aut_{j, \arity}\right)$ of the conjunctive representation of $f^{-1}(\Aut_j)$.
\item 
Let
\[
    \cE_{i-1}(x):= \cE_{i}(x) \cup \left\{\Aut_{j, j'} \mid  j \in [n], j' \in [\arity], x_{j'} = x
        \right\}.
\]
[For each $x{\notin} \{x_1,\ldots, x_\arity\}$, let $\cE_{i-1}(x) := 
        \cE_i(x)$.]
%
\item Let $i: = i-1$.
\end{enumerate}
%
Let $\cE(x):=\cE_{0}(x)$ for each variable $x$.
To decide the path feasibility of $S$, we simply show that each collection
$\cE(x)$ (for each $x \in \vars(S)$) of conjunctive \FA{}s has a nonempty 
language intersection.

\hide{
\tl{tbh, I think the algo is clear enough and the example does not add too much.}\zhilin{maybe, but the example is at least more concrete. Moreover, the description of the algorithm is short and dry, maybe just use the example to let the reader spend a bit more time on the algorithm and digest, although with some redundancy. We may remove it if we run out of space. May decide this later.}
Let us use an example to help the reader understand the computation of $\cE_{i+1}$ from $\cE_i$.  Suppose that in $\cG_i$, to compute $\cE_{i+1}$, we select a variable $x$, which has no predecessors and three outgoing edges, say $(x, (\Transducer, 0), y)$, $(x, (\Transducer, 1), z)$, and $(x, (\Transducer, 2), y)$, where $y$ and $z$ are two distinct variables, moreover, $\cE_i(x) = \{\Aut_1, \Aut_2\}$. Let us also assume that  $\Pre_{\Transducer}(\Aut_1)$ (resp. $\Pre_{\Transducer}(\Aut_2)$) is represented by $(\Aut^{(j')}_{1, 0}, \Aut^{(j')}_{1, 1}, \Aut^{(j')}_{1, 2})_{ j'  \in [2]}$ (resp. $(\Aut^{(j')}_{2, 0}, \Aut^{(j')}_{2, 1}, \Aut^{(j')}_{2, 2})_{ j'  \in [3]}$). If for $j = 1$ (resp. $j=2$), $(\Aut^{(1)}_{1, 0}, \Aut^{(1)}_{1, 1}, \Aut^{(1)}_{1, 2})$  (resp. $\left(\Aut^{(3)}_{2, 0}, \Aut^{(3)}_{2, 1}, \Aut^{(3)}_{2, 2}\right)$)  is selected, then $\cE_{i+1}(y) = \cE_i(y) \cup \{\Aut^{(1)}_{1, 0}, \Aut^{(1)}_{1, 2}, \Aut^{(3)}_{2, 0} \cup \Aut^{(3)}_{2, 2}\}$ and $\cE_{i+1}(z) = \cE_{i}(z) \cup \{\Aut^{(1)}_{1, 1}, \Aut^{(3)}_{2, 1}\}$. 
\tl{I have not updated this part as likely the proof will go to appendix}
}

    \OMIT{
To decide the path feasibility of $S$, we have the following nondeterministic algorithm: first (nondeterministically) construct the sets $\cE(x)$ for $x \in \vars(S)$ as detailed above, then 
checking the nonemptiness of the product of conjunctive \FA{}s in $\cE(x)$ for each $x \in \vars(S)$.
}

\paragraph{Detailed complexity analysis.}
For each $i$, 
let $N_i$ be the maximum size of the conjunctive \FA{}s in $\bigcup \limits_{x \in \vars(S)} \cE_i(x)$.
Since each string function $f$ satisfies $|f| \le \rcphi(S)$, we have that $N_{i-1} \le \ell(|f|, N_i) \le \ell(\rcphi(S), N_i)$ (note that we have assumed that $\ell$ is monotone). Furthermore, because each conjunctive \FA{} in $\cE_{\rcdep(S)}(x)$ for each $x \in \vars(S)$ is of size bounded by $\rcpsi(S)$, we have 
    that for each $x \in \vars(S)$, each conjunctive \FA{} in $\cE(x)$ is of
    size bounded by \\ $\ell^{\langle \rcdep(S) \rangle}(\rcphi(S), \rcpsi(S))$. 
We emphasise that, according to the \prerec{} assumption, the construction of these conjunctive \FA{}s can be done in nondeterministic $(\ell^{\langle \rcdep(S) \rangle}(\rcphi(S), \rcpsi(S)))^{c_0}$ space. 

According to Proposition~\ref{prop-conj-fa-prod}, for each $x \in \vars(S)$, the conjunctive product \FA{} $\Aut_x=((\controls_x, \transrel_x), S_x)$ of these conjunctive \FA{}s  in $\cE(x)$ is of size 
    $$|\cE(x)|(\ell^{\langle \rcdep(S) \rangle}(\rcphi(S), \rcpsi(S))),$$
and $|S_x| \le (\ell^{\langle \rcdep(S) \rangle}(\rcphi(S), \rcpsi(S)))^2$. Therefore, the size of the (standard) \FA{} corresponding to $\Aut_x$ is 
    $$(|\cE(x)| \ell^{\langle \rcdep(S) \rangle}(\rcphi(S), \rcpsi(S)))^{(\ell^{\langle \rcdep(S) \rangle}(\rcphi(S), \rcpsi(S)))^2}.$$

Since the nonemptiness of an \FA{} can be solved in nondeterministic logarithmic space, we conclude that the nonemptiness of $\Aut_x$ can be solved in nondeterministic
{\small
    $$(\ell^{\langle \rcdep(S) \rangle}(\rcphi(S), \rcpsi(S)))^2 
    \log(|\cE(x)|\ell^{\langle \rcdep(S) \rangle}(\rcphi(S), \rcpsi(S)))$$
}
space.

We now analyse $|\cE(x)|$, the size of $\cE(x)$.
For each $i$, let $M_i$ be the maximum number of elements in $\cE_i(x)$ for $x  \in \vars(S)$.
Then we have $M_{i-1} \le (r+1)M_i \le (\rcdim(S)+1)M_i $. Since $\cE_{\rcdep(S)}(x)$ contains at most $\rcasrt(S)$ elements, we deduce that $\cE(x)$ contains at most $(\rcdim(S)+1)^{\rcdep(S)}\rcasrt(S)$ elements. 

Therefore, for each $x \in \vars(S)$, 
the nonemptiness of $\Aut_x$ can be solved in nondeterministic 
{\small
$$
    (\ell^{\langle \rcdep(S) \rangle}(\rcphi(S), \rcpsi(S)))^2 
    \log[(\rcdim(S)+1)^{\rcdep(S)}\rcasrt(S)
    \ell^{\langle \rcdep(S) \rangle}(\rcphi(S), \rcpsi(S))]$$
}
space.

Moreover, from the aforementioned analysis, we know that each \FA{} $\Aut$ occurring in $S$ gives rise to at most $(\rcdim(S)+1)^{\rcdep(S)}$ conjunctive \FA{}s in $\bigcup \limits_{x \in \vars(S)} \cE(x)$. Therefore, 
$$\sum \limits_{x \in \vars(S)} |\cE(x)| \le (\rcdim(S)+1)^{\rcdep(S)} \rcasrt(S).$$
We then deduce that the sum of the sizes of all the conjunctive \FA{}s in
    $\cE(x)$ for all $x \in \vars(S)$ is bounded by 
$$
\begin{array}{l l}
& \sum \limits_{x \in \vars(S)} (|\cE(x)| \ell^{\langle \rcdep(S) \rangle}(\rcphi(S), \rcpsi(S)))  \\
\le & (\rcdim(S)+1)^{\rcdep(S)}\rcasrt(S) \ell^{\langle \rcdep(S) \rangle}(\rcphi(S), \rcpsi(S)) .
\end{array}
$$

In summary, the aforementioned nondeterministic algorithm takes 
$$(\rcdim(S)+1)^{\rcdep(S)}  \rcasrt(S) \left(\ell^{\langle \rcdep(S) \rangle}(\rcphi(S), \rcpsi(S)) \right)^c$$
 space for some constant $c > 0$.



\tl{Here to instantiate with FTs, replaceall, reverse}
\paragraph{Instantiation with \FT{}s, $\replaceall$, and $\reverse$.}
To obtain a complexity upper-bound for the core constraint language, we verify the regularity condition \prerec{}. 
%
%
%
We start with the $\replaceall$ function with a slightly different (but equivalent by Currying the second argument) form 
$\replaceall_{p}(sub, rep)$ where $p$ is a fixed regular expression. 


%
\begin{proposition}\label{prop-replace-pt}
	For the function $\replaceall_p$ with the pattern $p$ being a regular expression, we can compute a 
	\PT{} $\Transducer_{\sf replaceall}$ of size $2^{\bigO(|p|^c)}$ for some $c > 0$ such that $\Tran(\Transducer_{\sf replaceall})$ expresses $\replaceall_p$.
	\label{prop:replaceAll}
\end{proposition}

\begin{lemma}\label{lem-1pt}
	The regularity condition \prerec{} holds for 
	\begin{itemize}
		\item the \FT{} $T$ with $\ell_T(|\Transducer|, |\Aut|) = |\Transducer||\Aut|$.
		\item the $\reverse$ function with $\ell_\reverse(|\Aut|) = |\Aut|$.
		\item the $\replaceall$ function $\replaceall_p$ with $\ell(|\Aut|)=2^{\bigO(|p|^c)}\cdot |\Aut|$ 
	\end{itemize}
\end{lemma}
}


\section{An Expressive Language Satisfying The Semantic Conditions} \label{sec-core}

Section~\ref{sec-dec} has identified general \emph{semantic} conditions under which the \emph{decidability} of the path feasibility problem can be attained.
Two questions naturally arise:
\begin{enumerate}
	\item How general are these semantic conditions? In particular, do string functions commonly used in practice satisfy these semantic conditions?
	\item What is the computational complexity of checking path feasibility?
\end{enumerate}
The next two sections will be devoted to answering these questions.

For the first question, we shall introduce a \emph{syntactically} defined string constraint language $\strline$, which includes general string operations such as the $\replaceall$ function and those definable by two-way transducers, as well as recognisable relations.
[Here, $\strline$ stands for ``straight-line'' because our work generalises
the straight-line logics of \cite{LB16,CCHLW18}.]
We 
first recap the $\replaceall$ function that allows a general (i.e. variable)
replacement string \cite{CCHLW18}. Then we give the definition of two-way
transducers whose special case (i.e. one-way transducers) has been given in Section~\ref{sec-prel}.

\subsection{The $\replaceall$ function and two-way transducers}

The $\replaceall$ function has three parameters: the first parameter is the \emph{subject} string, the second parameter is a \emph{pattern} that is a regular expression, and the third parameter is the \emph{replacement} string. For the semantics of $\replaceall$ function, in particular when the pattern is a regular expression, we adopt the \emph{leftmost and longest} matching. For instance, $\replaceall(aababaab, (ab)^+, c) =ac\cdot \replaceall(aab, (ab)^+, c)= acac$, since the leftmost and longest matching of $(ab)^+$ in $aababaab$ is $abab$. Here we require that the language defined by the pattern parameter does \emph{not} contain the empty string, in order to avoid the troublesome definition of the semantics of the matching of the empty string. We refer the reader to \cite{CCHLW18} for the formal semantics of the $\replaceall$ function. To be consistent with the notation in this paper, for each regular expression $e$, we define
 the string function $\replaceall_e: \ialphabet^* \times \ialphabet^* \rightarrow \ialphabet^*$ such that for $u, v \in \ialphabet^*$, $\replaceall_e(u, v) = \replaceall(u, e, v)$, and we write $\replaceall(x, e, y)$ as $\replaceall_e(x,y)$.

As in the one-way case, we start with a definition of two-way finite-state automata.
\begin{definition}[Two-way finite-state automata] \label{def:2nfa}
	A \emph{(nondeterministic) two-way finite-state automaton}
	(\FFA{}) over a finite alphabet $\ialphabet$ is a tuple $\Aut =
	(\ialphabet, \EndLeft, \EndRight, \controls, q_0, \finals, \transrel)$ where
	$\controls, q_0, \finals$ are as in \FA{}s, $\EndLeft$ (resp.~$\EndRight$) is a left (resp.~right) input tape end
	marker,
	and the
	transition relation  $\transrel\subseteq \controls \times
	\overline{\ialphabet}\times \{\Left, \Right\}\times \controls$, where $\overline{\ialphabet} = \ialphabet \cup \{\EndLeft, \EndRight\}$.
	Here, we assume 
	that
	there are no transitions that take the head of the tape past the left/right
	end marker (i.e.~$(p,\EndLeft,\Left,q), (p,\EndRight,\Right,q) \notin
	\transrel$ for every $p, q \in \controls$).

	Whenever they can be easily understood, we will not mention $\ialphabet$,
	$\EndLeft$, and $\EndRight$ in $\Aut$.
\end{definition}

The notion of runs of \FFA{} on an input string is exactly the same as that of
Turing machines on a read-only input tape. More precisely, for a string
$w = a_1 \dots a_n$, a \emph{run} of $\Aut$ on $w$
is a sequence of pairs $(q_0,i_0),\ldots, (q_m,i_m) \in \controls \times [0, n+1]$
defined as follows. Let $a_0 = \EndLeft$ and $a_{n+1} = \EndRight$. The
following conditions then have to be satisfied: $i_0 = 0$, and for every $j \in [0, m-1]$, we have $(q_j,a_{i_j}, dir, q_{j+1}) \in
\transrel$ and $i_{j+1} = i_j + dir$ for some $dir \in  \{\Left, \Right\}$.

The run is said to be \defn{accepting} if $i_m = n+1$ and $q_m \in \finals$.
A string $w$ is \defn{accepted} by $\Aut$ if there is an accepting run of
$\Aut$ on $w$. The set of strings accepted by $\Aut$ is denoted by $\Lang(\Aut)$,
a.k.a., the language \defn{recognised} by $\Aut$.
The \defn{size} $|\Aut|$ of $\Aut$ is defined to be $|\controls|$; this will
be needed when we talk about computational complexity.

Note that	an \FA{} can be seen as a \FFA{} such that $\transrel \subseteq \controls \times \overline{\ialphabet} \times
	\{\Right\} \times \controls $, with the two end markers $\EndLeft, \EndRight$ omitted.
\FFA{} and \FA{} recognise precisely the same class of languages, i.e.,
\emph{regular languages}. The following result is standard and can be found in textbooks on automata theory
(e.g. \cite{HU79}).

\begin{proposition}\label{prop-2nfa-nfa}
	Every \FFA{} $\Aut$ can be transformed in exponential time into an equivalent \FA{} of size $2^{\bigO(|\Aut| \log |\Aut|)}$.
\end{proposition}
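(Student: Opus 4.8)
The plan is to use the classical crossing-sequence simulation of Shepherdson, adapted to the nondeterministic two-way setting. Fix the given \FFA{} $\Aut = (\ialphabet, \EndLeft, \EndRight, \controls, q_0, \finals, \transrel)$ and write $n = |\Aut| = |\controls|$. Given an input $w = a_1 \cdots a_m$, I would consider any run of $\Aut$ on the tape $\EndLeft\, a_1 \cdots a_m\, \EndRight$ (cells $0, \ldots, m+1$) and, for each boundary $b_j$ lying between cells $j$ and $j+1$ (for $0 \le j \le m$), record the \emph{crossing sequence} at $b_j$: the sequence of states in which the head crosses $b_j$, listed in the temporal order in which the crossings occur. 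Since the head starts strictly to the left of every boundary (at cell $0$) and an accepting run ends strictly to the right of every boundary (at cell $m+1$), the head must cross each boundary, and the crossings at a fixed boundary alternate in direction, beginning and ending with a rightward crossing; in particular every crossing sequence has odd length.

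First I would bound the length of crossing sequences. By discarding loops I may assume that an accepting run visits no configuration $(q,i)$ twice. Then, at any fixed boundary, the states recorded at the rightward crossings are pairwise distinct (a repeat would force a repeated configuration), and likewise for the leftward crossings; hence every crossing sequence has length at most $2n$. The number of sequences over $\controls$ of length at most $2n$ is $\sum_{k \le 2n} n^k = 2^{\bigO(n \log n)}$, which will be the state space of the target \FA{} $\AutB$.

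Next I would set up the left-to-right simulation. The \FA{} $\AutB$ reads $a_1 \cdots a_m$ and, after consuming the prefix $a_1 \cdots a_j$, keeps in its control state the crossing sequence at the boundary $b_j$. The initial states of $\AutB$ are those crossing sequences at $b_0$ consistent with the behaviour of $\Aut$ on $\EndLeft$ (the automaton starting in $q_0$ at cell $0$, re-reading $\EndLeft$ each time it returns there); the accepting states are those crossing sequences at $b_m$ consistent with the behaviour of $\Aut$ on $\EndRight$ and leading to a state in $\finals$ upon the final rightward crossing into cell $m+1$. Both are fixed, $\Aut$-computable sets of crossing sequences. The transition of $\AutB$ on a letter $a$, from a crossing sequence $C$ (at $b_{j-1}$) to a crossing sequence $C'$ (at $b_j$), is permitted exactly when $C$ and $C'$ are \emph{locally compatible through $a$}: there is a way to thread the transitions of $\Aut$ applied at cell $j$ (each visit to cell $j$ being a single $a$-step that enters either from the left via a rightward crossing of $b_{j-1}$ or from the right via a leftward crossing of $b_j$, and exits symmetrically) so that the induced crossings of $b_{j-1}$ and $b_j$, read in temporal order, are exactly $C$ and $C'$.

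Finally I would verify correctness: $\AutB$ accepts $w$ iff there is a family of crossing sequences along the boundaries of $\EndLeft\, w\, \EndRight$ that is locally compatible at every cell and meets the end-marker conditions, and this holds iff $\Aut$ has an accepting run on $w$. The compatibility relation and the initial/final sets are decidable by a direct search over interleavings of the two bounded-length crossing sequences, so $\AutB$ is constructible in time polynomial in its size $2^{\bigO(n \log n)}$, i.e. in exponential time, as claimed. The main obstacle I anticipate is formalising the local-compatibility predicate precisely: it must respect both the temporal order of the crossings (the zipping of $C$ and $C'$ by the $a$-transitions at cell $j$ has to be realisable as a \emph{single} consistent interleaving) and the direction and parity bookkeeping at the end markers, and it is exactly here that the nondeterminism of $\Aut$ demands the most care.
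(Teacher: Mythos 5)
The paper does not prove this proposition --- it cites it as a standard textbook result (e.g.\ \cite{HU79}) --- and your crossing-sequence argument is precisely that standard proof: the loop-removal bound of $2|\controls|$ on crossing-sequence length gives $\sum_{k\le 2n} n^k = 2^{\bigO(n\log n)}$ states, the local-compatibility transition relation is the usual Shepherdson/Hopcroft--Ullman matching condition (which adapts to the nondeterministic case without change), and the whole construction is computable in time polynomial in the resulting automaton's size. Your proof is correct and matches the intended argument.
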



\begin{definition}[Two-way finite-state transducers]
	Let $\ialphabet$ be an alphabet. A \emph{nondeterministic two-way finite  transducer} (\FFT{}) $\Transducer$  over $\ialphabet$ is a tuple $(\ialphabet, \EndLeft, \EndRight, \controls, q_0, \finals, \transrel)$, where $\ialphabet, \controls, q_0, \finals$ are as in \FT{}s, and $\transrel \subseteq \controls \times \overline{\ialphabet}\times \{\Left, \Right\}\times
	\controls \times \ialphabet^*$, satisfying the syntactical constraints of \FFA{}s, and the additional constraint that the output must be $\epsilon$ when reading $\EndLeft$ or $\EndRight$. Formally, for each transition $(q, \EndLeft, dir, q', w)$ or $(q, \EndRight, dir, q', w)$ in $\delta$, we have $w=\epsilon$.
\end{definition}
The notion of runs of \FFT{}s on an input string can be seen as a generalisation
of \FFA{}s by adding outputs. More precisely, given a string $w = a_1 \dots a_n$, a \emph{run} of $\Transducer$ on $w$
is a sequence of tuples $(q_0, i_0, w'_0),\ldots, (q_m, i_m, w'_m) \in \controls \times
[0, n+1] \times \Sigma^*$
such that, if $a_0 =\ \EndLeft$ and $a_{n+1} =\ \EndRight$,
we have $i_0 = 0$, and  for every $j \in [0, m-1]$, $(q_j, a_{i_j}, dir, q_{j+1}, w'_j) \in
\transrel$, $i_{j+1} = i_j + dir$ for some $dir \in \{\Left, \Right\}$, and $w'_0 = w'_m = \varepsilon$.
The run is said to be \defn{accepting} if $i_m = n+1$ and $q_m \in \finals$.
When a run is accepting, $w'_0 \ldots w'_m$ is said to be the \emph{output} of the
run. Note that some of these $w'_i$s could be empty strings.
A word $w'$ is said to be an output of $\Transducer$ on $w$ if there is an accepting run of
$\Transducer$ on $w$ with output $w'$. We use $\Tran(\Transducer)$ to denote the
\emph{transduction} defined by $\Transducer$, that is, the relation comprising
the pairs $(w,w')$ such that $w'$ is an output of $\Transducer$ on $w$.


Note that an \FT{} over	$\ialphabet$ is a \FFT{} such that $\transrel \subseteq \controls \times \overline{\ialphabet} \times \{\Right\} \times \controls \times \ialphabet^*$, with the two endmarkers $\EndLeft, \EndRight$ omitted.

\begin{example}\label{exmp-2ft}
	We give an example of \FFT{} for the function $f(w)=ww^R$. The transducer has three states $Q=\{q_0, q_1, q_2\}$, and the transition relation $\delta$ comprises $(q_0, \ell, 1, q_0, \ell)$ for $\ell\in \Sigma$, $(q_0, \EndLeft, 1, q_0, \epsilon)$, $(q_0, \EndRight, -1, q_1, \epsilon)$,  $(q_1, \ell, -1, q_1, \ell)$ for $\ell\in \Sigma$, $(q_1,\EndLeft, 1,q_2,\epsilon)$, $(q_2,\ell, 1,q_2,\epsilon)$ for $\ell\in \Sigma$. The final state $F=\{q_2\}$.	\qed
\end{example}


\subsection{The constraint language $\strline$}

The constraint language $\strline$ is defined by the following rules,
\begin{equation}
S ::=\quad z:= x\ \concat\ y \ |\ z := \replaceall_e(x, y) \ | \  y := \reverse(x) \ |\  y := \Transducer(x)\ |\  \text{\ASSERT{$R(\vec{x})$}}\ |\
S; S\
\label{eq:SL}
\end{equation}
where $\concat$ is the string concatenation operation which concatenates two strings, $e$ is a regular expression, 
$\reverse$ is the string function which reverses a string, $\Transducer$ is a \FFT{},  and $R$ is a recognisable relation represented by a collection of tuples of \FA{}s.

For the convenience  of Section~\ref{sec-frag},
for a class of string operations $\opset$, we will use $\strline[\opset]$ to denote the fragments of $\strline$ that only use the string operations from $\opset$. Moreover, we will use $\sreplaceall$ to denote the special case of the $\replaceall_e$ function where the replacement parameters are restricted to be \emph{string constants}. Note that, according to the result in \cite{CCHLW18}, an instance of the $\sreplaceall$ function $\replaceall_e(x, u)$ with $e$ a regular expression and $u$ a string constant can be captured by \FT{}s. However, such a transformation incurs an exponential blow-up. We also remark that 
we do not present $\strline$ in the most succinct form. For instance, it is known that the concatenation operation can be simulated by the $\replaceall$ function, specifically, $z = x \concat y \equiv z' = \replaceall_a(ab, x) \wedge z = \replaceall_b(z', y)$, where $a, b$ are two fresh letters.
Moreover, it is evident that the $\reverse$ function is subsumed by \FFT{}s.

We remark that $\strline$ is able to encode some string functions with  multiple (greater than two) arguments by transducers and repeated use of $\replaceall$, which is practically convenient particularly for user-defined  functions.

The following theorem answers the two questions raised in the beginning of this section.
\begin{theorem} \label{thm:strlineall}
	The path feasibility problem of  $\strline$ is decidable with a non-elementary lower-bound.
\end{theorem}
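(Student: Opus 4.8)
The plan is to prove the two halves of the statement separately: decidability by instantiating the generic procedure of Theorem~\ref{th:gen}, and the non-elementary lower bound by a direct reduction from a canonical non-elementary problem. For decidability it suffices to show that every $\strline$ program satisfies the two semantic conditions \regmondec{} and \prerec{}, after which Theorem~\ref{th:gen} applies verbatim. The condition \regmondec{} is immediate, since assertions in $\strline$ are by definition recognisable relations presented as tuples of \FA{}s. All the work lies in \prerec{}, which I would verify operation by operation. For concatenation $z := x \concat y$, given an \FA{} $\Aut$ for a target language, the pre-image $\{(u,v) : uv \in \Lang(\Aut)\}$ is the recognisable relation $\bigcup_{q \in Q} \Lang(\Aut(q_0,\{q\})) \times \Lang(\Aut(q, F))$ obtained by splitting $\Aut$ at every state, whose representation is computable in linear time. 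For $y := \reverse(x)$ the pre-image is the reverse language $\{u : u^R \in \Lang(\Aut)\}$, which is exactly $\Lang(\Aut^\revsym)$ for the reversal automaton $\Aut^\revsym$ of Section~\ref{sec-prel}, hence effectively regular.

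For a two-way transducer $y := \Transducer(x)$ the argument is more delicate, and this is where I would put the first real effort. Given an \FA{} $\Aut$ for the target, I would construct a \FFA{} $\mathbb{B}$ over the input alphabet that simulates $\Transducer$ on $x$ while feeding every emitted output block $w'_j$ into a copy of $\Aut$, recording $\Aut$'s state in its own control; then $\mathbb{B}$ accepts $x$ precisely when some accepting run of $\Transducer$ on $x$ yields an output accepted by $\Aut$, i.e. $\Lang(\mathbb{B}) = \Pre_{R_\Transducer}(\Lang(\Aut))$. Since $\mathbb{B}$ is a two-way automaton, Proposition~\ref{prop-2nfa-nfa} converts it to an equivalent \FA{}, so the unary pre-image is effectively regular as demanded by \prerec{}.

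The hardest case of \prerec{}, and the main technical obstacle for the decidability direction, is $z := \replaceall_e(x,y)$ with a variable replacement $y$. Here I would rely on the pre-image analysis of \cite{CCHLW18}: given $\Aut$ for the target of $z$, the relation $\{(u,v) : \replaceall_e(u,v) \in \Lang(\Aut)\}$ must be shown to be a finite union of products of regular languages. The idea is to guess, along a run of $\Aut$ on the output, the factorisation of $u$ into maximal matches of $e$ and the intervening non-matched segments (the leftmost–longest discipline being itself captured by a product automaton built from $e$), and to record the \FA{}-state pair traversed by each inserted copy of $v$. Because every copy of $v$ must drive $\Aut$ through the \emph{same} state pair, the induced constraint on $v$ is a single regular condition and the constraint on $u$ is regular as well, giving a recognisable relation. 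Proving that this enumeration is finite and computable is the crux, and is exactly the content imported from \cite{CCHLW18}.

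For the non-elementary lower bound it suffices to exhibit hardness for a sub-fragment of $\strline$, and I would use the fragment with one-way transducers together with $\replaceall_e$ (variable replacement), the lower bound then transferring to the full language since $\strline$ subsumes it. The plan is to reduce from a non-elementary-hard problem such as acceptance of a Turing machine within space bounded by a tower of exponentials $\tower(n)$ (equivalently, a $\tower(n)$-bounded tiling problem). The encoding uses a straight-line chain of $n$ assignments in which each step forces the next variable to encode a structure one exponential larger than its predecessor: a $\replaceall_e$ gadget duplicates matched blocks through the variable replacement, while transducers check level-to-level consistency (that adjacent blocks encode successive counter values), with regular assertions pinning down the boundary levels. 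The point is that the generic procedure incurs a product-automaton blow-up of one exponential per assignment, and the reduction is engineered so that this blow-up must genuinely be realised, whence path feasibility holds iff the $\tower(n)$-bounded computation accepts. The main obstacle is designing the level-$i$ counter encoding and the verification transducers so that feasibility truly requires witnesses of size $\tower(n)$ — that is, so the exponential-per-variable cost cannot be circumvented — which is the very phenomenon that distinguishes this problem from the \expspace{} behaviour of either operation in isolation.
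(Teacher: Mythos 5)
Your proposal follows the paper's own route: decidability is obtained exactly as in Proposition~\ref{prop-sem-cond} — verifying \prerec{} operation by operation, importing the $\replaceall_e$ pre-image analysis from \cite{CCHLW18} and handling \FFT{}s by the product \FFA{} construction converted to an \FA{} via Proposition~\ref{prop-2nfa-nfa} — followed by an appeal to Theorem~\ref{th:gen}. The lower bound likewise matches the strategy of Proposition~\ref{prop:nexpspace-hardness} (a reduction from tiling over a corridor of $n$-fold exponential width, with one string operation per exponential level, and $\replaceall$ used to simulate the repeated passes when only one-way transducers are available); you correctly identify that the level-$i$ counter encoding and its verification is the crux, which is exactly what the paper's appendix devotes its effort to.
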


To invoke the result of the previous section, we have the following proposition.   

\begin{proposition}\label{prop-sem-cond}
The $\strline$ language satisfies the two semantic conditions $\regmondec$ and $\reginvrel$.
\end{proposition}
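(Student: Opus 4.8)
The plan is to verify the two conditions separately. The \regmondec{} condition is immediate from the syntax of $\strline$: every assertion has the form $\ASSERT{R(\vec{x})}$ where $R$ is, by definition of the language, a recognisable relation supplied as a collection of tuples of \FA{}s, which is already a representation in the sense of Definition~\ref{def:recrel}. Hence \regmondec{} holds with no computation required. It remains to establish \prerec{}, i.e.\ that for each of the four operations $\concat$, $\reverse$, $\Transducer$, and $\replaceall_e$, and each \FA{} $\Aut$, the pre-image $\Pre_{R_f}(\Aut)$ is an effectively computable recognisable relation. I would treat the four operations in increasing order of difficulty.

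For $z := x \concat y$ I would use the standard state-splitting argument: $u\cdot v \in \Lang(\Aut)$ iff there is a state $q$ of $\Aut$ with $u \in \Lang(\Aut(q_0,\{q\}))$ and $v \in \Lang(\Aut(q,\finals))$. Thus $\Pre_{R_f}(\Aut) = \bigcup_{q} \Lang(\Aut(q_0,\{q\})) \times \Lang(\Aut(q,\finals))$, a finite union of products of regular languages whose atom \FA{}s are obtained from $\Aut$ simply by redesignating its initial and final states. For $y := \reverse(x)$ the pre-image is the \emph{unary} relation $\{u : u^R \in \Lang(\Aut)\} = \Lang(\Aut^\revsym)$, where $\Aut^\revsym$ is the reversal automaton already constructed in Section~\ref{sec-prel}; a unary regular language is a recognisable relation with a single atom.

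For $y := \Transducer(x)$ with $\Transducer$ a \FFT{}, I would build a \FFA{} $\cB$ over the input alphabet recognising $\Pre_{R_f}(\Aut)$ (where $R_f = \Tran(\Transducer)$) directly. Its states are pairs consisting of a state of $\Transducer$ and a state of $\Aut$; $\cB$ mimics a run of $\Transducer$ on the input $u$, and whenever $\Transducer$ fires a transition emitting output $w'$, $\cB$ advances the $\Aut$-component by reading the bounded-length string $w'$ (a finite gadget). An accepting run of $\cB$ corresponds exactly to an accepting run of $\Transducer$ on $u$ whose total output $w'_0 w'_1 \cdots w'_m$ -- read left to right in run order, which is precisely the order in which $\Aut$ consumes it -- lies in $\Lang(\Aut)$. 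Since the pre-image is thus recognised by a \FFA{}, Proposition~\ref{prop-2nfa-nfa} converts it to an equivalent \FA{}, yielding an effectively regular, hence single-atom recognisable, pre-image. The one-way case is the special instance in which $\Transducer$ never moves left.

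The main obstacle is the $\replaceall_e$ case, as it is the only operation whose pre-image is a genuinely binary recognisable relation and where the leftmost/longest matching semantics must be tracked. Here I would invoke the construction of \cite{CCHLW18}, which establishes precisely that for a regular pattern $e$ the pre-image $\Pre_{R_f}(\Aut) = \{(u,v) : \replaceall_e(u,v) \in \Lang(\Aut)\}$ (with $f = \replaceall_e$) is an effectively computable recognisable relation $\bigcup_i L_i^{(1)} \times L_i^{(2)}$: the subject component $L_i^{(1)}$ records the placement of matches of $e$ in $u$ under leftmost/longest matching, while the replacement component $L_i^{(2)}$ constrains $v$ so that substituting $v$ for each match yields a word in $\Lang(\Aut)$. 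This is the technical heart of the proposition, inherited from prior work; the other three cases are routine. Assembling these four pre-image computations gives \prerec{}, completing the verification of both semantic conditions.
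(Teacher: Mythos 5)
Your proposal is correct and takes essentially the same route as the paper: the two non-trivial cases are handled identically, namely by citing \cite{CCHLW18} for the pre-image of $\replaceall_e$ and by building the product of the \FFT{} with $\Aut$ (advancing the $\Aut$-component on each emitted output block, in run order) and then converting the resulting \FFA{} to an \FA{} via Proposition~\ref{prop-2nfa-nfa}. The only difference is cosmetic: you treat $\concat$ and $\reverse$ directly (state-splitting of $\Aut$, and the reversal automaton $\Aut^\revsym$), whereas the paper dispenses with them by observing that $\concat$ is expressible via $\replaceall$ and that $\reverse$ is subsumed by \FFT{}s; your direct treatment is equally valid and arguably more self-contained.
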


\begin{proof}
It is sufficient to show that the $\replaceall_e$ functions and the string operations defined by \FFT{}s satisfy the  \prerec{}  condition.

The fact that $\replaceall_e$ for a given regular expression $e$ satisfies the \prerec{}  condition was shown in \cite{CCHLW18}.

That the pre-image of an \FFT{} $T$ under a regular language defined by an \FA{} $\Aut$  is effectively regular is folklore.
Let $T=(\controls, q_0, \finals, \transrel)$ be a \FFT{} and $\Aut=(\controls', q'_0, \finals', \transrel')$ be an \FA{}. Then  $\Pre_{\Tran(T)}(\Aut)$ is the regular language defined by the \FFA{} $\Aut' = (\controls \times \controls', (q_0, q'_0), \finals \times \finals', \transrel'')$, where $\transrel''$ comprises the tuples $((q_1,q'_1), a, (q_2, q'_2))$ such that there exists $w \in \ialphabet^*$ satisfying that $(q_1, a, q_2, w) \in \delta$ and $q'_1 \xrightarrow[\Aut]{w} q'_2$. From Proposition~\ref{prop-2nfa-nfa}, an equivalent \FA{} can be built from $\Aut'$ in exponential time.
\end{proof}

From Proposition~\ref{prop-sem-cond} and Theorem~\ref{th:gen}, the path feasibility problem of $\strline$ is \emph{decidable}.

To address the complexity (viz.\ the second question raised at the beginning of this section), we show that the path feasibility problem of $\strline$ is \nonelementary.

\begin{proposition} \label{prop:nexpspace-hardness}
	The path feasibility problem of the following two fragments is \nonelementary: $\strline$ with \FFT{}s,  and $\strline$ with \FT{}s+$\replaceall$.
\end{proposition}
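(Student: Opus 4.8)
The plan is to give a reduction from a \nonelementary{} problem; I would take the halting problem of a Turing machine $M$ on empty input within space bounded by a tower of exponentials $\tower(n)$ of height $n$ (equivalently, a domino/tiling problem over a $\tower(n) \times \tower(n)$ grid), which is \nonelementary{} by a standard space-hierarchy argument. The whole point is to build a straight-line program $S$ of length $O(n)$ whose path feasibility encodes such a computation, so that $S$ is short yet reasons about strings of length $\tower(n)$. Since $\strline$ assignments $y := \Transducer(x)$ (or $y := \replaceall_e(x,y')$) form an acyclic chain, resolving $S$ forces the pre-image to be propagated through all $n$ assignments, and a correctly arranged chain of transductions computes exactly one extra exponential per link.

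The core of the reduction is a family of \emph{level-$k$ counter} encodings defined by induction on $k$. A level-$1$ number in $[0,2^c)$ is a fixed-length bit string; a level-$(k+1)$ number in $[0,\tower(k+1))$ is encoded as a concatenation of $\tower(k)$ blocks, each carrying a level-$k$ \emph{address} together with one bit, so the representation length grows by one exponential per level. The operations I must realise are (i) \emph{well-formedness}: the addresses of consecutive blocks increase by one according to the level-$k$ successor relation; (ii) the level-$(k+1)$ \emph{successor} and \emph{equality} relations; and (iii) bit lookup at a given address. Crucially, each of these at level $k+1$ can be assembled from a \emph{constant} number of string operations and recognisable constraints on top of the level-$k$ machinery: local consistency is regular, and the only non-local ingredient---relating each block to the next and comparing their addresses---is precisely a transduction. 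Chaining the $n$ levels therefore costs $O(n)$ assignments, after which the transition relation of $M$ is imposed by further regular constraints over the level-$n$ encoding, and feasibility of $S$ holds iff $M$ accepts.

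For the fragment $\strline$ with \FFT{}s the inductive level-lift is most direct: a two-way transducer may traverse its input several times, so in a constant-state control it can read a level-$(k+1)$ string, relate each block to its successor block, and expose the address comparison to the level-$k$ gadget, while $\reverse$ and $\concat$ stitch blocks together. For the fragment $\strline$ with \FT{}s and $\replaceall$, two-way motion is unavailable, but the variable-replacement power of $\replaceall_e(x,y)$ compensates: substituting, at every occurrence of a pattern, a copy of a second variable is exactly the duplication needed to expand level-$k$ structure into level-$(k+1)$ structure, with one-way transducers handling block-to-next-block consistency. In both cases every operation used lies in the respective fragment and satisfies $\reginvrel$, so the construction is sound, and the two \nonelementary{} lower bounds follow.

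The main obstacle I anticipate is the inductive level-lift itself: proving that, given gadgets certifying level-$k$ well-formedness and successor, one can certify the level-$(k+1)$ analogues using only a bounded number of $\strline$ operations, with every intermediate relation genuinely expressible as a (two-way or one-way) transducer or recognisable relation. This is delicate because the straight-line restriction forbids re-reading a variable cyclically, so the ``compare a block with the next block'' step must be encoded through a single transduction pass rather than a self-comparison; and for the $\replaceall$ fragment one must verify that the \emph{leftmost-longest} matching semantics inserts copies at exactly the intended positions. Getting this bookkeeping right---and confirming that the per-level overhead is truly constant and does not secretly depend on $k$---is where the real work lies.
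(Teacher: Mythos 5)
Your proposal follows essentially the same route as the paper: a reduction from tiling problems over a corridor of $n$-fold exponential width, an inductive family of level-$k$ address encodings where each level-$(k{+}1)$ object is a sequence of blocks indexed by level-$k$ addresses, one transducer per level chained in an $O(n)$-length straight-line program, with the non-local address comparisons delegated to the next level down via the transducer's output; and for the \FT{}+$\replaceall$ fragment, the same trick of using $\replaceall$ with a variable replacement to duplicate the input and thereby simulate the multiple passes of a two-way transducer with a one-way one. The paper instantiates the level-$k$ counters as rows of a fixed tiling problem with a unique exponentially long solution rather than as binary counters, but this is a presentational choice, not a different argument.
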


For each $\expheight$ we reduce from a tiling problem that is hard for $\expheight$-$\expspace$.
For this we need to use large numbers that act as indices.
Similar encodings of large numbers appear in the study of higher-order programs (e.g.~\cite{J01,CW07}) except quite different machinery is needed to enforce the encoding.
The complete reduction is given in
\shortlong{the full version of this article}
{Appendix~\ref{sec:two-way-lower} in the supplementary material},
with some intuition given here.

\label{sec:tiling-def}
A \emph{tiling problem} consists of a finite set of tiles $\tiles$ as well as horizontal and vertical tiling relations
$\hrel, \vrel \subseteq \tiles \times \tiles$.
Given a tiling \emph{corridor} of a certain width, as well as initial and final tiles
$\inittile, \fintile \in \tiles$
the task is to find a tiling where the first (resp. last) tile of the first (resp. last) row is $\inittile$ (resp. $\fintile$), and horizontally (resp. vertically) adjacent tiles $\tile, \tile'$ have
$\tup{\tile, \tile'} \in \hrel$
(resp. $\vrel$).
Corridor width can be considered equivalent to the space of a Turing machine.
We will consider problems where the corridor width is
$2^{:^{2^m}}$
where the height of the stack of exponentials is $n$.
E.g.\ when $n$ is $0$ the width is $m$, when $n$ is $1$ the width is $2^m$, when $n$ is $2$ the width is $2^{2^m}$ and so on.
Solving tiling problems of width
$2^{:^{2^m}}$
is complete for the same amount of space.

Solving a tiling problem of corridor width $m$ can be reduced to checking whether a \FFT\ of size polynomial in $m$ and the number of tiles can output a specified symbol $\top$.
Equivalently, we could use a \FFA.
A solution is a word
\[
    t_{1,1} t_{1,2} \ldots t_{1,m} \numsep \ldots \numsep t_{h,1} t_{h,2} \ldots t_{h,m}
\]
where $\numsep$ separates rows.
The \FFT\ performs $m+1$ passes.
During the first pass it checks that the tiling begins with $\inittile$, ends with $\fintile$, and
$\tup{t_{i,j}, t_{i,j+1}} \in \hrel$
for all
$1 \leq j < m$.
In $m$ more passes we verify that $\vrel$ is obeyed; the $j$th pass verifies the $j$th column.

Now consider two \FFT{}s and a tiling problem of width $2^m$.
Intuitively, we precede each tile with its column number in $m$ binary bits.
That is
\[
0\ldots00\ \tile_{1,1}\ %
0\ldots01\ \tile_{1,2}\ %
\ldots
1\ldots11\ \tile_{1,2^m}\ %
\numsep
\ldots
\numsep
0\ldots00\ \tile_{m,1}\ %
0\ldots01\ \tile_{m,2}\ %
\ldots
1\ldots11\ \tile_{m,2^m} \ .
\]
The first \FFT{} checks the solution similarly to the width $m$ problem, but needs to handle the large width when checking $\vrel$.
For this it will use a second \FFT.
For each column, the first \FFT{} nondeterministically selects all the tiles in this column
(verifying $\vrel$ on-the-fly).
The addresses of the selected tiles are output to the second \FFT{} which checks that they are equal.
The first \FFT{} goes through a non-deterministic number of such passes and the second \FFT{} enforces that there are $2^m$ of them (in column order).
To do this, the second \FFT{} checks that after the addresses of the $i$-th column are output by the first \FFT{}, then the addresses of the $(i+1)$-th column are output next.
Length $m$ binary numbers are checked similarly to width $m$ tiling problems.

With another \FFT\ we can increase the corridor width by another exponential.
For doubly-exponential numbers, we precede each tile with a binary sequence of exponential length.
For this we precede each bit with another binary sequence, this time of length $m$.
The first \FFT\ outputs queries to the second, which outputs queries to the third \FFT, each time removing one exponential.
With $(n+1)$ \FFT, we can encode tiling problems over an $n$-fold exponentially wide corridor.

The same proof strategy can be used for FTs+$\replaceall$.
The \FFT s used in the proof above proceed by running completely over the word and producing some output, then silently moving back to the beginning of the word.
An arbitrary number of passes are made in this way.
We can simulate this behaviour using \FT s and $\replaceall$.

\label{sec:two-way-simulation}
To simulate
$y := \Transducer(x)$
for a \FFT\ $\Transducer$ making an arbitrary number of passes over the contents of a variable $x$, as above, we use fresh variables $x_1$ and $x_2$, and an automaton $\Aut_a$ recognising $(a \passsep)^\ast$ for some arbitrary character $a$ and delimiter $\passsep$ not used elsewhere.
With these we use the constraint
\[
    \text{\ASSERT{$x_1 \in \Aut_a$}};
    x_2 := \replaceall_a(x_1, x);
    y := \Transducer'(x_2)
\]
where $\Transducer'$ simulates $\Transducer$ in the forwards direction, and simulates (simply by changing state) a silent return to the beginning of the word when reading $\passsep$.
It can be seen that $x_2$ contains an arbitrary number of copies of $x$, separated by $\passsep$, hence $\Transducer'$ simulates $\Transducer$.

It was stated in Section~\ref{sec:intro} that the \nonelementary{} complexity will be caused by repeated product constructions. These product constructions are not obvious here, but are hidden in the treatment of $\replaceall_a$. This treatment is elaborated on in Section~\ref{sec:replaceall-explanation}.
The key point is that to show $\replaceall_a$ satisfies \reginvrel\ one needs to produce a constraint on $x$ that is actually the product of several automata.





\section{More ``Tractable'' Fragments}\label{sec-frag}

In this section, we show that the \nonelementary{}  lower-bound of the preceding section should not be read too pessimistically. As we demonstrate in this section, the complexity of the path feasibility problem can be dramatically reduced (\expspace-complete) for the following two fragments,
\begin{itemize}
\item \strlinefft{}, where \FFT{}s in $\strline$ are restricted to be \emph{one-way and functional},
\item \strlineconcat{}, where the \emph{replacement} parameter of the $\replaceall$ function is restricted to be a string \emph{constant}, and \FFT{}s  are restricted to be \emph{one-way} (but \emph{not} necessarily functional).
\end{itemize}
These two fragments represent the most practical usage of string functions. 
In particular, instead of very general two-way transducers, one-way transducers are commonly used to model, for instance, browser transductions \cite{LB16}.




\subsection{The fragment \strlinefft}
\label{sec:implemented-alg}

The main result of this subsection is stated in the following theorem.

\begin{theorem}\label{thm-fft}
	The path feasibility of \strlinefft{} is $\expspace$-complete.
\end{theorem}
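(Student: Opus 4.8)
The plan is to prove the two inequalities separately; the \expspace\ upper bound is the substantial part and the hardness is inherited.

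\textbf{Upper bound (\expspace\ membership).} I would run the backward nondeterministic procedure underlying Theorem~\ref{th:gen}, but maintain for every variable a \emph{conjunction} (a finite set) of regular constraints and never form their product until the very end. The single observation that makes this work is that \emph{every} operation of \strlinefft{} --- $\concat$, $\replaceall_e$, $\reverse$, and one-way functional transducers --- denotes a (partial) \emph{function}, and pre-images of functions commute with intersection:
\[
\Pre_{R_f}\Big(\textstyle\bigcap_i \Lang(\Aut_i)\Big) \;=\; \textstyle\bigcap_i \Pre_{R_f}\big(\Lang(\Aut_i)\big),
\]
with the evident componentwise analogue for $\arity$-ary $f$ read as recognisable relations. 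Consequently, when eliminating an assignment $y := f(\vec x)$ whose accumulated constraint set on $y$ is $\{\Aut_1,\dots,\Aut_n\}$, I may push the pre-image through each $\Aut_i$ \emph{separately} and then, nondeterministically guessing one disjunct of each resulting recognisable relation, deposit one new atom per $\Aut_i$ into each argument variable. Crucially, no product automaton on $y$ is ever constructed, so the blow-up at a step is purely \emph{multiplicative} on the size of a single atom.

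\textbf{Size bounds and counting.} I would record the per-operation pre-image bounds for a single \FA{} $\Aut = (Q,q_0,F,\delta)$: for $z:=x\concat y$ the pre-image is $\bigcup_{q\in Q}\big(\Aut(q_0,\{q\})\times\Aut(q,F)\big)$, i.e.\ at most $|\Aut|$ disjuncts with atoms of size $\le|\Aut|$; for $y:=\reverse(x)$ it is the single language $\Lang(\Aut^{\revsym})$ of size $|\Aut|$; for a one-way functional transducer $y:=\Transducer(x)$ it is the regular language built as in the proof of Proposition~\ref{prop-sem-cond}, of size $\le|\Transducer|\cdot|\Aut|$ (the construction stays one-way, so Proposition~\ref{prop-2nfa-nfa} and determinisation are \emph{not} needed); and for $z:=\replaceall_e(x,y)$ it is a recognisable relation with atoms of size at most $2^{\bigO(|e|^c)}\cdot|\Aut|$ by the pre-image construction of \cite{CCHLW18}. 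In every case an atom has size $\le\beta\cdot|\Aut|$ for a factor $\beta$ that is at most single-exponential in $|S|$ and depends only on $f$, never on the current constraint set. Writing $d$ for the number of assignments, each atom therefore undergoes at most $d$ multiplicative blow-ups, so every atom \FA{} ends up of size at most $\beta^{d}$ times a given assertion automaton, i.e.\ single-exponential in $|S|$; and each variable accumulates at most $(\arity+1)^{d}$ times the number of assertions many atoms, again single-exponential. At the end the procedure checks, for each input variable, nonemptiness of the intersection of its atom set; emptiness of a product of $N$ automata each of size $\le s$ is decidable in nondeterministic space $\bigO(N\log s)$, here (single-exponential)$\times$(polynomial) $=$ single-exponential. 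Hence the procedure runs in exponential space, and \nexpspace{}$=$\expspace{} by Savitch's theorem. Phrased through the generic analysis: because each $f$ is a function, its blow-up map $\ell(\cdot,k)$ is \emph{linear} in the automaton-size argument $k$, so its $d$-fold composition is single-exponential instead of a tower --- exactly the gain over the relational case responsible for the \nonelementary{} bound of Proposition~\ref{prop:nexpspace-hardness}.

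\textbf{Lower bound (\expspace\ hardness).} This I expect to inherit rather than reprove: \strlinefft{} already contains the straight-line fragment over concatenation, regular constraints, and one-way \emph{functional} transducers, for which \expspace-hardness can be established (building on \cite{LB16}) by a standard reduction from an \expspace-complete problem --- e.g.\ acceptance of an exponential-space Turing machine, or tiling of an exponential-width corridor --- in which an exponential-range position counter is built by repeated concatenation/doubling, while successive configurations are compared by \emph{deterministic} (hence functional) transducers and initial/final conditions by regular constraints. Since all gadgets are functional, the reduction lands inside \strlinefft{}.

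\textbf{Main obstacle.} The delicate point is the soundness of keeping constraints as un-multiplied conjunctions and of pushing pre-images coordinatewise through functional operations, i.e.\ the commutation of pre-image with intersection together with the recognisable-relation bookkeeping for the binary operations $\concat$ and $\replaceall_e$. Ensuring that the $\replaceall_e$ pre-image respects the single-atom \emph{multiplicative} bound --- rather than secretly forcing a product of several automata, as warned in the discussion after Proposition~\ref{prop:nexpspace-hardness} --- is where the argument must be most careful.
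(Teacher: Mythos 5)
Your proposal matches the paper's proof essentially step for step: the upper bound rests on the same distributivity observation (Fact~\ref{fact-dist}) that lets pre-images be pushed through each regular constraint separately without ever forming a product on the constrained variable, the same per-operation multiplicative size bounds for $\concat$, $\reverse$, \FunFT{} and $\replaceall_e$, the same single-exponential count of accumulated atoms per variable, a final intersection-nonemptiness check in nondeterministic exponential space, and Savitch's theorem; the lower bound is likewise inherited from the \expspace{}-hardness of concatenation plus functional transducers in \cite{LB16}. The one device the paper supplies that you leave as your acknowledged ``main obstacle'' is the conjunctive-\FA{} representation (a shared transition graph paired with a set of state pairs as acceptance condition): the $\replaceall_e$ pre-image constrains the \emph{replacement} argument by an intersection of up to quadratically many sliced copies of the result automaton, and only by keeping that intersection implicit in a conjunctive acceptance condition does the atom size stay multiplicatively bounded at each step --- materialising it as an explicit product \FA{} would reintroduce an exponential blow-up per assignment and hence the non-elementary tower.
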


To show the upper bound of Theorem~\ref{thm-fft}, we  will refine the (generic) decision procedure in Section~\ref{sec-dec}
in conjunction with a careful complexity analysis.
%
The crucial idea is to \emph{avoid the product construction} before each pre-image computation
in the algorithm given in the proof of 
Theorem~\ref{th:gen}. This is possible now since all string operations in \strlinefft{} are (partial) functions, and regular constraints are distributive with respect to the pre-image of string (partial) functions.

\begin{fact}\label{fact-dist}
For every string (partial) function $f: (\ialphabet^*)^\arity \to \ialphabet^*$ and regular languages $L_1$ and $L_2$, it holds that $\Pre_{R_f}(L_1 \cap L_2) =
\Pre_{R_f}(L_1) \cap \Pre_{R_f}(L_2)$.
\end{fact}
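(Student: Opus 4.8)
The plan is to unfold the definition of the pre-image and exploit that $f$ is a (partial) \emph{function}, so that for each tuple $\vec{w} = (w_1,\ldots,w_\arity)$ the set $f(\vec{w})$ is either empty (when $f$ is undefined at $\vec{w}$) or a singleton. Under this observation the existential quantifier ``$\exists w.\ w \in f(\vec{w}) \wedge w \in L$'' in the definition of $\Pre_{R_f}$ collapses, and we may rewrite
\[
\Pre_{R_f}(L) = \left\{\vec{w} \in (\ialphabet^*)^\arity \mid f(\vec{w}) \text{ is defined and } f(\vec{w}) \in L\right\}
\]
for every language $L$. This reformulation is the crux: it turns membership in the pre-image into a statement about a \emph{single} output string rather than about the existence of \emph{some} output string in $L$.

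With this in hand, the equality follows by a short logical equivalence, which I would present as two inclusions. The inclusion $\Pre_{R_f}(L_1 \cap L_2) \subseteq \Pre_{R_f}(L_1) \cap \Pre_{R_f}(L_2)$ holds for arbitrary relations by monotonicity of the pre-image and requires no hypothesis on $f$: if $f(\vec{w})$ is defined and lies in $L_1 \cap L_2$, then it lies in each of $L_1$ and $L_2$. The reverse inclusion is where functionality is essential. Given $\vec{w} \in \Pre_{R_f}(L_1) \cap \Pre_{R_f}(L_2)$, the reformulation yields that $f(\vec{w})$ is defined and that both $f(\vec{w}) \in L_1$ and $f(\vec{w}) \in L_2$ hold; since $f(\vec{w})$ denotes one and the same string in both conjuncts, we conclude $f(\vec{w}) \in L_1 \cap L_2$, i.e.\ $\vec{w} \in \Pre_{R_f}(L_1 \cap L_2)$.

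The main (and indeed only) subtle point is to pinpoint exactly where determinism is used, and to observe that it genuinely cannot be dropped: for a nondeterministic $f$ the witness for membership in $L_1$ and the witness for membership in $L_2$ may be \emph{distinct} output strings, so $f(\vec{w})$ can meet both $L_1$ and $L_2$ without meeting $L_1 \cap L_2$. This is precisely the phenomenon that forces the product construction in the general setting of Theorem~\ref{th:gen}, and Fact~\ref{fact-dist} is exactly the statement that, for (partial) functions, this product construction can be avoided before each pre-image computation.
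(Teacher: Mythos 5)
Your proof is correct and follows essentially the same route as the paper: the paper also notes that the forward inclusion is immediate and establishes the reverse one by taking witnesses $v_1 \in L_1$ and $v_2 \in L_2$ with $(\vec{u},v_1), (\vec{u},v_2) \in R_f$ and invoking functionality to conclude $v_1 = v_2 \in L_1 \cap L_2$. Your preliminary reformulation of $\Pre_{R_f}(L)$ as ``$f(\vec{w})$ is defined and $f(\vec{w}) \in L$'' is just a packaging of that same use of functionality, and your closing remark about why nondeterminism breaks the equality matches the paper's Example~\ref{ex:nondistr}.
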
 

To see this, suppose $\vec{u} \in \Pre_{R_f}(L_1) \cap \Pre_{R_f}(L_2)$. Then $\vec{u} \in \Pre_{R_f}(L_1)$ and $\vec{u} \in \Pre_{R_f}(L_2)$. Therefore, there are $v_1\in L_1, v_2\in L_2$ such that $(\vec{u}, v_1) \in R_f$ and $(\vec{u}, v_2) \in R_f$. Since $f$ is a (partial) function, it follows that $v_1 = v_2 \in L_1\cap L_2$, thus $\vec{u} \in \Pre_{R_f}(L_1 \cap L_2)$.  This equality does \emph{not} hold in general if $f$ is \emph{not} functional, as shown by the following example.

\begin{example} \label{ex:nondistr}
Let $\mbox{-} \in \ialphabet$ and $f: \ialphabet^* \rightarrow 2^{\ialphabet^*}$ be the nondeterministic function mentioned in the introduction (see Figure~\ref{fig:pair}) that nondeterministically outputs a substring delimited by \verb+-+. Moreover, let $a, b$ be two distinct letters from $\ialphabet \setminus \{\mbox{-}\}$, $L_1 = a (\ialphabet \setminus \{\mbox{-}\})^*$, and $L_2= (\ialphabet \setminus \{\mbox{-}\})^* b$. Then 
$$
\begin{array}{l c l}
\Pre_{R_f}(L_1) \cap \Pre_{R_f}(L_2) & = & \left(a (\ialphabet \setminus \{\mbox{-}\})^* \cup a (\ialphabet \setminus \{\mbox{-}\})^*\mbox{-} \ialphabet^* \cup  \ialphabet^*  \mbox{-} a (\ialphabet \setminus \{\mbox{-}\})^* \cup \ialphabet^* \mbox{-} a (\ialphabet \setminus \{\mbox{-}\})^* \mbox{-} \ialphabet^* \right)\ \cap \\ 
&& \left((\ialphabet \setminus \{\mbox{-}\})^* b \cup (\ialphabet \setminus \{\mbox{-}\})^* b  \mbox{-} \ialphabet^* \cup \ialphabet^*  \mbox{-} (\ialphabet \setminus \{\mbox{-}\})^* b  \cup \ialphabet^* \mbox{-} (\ialphabet \setminus \{\mbox{-}\})^* b \mbox{-} \ialphabet^*\right),
\end{array}
$$ 
which is different from 
$$\Pre_{R_f}(L_1 \cap L_2)=a (\ialphabet \setminus \{\mbox{-}\})^* b \cup a (\ialphabet \setminus \{\mbox{-}\})^* b \mbox{-} \ialphabet^* \cup \ialphabet^*  \mbox{-} a (\ialphabet \setminus \{\mbox{-}\})^* b  \cup \ialphabet^* \mbox{-} a (\ialphabet \setminus \{\mbox{-}\})^* b \mbox{-} \ialphabet^*.$$
For instance, $a \mbox{-} b \in \Pre_{R_f}(L_1) \cap \Pre_{R_f}(L_2) $, but $a \mbox{-} b \not \in \Pre_{R_f}(L_1 \cap L_2)$. \qed
\end{example}

The distributivity of the pre-image of string functions means that, for each $y := f(\vec{x})$ and $y\in \Aut$ with $\Lang(\Aut) = \Lang(\Aut_1) \cap \cdots \cap \Lang(\Aut_s)$, we can compute $\Pre_{R_f}(\Aut)$ by \emph{separately} computing the pre-image $\Pre_{R_f}(\Aut_i)$ for each assertion $y \in \Aut_i$, i.e., no product construction is performed.

Moreover, to obtain the \expspace{} upper bound, we need 
to carefully examine the complexity of the pre-image computation for each string operation.
%
The pre-image computation of the $\replaceall$ function has recently been addressed in \cite{CCHLW18}. In the following, we 
tackle other string functions in \strlinefft{}.
To this end, we utilise a succinct representation of the conjunction of a special class of regular languages, called \emph{conjunctive \FA{}s}.
\begin{definition}[Conjunctive \FA]
A conjunctive \FA{} is a pair $(\Aut, \conacc)$, where $\Aut=(\controls, \transrel)$ is a transition graph and $\conacc \subseteq \controls \times \controls$ is called a \emph{conjunctive acceptance condition}.  The language defined by $(\Aut, \conacc)$, denoted by $\Lang(\Aut, \conacc)$, is $\bigcap \limits_{(q,q') \in \controls} \Lang((\controls, q, \{q'\}, \transrel))$. The size of $(\Aut, \conacc)$, is defined as $|\controls|$.
\end{definition}
Note that the conjunctive \FA{} $(\Aut, \conacc)$ is exponentially more succinct than the product automaton of all \FA{}s $(\controls, q, \{q'\}, \transrel)$ for $(q,q') \in \conacc$. For a string operation $f: (\ialphabet^*)^r \rightarrow \ialphabet^*$, we use $\Pre_{R_f}(\Aut, \conacc)$ to denote the pre-image of $\Lang(\Aut, \conacc)$ under $R_f$. A \emph{conjunctive representation } of $\Pre_{R_f}(\Aut,\conacc)$ is a collection $((\Aut^{(1)}_j, \conacc^{(1)}_j), \ldots, (\Aut^{(r)}_j, \conacc^{(r)}_j))_{1 \le j \le n}$, where each atom $(\Aut^{(i)}_j, \conacc^{(i)}_j)$ is a conjunctive \FA{}.

Based on conjunctive \FA{}s and the fact that the product construction of regular constraints can be avoided, we show the $\expspace$ upper bound for \strlinefft{}. 

\begin{proposition}\label{prop-fft-upper}
	The path feasibility of \strlinefft{} is in $\expspace$.
\end{proposition}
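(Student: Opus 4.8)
My plan is to refine the generic backward procedure of Theorem~\ref{th:gen} so that every intermediate regular constraint is kept as a \emph{conjunctive} \FA{} and, crucially, so that no product of regular constraints is ever materialised before a pre-image is taken. Processing the assignments of $S$ from the last to the first, I maintain for each variable $x$ a set $\cE(x)$ of conjunctive \FA{}s, read conjunctively, so that $x$ is constrained to lie in $\bigcap_{(\Aut,\conacc)\in\cE(x)}\Lang(\Aut,\conacc)$; initially $\cE(x)$ holds the \FA{}s of the assertions on $x$. When the current last assignment is $y:=f(\vec x)$, for each $(\Aut,\conacc)\in\cE(y)$ I compute a conjunctive representation of $\Pre_{R_f}(\Aut,\conacc)$, nondeterministically pick one disjunct $\bigl((\Aut_1,\conacc_1),\dots,(\Aut_r,\conacc_r)\bigr)$, add $(\Aut_{j},\conacc_{j})$ to $\cE(x_{j})$ for each argument position $j$, and delete the assignment. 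The key enabling observation is Fact~\ref{fact-dist}: since every operation of \strlinefft{} is a partial \emph{function}, the pre-image distributes over intersection, so I may treat each conjunctive \FA{} on $y$ \emph{independently} and never build the exponential product $\bigcap_i L_i$ that drives the non-elementary blow-up in the relational case.

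First I would prove a per-operation pre-image lemma showing that, for each of the four operations, $\Pre_{R_f}(\Aut,\conacc)$ has a conjunctive representation whose atoms have tightly controlled size. For $\reverse$, distributing over $\conacc$ reduces to reversing each conjunct, giving the single conjunctive \FA{} obtained from $\Aut$ by reversing all edges and replacing $\conacc$ with $\{(q',q)\mid (q,q')\in\conacc\}$, of size linear in $|\Aut|$. For a one-way functional transducer $\Transducer$ (which we may assume has a single final state), distributing over $\conacc$ and running the standard synchronised product of $\Transducer$ against $\Aut$ yields one conjunctive \FA{} over the product state set whose acceptance condition pairs, for each $(q,q')\in\conacc$, the initial state over $q$ with the final state over $q'$; its size is $|\Transducer|\cdot|\Aut|$. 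For $z:=x\concat y$, I would use $\{(u,v)\mid uv\in\Lang(\Aut(p,\{p'\}))\}=\bigcup_{q}\Lang(\Aut(p,\{q\}))\times\Lang(\Aut(q,\{p'\}))$ together with distributivity over $\conacc$; guessing one split state $h(p,p')$ per pair and using $\bigcap_i(A_i\times B_i)=(\bigcap_iA_i)\times(\bigcap_iB_i)$, the chosen disjunct is a single product whose factors are the conjunctive \FA{}s $(\Aut,\{(p,h(p,p'))\mid (p,p')\in\conacc\})$ for $x$ and $(\Aut,\{(h(p,p'),p')\mid (p,p')\in\conacc\})$ for $y$, each of size $|\Aut|$. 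For $\replaceall_e$ I would reuse the pre-image construction of~\cite{CCHLW18} already invoked in Proposition~\ref{prop-sem-cond}; the only size effect is a factor $2^{\bigO(|e|^{c})}$ (for some constant $c$) from compiling $e$, and the product it hides is folded into a conjunctive acceptance condition rather than expanded.

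The accounting then shows everything stays singly exponential. Let $n=|S|$; the numbers of assignments and assertions, and the maximal sizes of the \FA{}s in assertions and of the operation representations, are all at most $n$, and all arities are at most $2$. Each assignment enlarges every $\cE(x)$ by a factor of at most $2+1=3$, so $|\cE(x)|\le 3^{n}\cdot n$, singly exponential in $n$. By the lemma, each pre-image multiplies atom size by at most $2^{\bigO(n^{c})}$, so after the at most $n$ assignments every atom occurring in any $\cE(x)$ has size at most $2^{\bigO(n^{c+1})}$, again singly exponential; the point is that this multiplicative recurrence, unlike the product construction needed for relations, never iterates an exponential and hence never produces a tower.

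Finally, for each input variable $x$ I combine the members of $\cE(x)$ into a single conjunctive \FA{} $(\Aut_x,\conacc_x)$ by taking the disjoint union of transition graphs and the union of acceptance conditions, and test nonemptiness of $\Lang(\Aut_x,\conacc_x)=\bigcap_{(q,q')\in\conacc_x}\Lang(\Aut_x(q,\{q'\}))$. This is an intersection-nonemptiness test over the $|\conacc_x|$ atoms, each of the singly-exponential size above; a state of the associated product automaton is a tuple indexed by $\conacc_x$, hence of singly-exponential description length, and the reachability of an accepting tuple can be decided on the fly in nondeterministic exponential space. Since the nondeterministic choices of disjuncts also cost only exponential space, the whole procedure runs in nondeterministic exponential space, which equals $\expspace$ by Savitch's theorem. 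I expect the main obstacle to be the $\replaceall_e$ step: one must check, following~\cite{CCHLW18} and the treatment of $\replaceall_a$ underlying the lower bound, that its pre-image is genuinely expressible as a conjunctive representation whose hidden product is absorbed by a conjunctive acceptance condition, so that the atom size remains singly, rather than doubly, exponential.
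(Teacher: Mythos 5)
Your proposal is correct and follows essentially the same route as the paper's proof: conjunctive \FA{}s as succinct representations, Fact~\ref{fact-dist} to process each constraint on $y$ independently and avoid materialising products, a per-operation pre-image lemma with the same size bounds (linear for $\reverse$, $|\Transducer|\cdot|\Aut|$ for \FunFT{}, a $2^{\bigO(|e|^{c})}$ factor for $\replaceall_e$), a singly-exponential count of constraints per variable, and an on-the-fly nonemptiness check in nondeterministic exponential space followed by Savitch's theorem. The only cosmetic differences are that the paper counts generated assertions via paths in the dependency graph rather than a per-step factor of $r+1$, and it folds concatenation into $\replaceall$ instead of treating it directly as you do.
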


The proof of Proposition~\ref{prop-fft-upper} can be found in \shortlong{the full version of this article}{Appendix~\ref{app-fft-upper} in the supplementary material}.
%
\OMIT{
We remark that, as mentioned the concatenation function $\concat$ can be encoded by the $\replaceall$ function, so we shall not discuss  $\concat$ separately in the proof.

The following lemma is crucial for showing the \expspace{} upper bound.  
\begin{lemma}\label{lem-prerec-comp}
Let $(\Aut, \conacc)$  be a conjunctive FA. Then for each string function  $f$ in \strlinefft{},
	there is an algorithm that runs in $(\ell_f(|f|, |(\Aut,\conacc)|))^{c_0}$ space (where $c_0$ is a constant) which enumerates
	each disjunct of a conjunctive representation of $\Pre_{R_f}((\Aut, \conacc))$, whose atom size is bounded by $\ell_f(|f|, |(\Aut,\conacc)|)$, where
%
\begin{itemize}
\item if $f$ is $\replaceall_e$ for a regular expression $e$, then $|f| = |e|$ and $\ell_f(i, j)= 2^{c_1 i^{c_2}} j$ for some constants $c_1,c_2$,
\item if $f$ is $\reverse$, then $|f| = 1$ and $\ell_f(i, j) = j$,
\item if $f$ is defined by an \FunFT{} $\Transducer$, then $|f| = |\Transducer|$ and $\ell_f(i, j)= i j$.
\end{itemize}
\end{lemma}


%
%
%


\begin{proposition}
	The path feasibility of \strlinefft{} is in $\expspace$.
\end{proposition}

\begin{proof}
Let $S$ be a program in \strlinefft{}. For technical convenience, we consider the \emph{dependency graph} of $S$, denoted by $G_S=(V_S, E_S)$, where $V_S$ is the set of string variables in $S$, and $E_S$ comprises the edges $(y, x_j)$ for each assignment $y := f(x_1, \ldots, x_r)$ in $S$.

Recall that the decision procedure in the proof of Theorem~\ref{th:gen} works by repeatedly removing the last assignment, say $y := f(\vec{x})$, and generating new assertions involving $\vec{x}$ from the assertions of $y$.
We adapt that decision procedure as follows:
\begin{itemize}
\item Replace \FA{}s with conjunctive \FA{}s and use the conjunctive representations of the pre-images of string operations.
\item Before removing the assignments, a preprocessing is carried out for $S$ as follows: For each assertion $\ASSERT{R(\vec{x})}$ with $\vec{x} = (x_1,\ldots, x_\arity)$ in $S$, nondeterministically guess a disjunct of the conjunctive representation of $R$, say $((\Aut_{1}, \conacc_{1}), \ldots, (\Aut_{\arity}, \conacc_{\arity}))$, and replace $\ASSERT{R(\vec{x})}$ with the sequence of assertions $\ASSERT{x_1 \in (\Aut_1, \conacc_1)}; \ldots; \ASSERT{x_\arity \in (\Aut_\arity, \conacc_\arity)}$. Note that after the preprocessing, each assertion is of the form $\ASSERT{y \in (\Aut, \conacc)}$ for a string variable $y$ and a conjunctive \FA{} $(\Aut, \conacc)$. Let $S_0$ be the resulting program  after preprocessing  $S$.

\item When removing each assignment $y:=f(\vec{x})$ with $\vec{x} = (x_1, \ldots, x_\arity)$, for each conjunctive \FA{} $(\Aut, \conacc)\in \sigma$ (where $\sigma$ is the collective constraints for $y$), nondeterministically guess one disjunct of the pre-image of $f$ under $(\Aut, \conacc)$ (NB.\ here we neither compute the product of the conjunctive \FA{}s from $\sigma$, nor compute an explicit representation of the pre-image), say $((\Aut_{1}, \conacc_{1}), \ldots, (\Aut_{\arity}, \conacc_{\arity}))$, and insert the sequence of assertions $\ASSERT{x_1 \in (\Aut_1, \conacc_1)}; \ldots; \ASSERT{x_\arity \in (\Aut_\arity, \conacc_\arity)}$ to the program.
\end{itemize}
We then show that the resulting (nondeterministic) decision procedure \emph{requires only exponential space}, which 
implies the \expspace{} upper-bound via Savitch's theorem.

Let $S'$ be the program obtained after removing $y:=f(\vec{x})$ and all assertions with conditions in $\rho$,   and $\sigma$ be the set of all conjunctive \FA{}s in $\rho$. Then for each $(\Aut, \conacc) \in \sigma$, a disjunct of the conjunctive representation of $\Pre_{R_f}(\Aut, \conacc)$, say $((\Aut_{1}, \conacc_{1}), \ldots, (\Aut_{\arity}, \conacc_{\arity}))$, is guessed, moreover, for each $j \in [\arity]$, an assertion $\ASSERT{x_j \in (\Aut_{j}, \conacc_{j})}$ is added. Let $S''$ be the resulting program. We say that the assertion $y \in (\Aut, \conacc)$ in $S'$ \emph{generates} the assertion $x_j \in (\Aut_{j}, \conacc_{j})$ in $S''$. One can easily extend this single-step generation relation to multiple steps by considering its transitive closure.

Let $S_1$ be the resulting program after all the assignments are removed. Namely, $S_1$ contains only assertions for input variables.
By induction on the number of removed assignments, we can show that for each input variable $y$ in $S$, each assertion $x \in (\Aut, \conacc)$ in $S_0$, and each path $\pi$ from $x$ to $y$ in $G_S$,  the assertion $x \in (\Aut, \conacc)$ generates \emph{exactly one} assertion $y \in (\Aut', \conacc')$ in $S_1$. Since for each pair of variables $(x,y)$ in $G_S$, there are at most exponentially many paths from $x$ to $y$,  $S_1$ contains at most exponentially many assertions for each input variable. Moreover, according to Lemma~\ref{lem-prerec-comp}, for each assertion $y \in (\Aut', \conacc')$ in $S_1$, suppose that $y \in (\Aut', \conacc')$ is generated by some assertion $x \in (\Aut, \conacc)$ in $S_0$, then $|(\Aut', \conacc')|$ is at most exponential in $|(\Aut, \conacc)|$. Therefore, we conclude that for each input variable $y$, $S_1$ contains at most exponentially many assertions for $y$, where each of them is of at most exponential size.
It follows that the product \FA{} of all the assertions for each input variable $y$ in $S_1$ is of doubly exponential size.

Since the last step of the decision procedure is to decide the nonemptiness of the intersection of all the assertions for each input variable $y$ and  nonemptiness of \FA{}s can be solved in nondeterministic logarithmic space, we deduce that the last step of the decision procedure can be done in nondeterministic exponential space.  We conclude that the aforementioned decision procedure is in nondeterministic exponential space.
\end{proof}
}
%
For the \expspace{} lower bound, 
it has been shown in \cite{LB16} that $\strline$ with $\FunFT$ and $\concat$ is  $\expspace$-hard. (Note that all transducers used in the reduction therein are functional.) To complement this result, we show that $\strline$ with $\replaceall$ solely is already $\expspace$-hard. This result is interesting in its own right. In particular, it solves an open problem left over in \cite{CCHLW18}.

\begin{proposition} \label{prop:expspace-lower}
	The path feasibility problem for $\strline[\replaceall]$ is $\expspace$-hard.
\end{proposition}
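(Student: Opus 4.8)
The plan is to reduce from an \expspace-complete tiling problem: solving a corridor of width $2^{\linlen}$, equivalently acceptance of a $2^{\linlen}$-space-bounded Turing machine. This is the single-exponential case of the hierarchy already set up in Section~\ref{sec:tiling-def}, so I would reuse that machinery. Given tiles $\tiles$ and relations $\hrel,\vrel$ together with $\inittile,\fintile$, a solution is a sequence of rows of width $2^{\linlen}$, separated by $\numsep$. I would encode a solution as a single guessed string in which every tile is preceded by its $\linlen$-bit column address (written in binary). The initial/final tile conditions, the horizontal relation $\hrel$, and the correctness of the address counter (it starts at $0^{\linlen}$, increments by one between consecutive tiles, and resets after each $\numsep$) are all \emph{local}, hence regular; these I would enforce directly by a single assertion $\ASSERT{w \in \Aut}$ with $\Aut$ of size polynomial in $\linlen$ and $|\tiles|$, using the purely regular part of the language (the \regmondec{} condition is trivial here).

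The difficulty, as always, is the vertical relation $\vrel$: cells sharing a column address in consecutive rows lie $\Theta(2^{\linlen})$ symbols apart, which no \FA{} can bridge, and here neither a transducer nor genuine concatenation is available. This is exactly where the sole remaining construct, $\replaceall$, must supply the exponential power, and I would exploit the two features of $\replaceall_e(x,y)$ that the rest of the paper isolates. First, $\replaceall$ can \emph{duplicate} material, so that iterating $\bigO(\linlen)$ applications --- using the encoding noted after \eqref{eq:SL}, namely $z = x \concat y \equiv z' = \replaceall_a(ab,x) \wedge z = \replaceall_b(z',y)$, to simulate the missing concatenation --- builds addressed templates of length $2^{\linlen}$. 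Second, when a single replacement variable is substituted into \emph{all} occurrences of the pattern at once over such an exponentially long subject, a subsequent regular assertion on the output forces that variable to meet, simultaneously, the residual conditions read off at every occurrence. As spelled out in Section~\ref{sec:replaceall-explanation}, the preimage of a regular language under $\replaceall$ is a recognisable relation whose computation is a \emph{product} of automata --- the ``hidden product construction'' of \reginvrel{} --- and it is precisely this product, now ranging over all $2^{\linlen}$ column positions of a single row, that realises the $2^{\linlen}$ column comparisons in one shot. Concretely I would replace all address blocks by a shared fresh variable and impose a regular output constraint whose residual conditions at the occurrences encode, column by column, that the tile above and the tile one row below are $\vrel$-compatible; correctness of the addressing guarantees that ``same occurrence'' means ``same column''.

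The main obstacle I anticipate is engineering this $\replaceall$ gadget so that, with no real concatenation and no transducer, its leftmost--longest matching discipline together with the shared replacement variable captures \emph{exactly} the intended $2^{\linlen}$ vertical comparisons, and then proving both directions: that every legal tiling yields a feasible path (one can choose the witness string and the replacement variable), and conversely that feasibility of the path forces the output constraint, hence all residual conditions, hence a correct tiling. A secondary nuisance is the pattern non-emptiness requirement and the boundary effects of the address counter (the first and last block of each row, and the $\numsep$ separators), which I would absorb with a handful of sentinel symbols so that counter increments and $\replaceall$ matches line up cleanly. Since the whole construction uses $\bigO(\linlen)$ assignments and automata of size polynomial in $\linlen + |\tiles|$, the reduction is polynomial and yields the claimed \expspace{} lower bound for $\strline[\replaceall]$, thereby settling the open question of \cite{CCHLW18}.
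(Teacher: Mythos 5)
Your setup (reduction from tiling on a $2^{\linlen}$-wide corridor, tiles prefixed by $\linlen$-bit column addresses, all local conditions pushed into polynomial-size regular assertions, the vertical relation $\vrel$ as the sole hard part) coincides with the paper's. The gap is in your central gadget for $\vrel$. You propose to replace all address blocks by a single shared fresh variable and let the ``hidden product'' of residual conditions at the $2^{\linlen}$ occurrences perform the $2^{\linlen}$ column comparisons in one shot. This does not work, for two reasons. First, the pre-image of a regular language $\Lang(\Aut)$ under $\replaceall$ constrains the replacement variable only through an intersection of languages $\Lang(\Aut(q,\{q'\}))$ indexed by state pairs of $\Aut$ --- at most $|\Aut|^2$ \emph{distinct} conditions, however many occurrences there are. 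An exponential number of occurrences therefore does not yield an exponential number of genuinely different checks on the replacement variable; your ``product ranging over all $2^{\linlen}$ column positions'' conflates occurrence count with the number of distinct residuals. Second, and more fundamentally, the $2^{\linlen}$ vertical checks are different constraints on the \emph{subject} (each inspects a different column of the tiling), not on the replacement variable; after substituting the same $v$ for every address block the columns become indistinguishable to the output automaton, and no polynomial-size \FA{} can remember an $\linlen$-bit column index to decide which column it is supposed to verify. This index-matching problem is exactly the obstacle the reduction has to overcome, and your proposal does not overcome it.

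The paper's proof supplies the missing mechanism: $\linlen$ rounds, each consisting of two $\replaceall$ steps with \emph{constant single-character} replacements ($\repl{i}{\bit}{\bit'}$ marking the bit symbols $\nbit{i}{0},\nbit{i}{1}$) followed by a concatenation $x_i := y_i \concat z_i$ (itself expressible via $\replaceall$). This produces in $x_{\linlen}$ exactly $2^{\linlen}$ copies of the tiling, in which copy $c$ has precisely the address-$c$ blocks fully $\$$-marked; a single polynomial-size automaton $\Aut_{\vrel}$ then checks, purely locally, the column at the marked address of each copy. Working the pre-image computation backwards, each concatenation elimination splits one language into two, so the single assertion on $x_{\linlen}$ unravels into $2^{\linlen}$ pairwise distinct regular constraints $L_{\bit_1\cdots\bit_{\linlen}}$ on $x_1$ --- one per column. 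The exponential multiplicity thus comes from iterated doubling and bit-marking, not from the occurrence-product of a single variable-replacement $\replaceall$. Without this (or an equivalent) device your reduction does not go through.
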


The proof of the above proposition is by a reduction from a tiling problem over an exponentially wide corridor (see Section~\ref{sec:tiling-def} for a definition).
We give the full proof in \shortlong{the full version of this article}{Appendix~\ref{sec:expspace-hardness-appendix} in the supplementary material}.
  
\OMIT
{
The proof of the above proposition is by a reduction from a tiling problem over an exponentially wide corridor (see Section~\ref{sec:tiling-def} for a definition).
We give the full proof in Appendix~\ref{sec:expspace-hardness-appendix}.
The reduction builds a constraint that is satisfiable only if a given variable $x_0$ encodes a solution to a tiling problem.
The solution will be encoded in the following form, for some $\tileheight$.
Note, we use different symbols for each bit position of the binary encoding.
We use
\[
    \resetchar
    \rowdelim \nbit{1}{0} \ldots \nbit{n}{0} \tile^1_1
              \nbit{1}{0} \ldots \nbit{n}{1} \tile^1_2
              \ldots
              \nbit{1}{1} \ldots \nbit{n}{1} \tile^1_\tilewidth
    \rowdelim \ldots
    \rowdelim \nbit{1}{0} \ldots \nbit{n}{0} \tile^\tileheight_1
              \ldots
              \nbit{1}{1} \ldots \nbit{n}{1} \tile^\tileheight_\tilewidth
    \rowdelim
    \resetchar
\]
where $\tile^i_j$ are tiles, preceded by a binary encoding of the column position of the tile.
The $\rowdelim$ character delimits each row and $\resetchar$ delimits the ends of the encoding.
It is a standard exercise to express that a string has the above form as the intersection of a polynomial number of polynomially sized automata.
It is also straight-forward to check the horizontal tiling relation using a polynomially sized automaton (each contiguous pair of tiles needs to appear in the horizontal relation $\hrel$).

The difficulty lies in asserting that the vertical tiling relation $\vrel$ is respected.
For a given position
$\bit_1 \ldots \bit_n$,
it is easy to construct an automaton that checks the tiles in the column labelled
$\bit_1 \ldots \bit_n$
respects $\vrel$.
The crux of our reduction is showing that, using only concatenation and replaceall, we can transform a single automaton constraint into an exponential number of checks, one for each position
$\bit_1 \ldots \bit_n$.
Recall, concatenation can be expressed using $\replaceall$.

This process is explained in full in the appendix.
Here, we give a simple example of how we can obtain the required checks. \label{sec:simple-expspace-example}
Let $n = 2$ and $V=\{(t_1, t_2), (t_2, t_1)\}$.
\OMIT{
\begin{center}
\begin{tikzpicture}[node distance=2.2cm,on grid,auto]
   \node (q_0_0)   {};
   \node[state] (q_0) [right = of q_0_0]  {$(q_0, t_1)$};
   \node[state] (q_1) [right=of q_0] {$(q_1, t_1)$};
   \node[state] (q_2) [right=of q_1] {$(q_2,t_1)$};
   \node[state] (q_0_2) [below=of q_2] {$(q_0,t_2)$};   
   \node[state] (q_1_2) [left=of q_0_2] {$(q_1,t_2)$};   
   \node[state] (q_2_2) [left=of q_1_2] {$(q_2,t_2)$};   
    \path[->]
    (q_0_0) edge (q_0)
    (q_0) edge [bend left]  node [above] {$\simplerepl{1}{0}$} (q_1)
          edge [bend right] node [below] {$\simplerepl{1}{1}$} (q_1)
    (q_1) edge [bend left]  node [above] {$\simplerepl{2}{0}$} (q_2)
          edge [bend right] node [below] {$\simplerepl{2}{1}$} (q_2)
    (q_2) edge  node[right]{$t_2$} (q_0_2)
    (q_0_2) edge [bend right]  node [above] {$\simplerepl{1}{0}$} (q_1_2)
          edge [bend left] node [below] {$\simplerepl{1}{1}$} (q_1_2)
    (q_1_2) edge [bend right]  node [above] {$\simplerepl{2}{0}$} (q_2_2)
          edge [bend left] node [below] {$\simplerepl{2}{1}$} (q_2_2)
    (q_2_2) edge  node[left]{$t_1$} (q_0);
\end{tikzpicture}
\end{center}
}

Consider the constraint (which we will explain in the sequel)
\[
    \begin{array}{c}
        y_1 := \replaceall_{\nbit{1}{0}}(x_0, \simplerepl{1}{0});
        \ \ %
        z_1 := \replaceall_{\nbit{1}{1}}(x_0, \simplerepl{1}{1});
        \\
        x_1 := y_1 \concat z_1;
        \\
        y_2 := \replaceall_{\nbit{2}{0}}(x_1, \simplerepl{2}{0});
        \ \ %
        z_2 := \replaceall_{\nbit{2}{1}}(x_1, \simplerepl{2}{1});
        \\
        x_2 := y_2 \concat z_2;\\
        x_2 \in \Aut_\vrel
    \end{array}
\]
where $\Aut_\vrel$ is the automaton below.
The purpose of this automaton will become clear later in the description.
\begin{figure*}[htbp]
\includegraphics[scale=0.7]{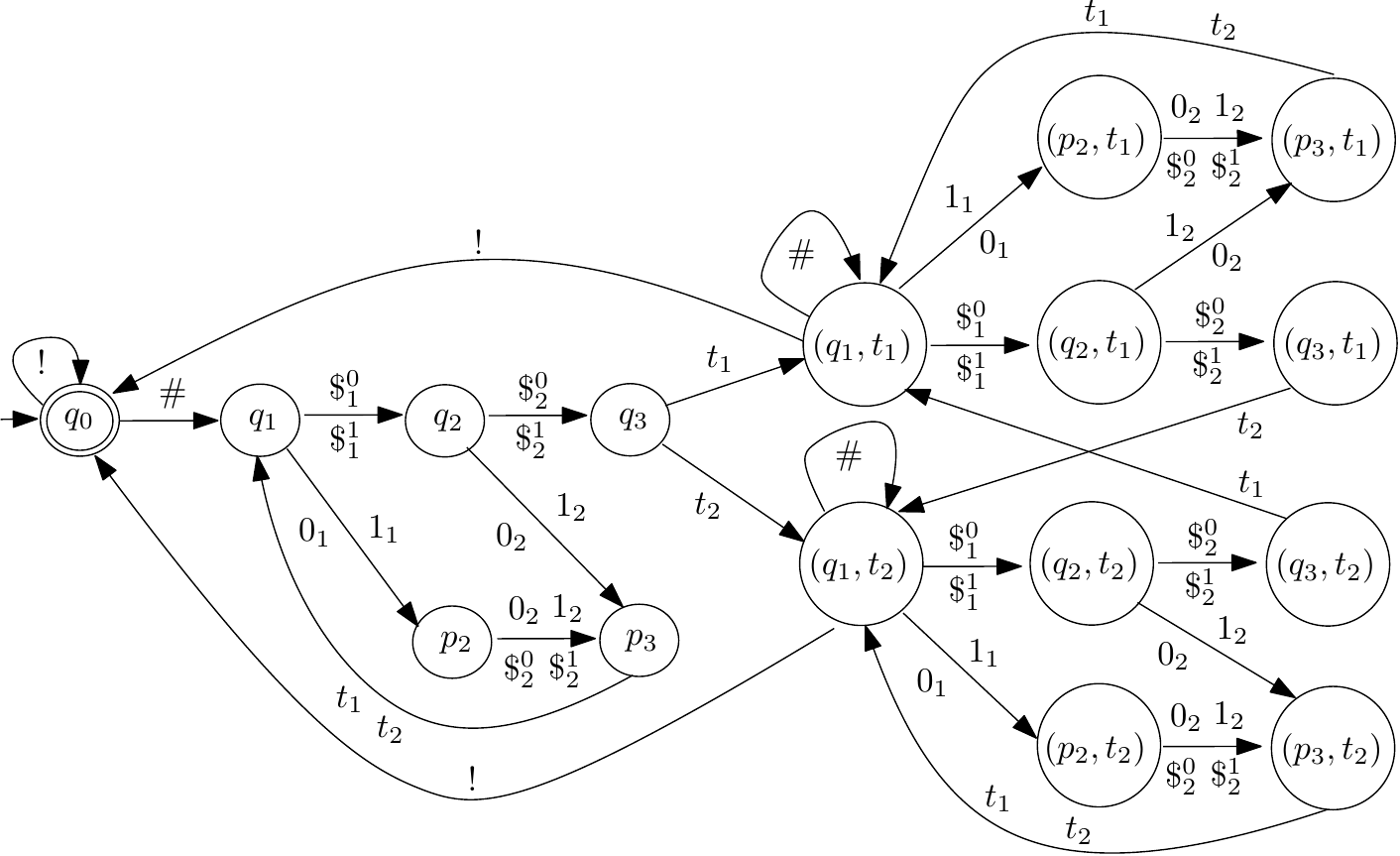}
\end{figure*}

Let
$x_0$ take the value
$$!\# \nbit{1}{0} \nbit{2}{0} t_1
 \nbit{1}{0} \nbit{2}{1} t_1
 \nbit{1}{1} \nbit{2}{0} t_2
 \nbit{1}{1} \nbit{2}{1} t_2\ \#\ 
\nbit{1}{0} \nbit{2}{0} t_2
 \nbit{1}{0} \nbit{2}{1} t_2
 \nbit{1}{1} \nbit{2}{0} t_1
 \nbit{1}{1} \nbit{2}{1} t_1\# !.
 $$
That is, there are two rows in $x_0$, separated by $\#$, and in each row, $x_0$ counts from $00$ to $11$ in binary (marked with the bit indices $1,2$).
After the first pair of $\replaceall$ operations and concatenation of $y_1$ and $z_1$, the variable $x_1$ must have the value
\OMIT{
$\simplerepl{1}{0} \nbit{2}{0}
 \simplerepl{1}{0} \nbit{2}{1}
 \nbit{1}{1} \nbit{2}{0}
 \nbit{1}{1} \nbit{2}{1}
 \ \ %
 \nbit{1}{0} \nbit{2}{0}
 \nbit{1}{0} \nbit{2}{1}
 \simplerepl{1}{1} \nbit{2}{0}
 \simplerepl{1}{1} \nbit{2}{1}$.
 }
 $$
 \begin{array}{l}
 !\# \simplerepl{1}{0} \nbit{2}{0} t_1
 \simplerepl{1}{0} \nbit{2}{1} t_1
 \nbit{1}{1} \nbit{2}{0} t_2
 \nbit{1}{1} \nbit{2}{1} t_2\ \#\ 
\simplerepl{1}{0} \nbit{2}{0} t_2
 \simplerepl{1}{0} \nbit{2}{1} t_2
 \nbit{1}{1} \nbit{2}{0} t_1
 \nbit{1}{1} \nbit{2}{1} t_1\# ! \\
 !\# \nbit{1}{0} \nbit{2}{0} t_1
 \nbit{1}{0} \nbit{2}{1} t_1
 \simplerepl{1}{1} \nbit{2}{0} t_2
 \simplerepl{1}{1} \nbit{2}{1} t_2\ \#\ 
\nbit{1}{0} \nbit{2}{0} t_2
 \nbit{1}{0} \nbit{2}{1} t_2
 \simplerepl{1}{1} \nbit{2}{0} t_1
 \simplerepl{1}{1} \nbit{2}{1} t_1\# !
\end{array}
 $$
After the next $\replaceall$ operations and concatenation of $y_2$ and $z_2$, the variable $x_2$ has the value
 $$
 \begin{array}{l}
 !\# \underline{\simplerepl{1}{0} \simplerepl{2}{0} t_1}
 \simplerepl{1}{0} \nbit{2}{1} t_1
 \nbit{1}{1} \simplerepl{2}{0} t_2
 \nbit{1}{1} \nbit{2}{1} t_2\ \#\ 
\underline{\simplerepl{1}{0} \simplerepl{2}{0} t_2}
 \simplerepl{1}{0} \nbit{2}{1} t_2
 \nbit{1}{1} \simplerepl{2}{0} t_1
 \nbit{1}{1} \nbit{2}{1} t_1\# ! \\
 !\# \nbit{1}{0} \simplerepl{2}{0} t_1
 \nbit{1}{0} \nbit{2}{1} t_1
 \underline{\simplerepl{1}{1} \simplerepl{2}{0} t_2}
 \simplerepl{1}{1} \nbit{2}{1} t_2\ \#\ 
\nbit{1}{0} \simplerepl{2}{0} t_2
 \nbit{1}{0} \nbit{2}{1} t_2
 \underline{\simplerepl{1}{1} \simplerepl{2}{0} t_1}
 \simplerepl{1}{1} \nbit{2}{1} t_1\# !\\
 !\# \simplerepl{1}{0} \nbit{2}{0} t_1
\underline{\simplerepl{1}{0} \simplerepl{2}{1} t_1}
 \nbit{1}{1} \nbit{2}{0} t_2
 \nbit{1}{1} \simplerepl{2}{1} t_2\ \#\ 
\simplerepl{1}{0} \nbit{2}{0} t_2
 \underline{\simplerepl{1}{0} \simplerepl{2}{1} t_2}
 \nbit{1}{1} \nbit{2}{0} t_1
 \nbit{1}{1} \simplerepl{2}{1} t_1\# ! \\
 !\# \nbit{1}{0} \nbit{2}{0} t_1
 \nbit{1}{0} \simplerepl{2}{1} t_1
 \simplerepl{1}{1} \nbit{2}{0} t_2
 \underline{\simplerepl{1}{1} \simplerepl{2}{1} t_2}\ \#\ 
\nbit{1}{0} \nbit{2}{0} t_2
 \nbit{1}{0} \simplerepl{2}{1} t_2
 \simplerepl{1}{1} \nbit{2}{0} t_1
 \underline{\simplerepl{1}{1} \simplerepl{2}{1} t_1} \# !
\end{array}
 $$
Notice that this value has four (altered) copies of the original value of $x_0$ and this value enjoys the property that each copy of $x_0$ contains the occurrences of exactly one of $\simplerepl{1}{0} \simplerepl{2}{0}$, $\simplerepl{1}{1} \simplerepl{2}{0}$, $\simplerepl{1}{0} \simplerepl{2}{1}$, $\simplerepl{1}{1} \simplerepl{2}{1}$.
In each copy of $x_0$, we have underlined the positions where the vertical tiling relation is checked.
In particular, in the first copy, the  vertical tiling relation is checked for the $\nbit{1}{0} \nbit{2}{0}$ position of $x_0$, witnessed by the run of $\Aut_\vrel$ on the first copy sketched below, 
\[
 \begin{array}{l}
 q_0 \xrightarrow{!} q_0 \xrightarrow{\#} q_1 \xrightarrow{\simplerepl{1}{0}} q_2 \xrightarrow {\simplerepl{2}{0} } q_3 \xrightarrow{t_1} (q_1, t_1) \ldots (p_3, t_1) \xrightarrow{t_2} (q_1, t_1) \xrightarrow {\#} (q_1, t_1)  \\ 
\xrightarrow{\simplerepl{1}{0}} (q_2, t_1) \xrightarrow{\simplerepl{2}{0}} (q_3, t_1) \xrightarrow{t_2} (q_1, t_2) \ldots (q_1, t_2) \xrightarrow{\#} (q_1, t_2) \xrightarrow{!} q_0.
 \end{array}
\]
In the next copies, the vertical tiling relation is checked for the positions
$\nbit{1}{1} \nbit{2}{0}$,
$\nbit{1}{0} \nbit{2}{1}$, and
$\nbit{1}{1} \nbit{2}{1}$
in $x_0$ respectively. Note that $\Aut_\vrel$ \emph{does not} have to check that in each copy, the same position in each row is marked (which would need a state space exponential in $n$).
In particular, the occurrences of the subwords $\$^{b_1}_1\$^{b_2}_2$ ($b_1,b_2 \in \{0,1\}$)  are equal in each copy since the value of $x_2$ resulting from the assignments already guarantees this. 
}
\OMIT{
We now provide a 
detailed complexity analysis. 
%
%
We start with a \emph{representation} of 
the recognisable relation $R$,
%
which is a collection of
tuples $(\Aut^{(i)}_1, \ldots, \Aut^{(i)}_\arity)_{i\in [n]}$  such that
$L^{(i)}_j = \Lang(\Aut^{(i)})_j$ for each $j$.
Each tuple $(\Aut^{(i)}_1, \ldots, \Aut^{(i)}_\arity)$ is called a
\emph{disjunct}, and  each \FA{} $\Aut^{(i)}_j$ is called an \emph{atom}.

To obtain the \expspace upper-bound, we propose \emph{conjunctive \FA{}} as a variant of conjunctions of regular constraints. More precisely, each conjunctive \FA{} is a tuple $\Aut = ((\controls, \transrel), \conacc)$ such that $(\controls, \transrel)$ is a transition graph of an \FA{} and
$\conacc \subseteq \controls \times \controls$ is a \emph{conjunctive acceptance
	condition}, i.e., a string $w$ is accepted by $\Aut$ if
for \emph{every} $(q, q') \in \conacc$, there is a run of $(\controls, \transrel)$ on
$w$ from $q$ to $q'$.
%
Evidently, a  conjunctive \FA{} $\Aut=((\controls, \transrel), \conacc)$ is a \emph{succinct} representation of the product of \FA{}s $\cB_{q,q'}$, which has a size exponentially large than that of $\Aut$ in the worst case.
%
%
Accordingly, a representation of $R$ is called a \emph{conjunctive representation} if every atom in the representation is a conjunctive \FA{}.
Since each \FA{} $\Aut =
(\ialphabet, \controls, q_0, \finals, \transrel)$
can be assumed to have only one final state (i.e. $\finals = \{q_F\}$), each
\FA{} can be identified with the conjunctive \FA{}
$((\controls,\transrel),\{(q_0,q_F)\})$.
The \emph{size} of a conjunctive representation $((\controls, \transrel), \conacc)$ is defined as $|\controls|$, and  the  \emph{atom size} of a conjunctive representation of $R$ is defined to be the maximum size of the atoms therein.

We also define several size parameters of $S$ as follows:
\begin{itemize}
	\item $\rcdep(S)$: the number of assignments in $S$,
	\item $\rcdim(S)$: the maximum arity of string functions in the assignments of $S$,
	\item $\rcphi(S)$: the maximum size of the representations of these string functions,
	\item $\rcasrt(S)$: the number of assertions in $S$, and
	\item $\rcpsi(S)$:the maximum size of the \FA{}s appearing in the assertions of
	$S$.
\end{itemize}
Moreover,
for $\ell:\Nat^2\rightarrow \Nat$, we define $\ell^{\langle n \rangle}(j, k)$ ($n\geq 1$) as $\ell^{\langle 1 \rangle}(j,k)= \ell(j, k)$ and $\ell^{\langle n+1 \rangle }(j, k) = \ell(j, \ell^{\langle n \rangle}(j,k))$.

\begin{theorem}\label{thm-generic-dec}
	Given a symbolic execution $S$ satisfying \prerec{}, the
	path feasibility problem of $S$ can be decided \emph{nondeterministically}
	with space
	$$(\rcdim(S)+1)^{\rcdep(S)}  \rcasrt(S) \left(\ell^{\langle \rcdep(S)
		\rangle}(\rcphi(S), \rcpsi(S)) \right)^{O(1)}$$
	%
	%
\end{theorem}




It remains to verify the regularity condition \prerec{} for functional \FT{}s, $\reverse$ and $\replaceall$.
%
%
%
We start with the $\replaceall$ function with a slightly different (but equivalent by Currying the second argument) form 
$\replaceall_{p}(sub, rep)$ where $p$ is a fixed regular expression.

\begin{lemma}\label{lem-1pt}
	The regularity condition \prerec{} holds for 
	\begin{itemize}
		\item the \FT{} $T$ with $\ell_T(|\Transducer|, |\Aut|) = |\Transducer||\Aut|$.
		\item the $\reverse$ function with $\ell_\reverse(|\Aut|) = |\Aut|$.
		\item the $\replaceall$ function $\replaceall_p$ with $\ell_\replaceall(|\Aut|)=2^{\bigO(|p|^c)}\cdot |\Aut|$
	\end{itemize}
\end{lemma}
Lemma~\ref{lem-1pt} entails that each string function only incurs a polynomial blow-up of the size of the atom in the pre-image represented by the recognisable relation. Moreover, as mentioned before, the pre-image computation of functional \FT{}s, as well as the $\reverse$ and $\replaceall$ functions, is distributive over conjunctions of regular languages. Theorem~\ref{thm-generic-dec} implies the \expspace upper-bound immediately.

\paragraph{Lower-bound of $\replaceall$}

\tl{maybe put Matt's expspace lower-bound for replaceall?}

To conclude this subsection, we have:
\begin{theorem}\label{cor-s2pt}
	The path feasibility of string constraints with \emph{functional} \FT{}s, replaceall, and reverse is $\expspace$-complete.
\end{theorem}
}


\subsection{The fragment \strlineconcat{}} \label{sec:strlineconact}

Theorem~\ref{thm-fft} shows that in $\strline$, if the transducers are restricted to be functional and one-way, then the complexity of the path feasibility problem becomes \expspace-complete. In the following, we show that the same complexity bound holds if 
the $\replaceall_e$ function is made unary, whereas the transducers are allowed to be relational.  
We remark that the proof of the complexity upper-bound deviates from that of  \strlinefft.

\begin{theorem}\label{thm-ftconrev}
	The path feasibility of \strlineconcat{} is \expspace-complete.
\end{theorem}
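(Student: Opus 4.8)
The plan is to prove the two matching bounds separately. The lower bound is immediate: \cite{LB16} already establishes that $\strline[\concat, \FunFT]$ — concatenation together with \emph{functional} one-way transducers and regular constraints — has an \expspace-hard path feasibility problem. Since functional one-way transducers are a special case of the relational one-way transducers \FT{} permitted in \strlineconcat{}, and $\concat$ is available, \strlineconcat{} syntactically subsumes that fragment, so \expspace-hardness transfers verbatim. (I would cite \cite{LB16} rather than Proposition~\ref{prop:expspace-lower} for this direction, since the latter relies on $\replaceall$ with a \emph{variable} replacement string, which lies outside \strlineconcat{}.)

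For the upper bound I would reduce \strlineconcat{} to the straight-line logic of \cite{LB16} over $\concat$, relational one-way transducers, and regular constraints, whose path feasibility is in \expspace, after first eliminating the two operations that are not native to that logic. First, each $\sreplaceall_e(x,u)$ with a \emph{constant} replacement string $u$ is replaced by an assignment $y := \Transducer_{e,u}(x)$, where $\Transducer_{e,u}$ is a one-way transducer capturing leftmost--longest matching; as noted in the text following \eqref{eq:SL} and proved in \cite{CCHLW18}, such a transducer exists, but its size is single-exponential, $2^{O(|e|^{c})}$, in the pattern. Second, each $\reverse$ is discharged directly inside the backward pre-image propagation of Theorem~\ref{th:gen}: the pre-image of a regular constraint under $\reverse$ is obtained by reversing the automaton ($\Aut \mapsto \Aut^{\revsym}$, hence polynomial), and because $\reverse$ is a partial \emph{function} this pre-image distributes over intersection by Fact~\ref{fact-dist}, so it can be pushed through the straight line with no product construction and negligible cost.

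The genuine obstacle — and the reason the argument must \emph{deviate} from the \strlinefft{} proof — is the complexity accounting for the relational transducers. Here the pre-image no longer distributes over intersection (Example~\ref{ex:nondistr}), so product constructions of the regular constraints attached to each variable are unavoidable, and these are exactly the source of \cite{LB16}'s single exponential. The danger is that feeding the $2^{O(|e|^{c})}$-size transducers produced for $\sreplaceall$ into an \expspace{} procedure as a black box would yield \twoexpspace{}. The key point to verify, and the crux of the whole proof, is that the two exponential blow-ups \emph{compose multiplicatively rather than stacking}: \cite{LB16}'s procedure uses space that is polynomial in the transducer sizes and exponential only in the number of regular constraints and assignments, which remains polynomial in the input, so a product of polynomially many automata each of single-exponential size $2^{O(|e|^{c})}$ still has size $\bigl(2^{O(|e|^{c})}\bigr)^{\mathrm{poly}} = 2^{\mathrm{poly}}$ — a single exponential. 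Concretely, I would re-run \cite{LB16}'s analysis with its transducer-size parameter instantiated at $2^{O(|e|^{c})}$ and check that every intermediate (conjunctive) automaton, and hence the entire computation, stays within single-exponential space.

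Combining this re-analysis with the polynomial, distributive treatment of $\reverse$ and the functional pre-image of $\concat$ yields the \expspace{} upper bound, which together with the lower bound gives \expspace-completeness as claimed in Theorem~\ref{thm-ftconrev}. I expect the hard part to be entirely in the third paragraph, namely confirming that the $\sreplaceall$-induced exponential and the transducer-product exponential of \cite{LB16} interact multiplicatively and do not escalate to \twoexpspace{}; everything else (the hardness transfer, the transducer encoding of $\sreplaceall$, and the automaton-reversal handling of $\reverse$) is routine given the machinery already in the excerpt.
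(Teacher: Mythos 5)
Your lower bound and your treatment of $\sreplaceall$ match the paper exactly: the hardness is imported from \cite{LB16} for $\concat$ plus \FT{}s, each $\replaceall_e(x,u)$ with constant $u$ is compiled into a single-exponential-size transducer $T_{e,u}$ via \cite{CCHLW18}, and the crux you identify in your third paragraph is precisely the point the paper makes --- the concatenation-elimination step (following Theorem~5 of \cite{LB16}) introduces a number of fresh variables that depends only on the dependency-graph structure and \emph{not} on the transducer sizes, so the two single exponentials multiply rather than stack, and one lands on an exponential-size instance of \FT{}s-plus-regular-constraints, solved in \pspace{} by \cite{BFL13}, giving \expspace{} overall. (The paper routes the final step through \cite{BFL13} after explicitly removing $\concat$, rather than re-parameterising \cite{LB16}'s procedure, but this is the same accounting.)

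The genuine gap is your treatment of $\reverse$. You propose to ``discharge $\reverse$ directly inside the backward pre-image propagation of Theorem~\ref{th:gen}'' and then hand the residue to \cite{LB16} as a black box, but these are two different algorithms and they do not compose: the backward propagation must consume assignments strictly from last to first, and $\reverse$-assignments are interleaved with relational-transducer assignments, for which the backward propagation forces the repeated product constructions you are trying to avoid. Nor can $\reverse$ be eliminated purely locally, because $\reverse$ is not realisable by a one-way transducer, so you cannot leave behind any residual constraint tying $x$ to its reversal. The paper instead performs a \emph{global} syntactic rewrite (its substeps {\bf 4.1}--{\bf 4.4}): it introduces a mirror variable $x^{(\revsym)}$ for each $x$, sweeps through the program replacing $y:=T(x^{(\revsym)})$ by $y^{(\revsym)}:=T^{\revsym}(x)$ (and symmetrically), reversing \emph{transducers} and not just the assertion automata, and proves by induction the invariant that at most one of $x$, $x^{(\revsym)}$ ever occurs in the processed assignments --- this invariant is what makes the elimination sound and polynomial. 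Your appeal to Fact~\ref{fact-dist} is also beside the point here: distributivity of the pre-image is the engine of the \strlinefft{} proof, whereas the whole reason Theorem~\ref{thm-ftconrev} needs a different argument is that the \FT{}s are relational. Replacing your second-paragraph handling of $\reverse$ with this reversed-variable/reversed-transducer transformation closes the gap; the rest of your outline then goes through as in the paper.
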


The lower-bound in Theorem~\ref{thm-ftconrev} follows from that in \cite{LB16} for $\concat$ and \FT{}s. We focus on the upper-bound. Let $S$ be a symbolic execution in \strlineconcat{}. Without loss of generality, we assume that the subject parameters of the $\sreplaceall$ functions in $S$ are always string variables (otherwise it can be eliminated). 
We have the four-step procedure below:
\begin{description}
\item[(1)] At first, for each assertion $\ASSERT{R(\vec{x})}$ in $S$, we nondeterministically choose one disjunct $(\Aut_1, \ldots, \Aut_\arity)$ of the representation of $R$, replace $\ASSERT{R(\vec{x})}$ with the sequence of assertions $\ASSERT{x_1 \in \Aut_1}; \ldots; \ASSERT{x_\arity \in \Aut_\arity}$. Let $S_1$ be the resulting program. Note that the size of $S_1$ is linear in that of $S$.
\item[(2)] Transform each assignment of the form $y:= \replaceall_e(x, u)$ in $S_1$ with $e$ a regular expression and $u$ a string constant, into $y:= T_{e, u}(x)$, where $T_{e, u}$ is an \FT{} corresponding to $\replaceall_e(\cdot, u)$ that can be constructed from $e, u$ in exponential time  (\cite{CCHLW18}). Let $S_2$ denote the resulting program.
\item[(3)] Remove all the occurrences of the $\concat$ operator from $S_2$, resulting in $S_3$. This step can be done in nondeterministic exponential time w.r.t. the size of $S_1$ (not $S_2$), thus the size of $S_3$ is at most exponential in that of $S_1$.

%
\item[(4)] Finally, reduce $S_3$ into a program $S_4$ that contains no occurrences of the $\reverse$ function. The reduction is done in polynomial time.
\end{description}
Note that $S_4$ is a program that contains only \FT{}s and assertions of the form $y \in \Aut$ the size of which is exponential in that of $S$. According to \cite{BFL13}[Theorem 6.7], the path feasibility of a symbolic execution that contains only \FT{}s and assertions of the form $y \in \Aut$ can be solved in polynomial space. Therefore, we conclude that the path feasibility of \strlineconcat{} is in nondeterministic exponential space, thus in \expspace{} by Savitch's theorem.

Since the first step is clear and the second step was shown in \cite{CCHLW18}, we will focus on the third and fourth step.

The third step is similar to the proof of Theorem~5 in \cite{LB16}: The main idea is to do a bottom-up induction on the structure of the dependency graph (namely starting from the input variables) and split each variable into multiple segments represented by fresh variables. (In the current setting, one additional inductive case should be added to deal with the $\reverse$ function.)
Crucially the number of fresh variables introduced in the third step depends only on the structure of the dependency graph, but is independent of 
the size  of the transducers or automata in $S_2$. Therefore, there are at most exponentially (in the size of $S_1$) many fresh variables are introduced. The transducers or automata  in $S_3$ are obtained from those of $S_2$ by designating one initial and one final state respectively. Therefore, the size of $S_3$ is at most exponential in that of $S_1$.

We mainly describe the fourth step, i.e., to eliminate the $\reverse$ function while preserving path feasibility. During this course, we need to introduce \emph{reversed transducers}. For an \FT{} $T$, we define \FT{} $T^\revsym$ by reversing
the direction of each transition of $T$ and swapping initial and final states.  
%
%

Let $X$ denote the set of variables occurring in $S_3$. The fourth step comprises the following substeps.
\begin{description}
\item[(4.1)] For each $x \in X$, add a new variable $x^{(\revsym)}$, which intuitively denotes the reverse of $x$.
\item[(4.2)] Remove each $y := \reverse(x)$ in $S_3$ and replace each occurrence of $y$  in $S_3$ with $x^{(\revsym)}$.

\item[(4.3)] In this substep, we intend to make sure for each variable $x \in X$, it cannot be the case that both $x$ and $x^{(\revsym)}$ occur. To this end, 
we sequentially process the statements of $S_3$, as follows.

Mark all the remaining assignments as unprocessed. Repeat the following procedure until all the assignments are processed:
\begin{itemize}
\item If the first unprocessed statement is of the form $y := T(x^{(\revsym)})$ (resp. $y^{(\revsym)} = T(x^{(\revsym)})$) and $x$ occurs in some processed assignment, then replace $y := T(x^{(\revsym)})$ (resp. $y^{(\revsym)} := T(x^{(\revsym)})$) by $y^{(\revsym)} := T^{\revsym}(x)$ (resp. $y := T(x)$) and mark the new statement as processed.

%
\item If the first unprocessed statement is of the form $y^{(\revsym)} := T(x)$ (resp. $y := T(x)$) and $x^{(\revsym)}$ occurs in some processed assignment, then replace $y^{(\revsym)} := T(x)$ (resp. $y := T(x)$) by $y := T^\revsym(x^{(\revsym)})$ (resp. $y^{(\revsym)} := T^\revsym(x^{(\revsym)})$) and mark the new statement as processed. 
%
\item For all the other cases, mark the first unprocessed statement as processed.
\end{itemize}
By induction, we can show that in each step of the above procedure, for each variable $x \in X$, at most one of $x$ or $x^{(\revsym)}$ occurs in the processed assignments. We then have that, after the step ({\bf 4.3}), for each $x \in X$, 
$x$ and $x^{(\revsym)}$ can \emph{not} both occur in the assignments.
\item[(4.4)] For each variable $x \in X$, if $x$ occurs in the assignments, then replace each regular constraint of the form $x^{(\revsym)} \in \Lang(\Aut)$ by $x \in \Lang(\Aut^\revsym)$, otherwise, replace each regular constraint of the form $x \in \Lang(\Aut)$ by $x^{(\revsym)} \in \Lang(\Aut^\revsym)$.
\end{description}
Recall that we use $S_4$ to denote the symbolic execution that results from executing the fourth step on $S_3$.
From the arguments above, we know that for each $x \in X$, exactly one of $x$ or $x^{(\revsym)}$ occurs in $S_4$, but not both. Therefore, $S_4$ is a symbolic execution in \strlineconcat{} that contains only \FT{}s and regular constraints.

The following example illustrates the fourth step.
\begin{example}
Consider the symbolic execution
\[
z := T(x);\ y := \reverse(x);\  z' := T'(y);\ \ASSERT{x \in \Aut_1};\ \ASSERT{z \in \Aut_2};\ \ASSERT{z' \in \Aut_3}
\]
where $T, T'$ are \FT{}s and $\Aut_1, \Aut_2, \Aut_3$ are \FA{}s. In the first substep, we add the variables $x^{(\revsym)}, y^{(\revsym)}$, $z^{(\revsym)}, (z')^{(\revsym)}$. In the second substep, we remove $y: = \reverse(x)$ and replace $y$ with $x^{(\revsym)}$, resulting in the program
\[
z := T(x);\  z' := T'(x^{(\revsym)});\ \ASSERT{x \in \Aut_1};\ \ASSERT{z \in \Aut_2};\ \ASSERT{z' \in \Aut_3}.
\]
In the third substep, since when processing $z' := T'(x^{(\revsym)})$, $x$ has already occurred in the processed assignment $z:= T(x)$, we transform $z' := T'(x^{(\revsym)})$ into $(z')^{(\revsym)} := T'(x)$, resulting in the program
\[
z := T(x);\  (z')^{(\revsym)} := T'(x);\ \ASSERT{x \in \Aut_1};\ \ASSERT{z \in \Aut_2};\ \ASSERT{z' \in \Aut_3}.
\]
Note that after the third substep, $x, z, (z')^{(\revsym)}$ occur in the assignments, but none of $x^{(\revsym)}, z^{(\revsym)}$ and $z'$ do. In the fourth substep, we replace $\ASSERT{z' \in \Aut_3}$ with $\ASSERT{(z')^{(\revsym)} \in \Aut^\revsym_3}$ and get the following  symbolic execution in \strlineconcat{} where only \FT{}s and regular constraints occur,
\[
z := T(x);\  (z')^{(\revsym)} := T'(x);\ \ASSERT{x \in \Aut_1};\ \ASSERT{z \in \Aut_2};\ \ASSERT{(z')^{(\revsym)} \in \Aut^\revsym_3}.
\]
\qed
\end{example}

As mentioned in Section~\ref{sec:intro}, our algorithm
reduces the problem to constraints which can be handled by the
existing solver SLOTH \cite{HJLRV18}.


\paragraph{Extensions with length constraints.} Apart from regular constraints in the assertion, length constraints are another class of commonly used constraints in string manipulating programs. Some simple length constraints can be encoded by regular constraints, as partially exemplified in Example~\ref{ex:length}. Here, we show that when \strlineconcat{} is extended with (much more) general length constraints, the \expspace{} upper bound can still be preserved.
We remark that, in contrast, if \strlinefft{} is extended with length constraints, then the path feasibility problem becomes undecidable, which has already been shown in \cite{CCHLW18}.

To specify length constraints properly, we need to slightly extend our constraint language. In particular, we consider variables of type $\intnum$, which are usually referred to as \emph{integer variables} and range over the set $\Nat$ of natural numbers. Recall that, in previous sections, we have used $x, y, z, \ldots$ to denote the variables of $\str$ type.  Hereafter we typically use $\mathfrak{l}, \mathfrak{m}, \mathfrak{n}, \ldots$ to denote the variables of $\intnum$. The
choice of omitting negative integers is for simplicity. Our
results can be easily extended to the case where $\intnum$ includes negative integers.

\begin{definition}[Length assertions] \label{def:intconst}
	A length assertion is of the form
	$\ASSERT{a_1t_1+\cdots+a_nt_n\leq d}$,
where $a_1, \ldots, a_n,d\in \mathbb{Z}$ are  integer constants (represented in binary), and each \emph{term} $t_i$ is either
	\begin{enumerate}
		\item an integer variable $\mathfrak{n}$;
		\item $|x|$ where $x$ is a  string variable; or
		\item $|x|_a$ where $x$ is string variable and $a\in \Sigma$ is a constant letter.
	\end{enumerate}
Here, $|x|$ and $|x|_a$ denote the length of $x$ and the number of occurrences of $a$ in $x$, respectively.
\end{definition}

\begin{theorem} \label{thm:length}
The path feasibility of \strlineconcat{} extended with length assertions is \expspace-complete.
\end{theorem}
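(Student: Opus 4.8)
The \expspace{} lower bound is inherited directly from Theorem~\ref{thm-ftconrev}, since \strlineconcat{} \emph{without} length assertions is already \expspace-hard and adding length assertions only enlarges the language; so the plan is to establish the matching \expspace{} upper bound. The starting point is the four-step reduction used in the proof of Theorem~\ref{thm-ftconrev}, which turns a \strlineconcat{} program $S$ into an acyclic system $S_4$ of at most exponential size containing only \FT{} assignments $y := T(x)$ and regular assertions $\ASSERT{x \in \Aut}$. First I would rerun exactly this reduction, but now \emph{carry the length assertions along}. The only extra bookkeeping concerns how the length and letter-count terms are transformed by the two structural steps: eliminating $\concat$ splits each variable $x$ into finitely (exponentially) many fresh segment variables $x_1,\dots,x_k$ with $x = x_1 \cdots x_k$, whence $|x| = \sum_i |x_i|$ and $|x|_a = \sum_i |x_i|_a$; eliminating $\reverse$ replaces $x^{(\revsym)}$ by $x$ (or vice versa) and preserves both $|\cdot|$ and $|\cdot|_a$. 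Both effects are linear, so each original length term can be rewritten as a $\Zed$-linear combination of length/letter-count terms over the variables surviving in $S_4$, and the rewritten length assertions have size at most exponential in $|S|$.

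It therefore suffices to decide path feasibility of an acyclic system $S_4$ of exponential size built from \FT{}s and regular constraints, \emph{together with} length assertions over its variables. For this I would pass to the Parikh (letter-count) abstraction. By Parikh's theorem the Parikh image of the language of any \FA{} is an effectively computable semilinear set, definable by an existential Presburger formula of size polynomial in the automaton. The same holds \emph{jointly} for a transducer: viewing $T$ as a finite automaton over its transition set, the map sending an accepting run to the pair of letter-count vectors of its input and its output is linear, so the relation $\{(\text{letter counts of } u,\ \text{letter counts of } v) : (u,v) \in \Tran(T)\}$ --- even after intersecting with the regular constraints on $u$ and $v$ by a product construction --- is semilinear and existential-Presburger-definable with a formula polynomial in $|T|$ and in the automata involved. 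Processing $S_4$ in the topological order of its dependency DAG and conjoining these per-assignment formulas yields a single existential Presburger formula $\Phi$ over the letter-count variables $\{|x|_a\}_{x,a}$ of all surviving string variables, describing exactly the tuples of letter counts realisable by some solution of $S_4$; the formula $\Phi$ has size at most exponential in $|S|$. Conjoining $\Phi$ with the rewritten length assertions (and with the $\intnum$-typed integer variables ranging over $\Nat$) produces an existential Presburger formula whose satisfiability is equivalent to path feasibility.

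Finally I would appeal to the fact that satisfiability of existential Presburger arithmetic is in \np{} in the size of the formula. Since $\Phi$ and the rewritten assertions together have size at most exponential in $|S|$, this places the decision in \nexptime{}, and hence in \expspace{} via the inclusion \nexptime{} $\subseteq$ \expspace{}; together with the lower bound this gives \expspace-completeness. The main obstacle is the correctness and succinctness of the Parikh-image step for the transducer constraints: one must capture the \emph{correlation} between the input and output letter counts imposed by each \FT{} (taking independent per-variable Parikh images would be unsound), propagate these correlations along the DAG, and keep the resulting existential Presburger formula within exponential size so that the \np{} satisfiability bound yields an \expspace{} procedure overall. This is precisely the point at which I would build on the length-constraint handling for straight-line transducer systems developed in \cite{LB16}, adapting it to the exponentially larger system $S_4$ produced by the concatenation- and reverse-elimination steps.
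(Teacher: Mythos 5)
Your proposal is correct and follows essentially the same route as the paper: the paper likewise carries the length assertions through the four-step reduction of Theorem~\ref{thm-ftconrev}, rewriting $|x|$ and $|x|_a$ as sums over the fresh segment variables in the concatenation-elimination step and transferring them between $x$ and $x^{(\revsym)}$ in the reverse-elimination step, exactly as you describe. The only difference is that where you sketch the Parikh-image/existential-Presburger argument for the resulting system of \FT{}s, regular constraints and length assertions, the paper simply invokes \cite[Theorem~12]{LB16} as a black box for that final \expspace{} bound --- the same result you say you would build on.
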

The proof of the theorem is given in \shortlong{the full
version of this article}{Appendix~\ref{app:thmlength} in the supplementary material}.
%
%
%
%
%
%
%
Other related constraints, such as character assertions and $\indexof$ assertions, can be encoded by length assertions defined in Definition~\ref{def:intconst} together with regular constraints. Since they are not the focus of this paper, we omit the details here. 

We observe that Theorem~\ref{thm:length} suggests that the two semantic conditions (roughly speaking, being a recognisable relation) is only a sufficient, but not necessary, condition of decidability of path feasibility. This is because length assertions can easily deviate from recognisable relations (for instance $|x_1|=|x_2|$ does not induce a recognisable relation over $\Sigma^*$), but decidability still remains.

\section{Implementation} \label{sec:impl}

We have implemented our decision procedure for path feasibility in a
new tool OSTRICH, which is built on top of the SMT solver Princess~\cite{princess08}.
OSTRICH implements the optimised decision procedure for string functions as
described in Section~\ref{sec:implemented-alg} (i.e. using distributivity
of regular constraints across pre-images of functions)
and has built-in support for concatenation, reverse, \FunFT{}, and $\replaceall$.
Moreover, since the optimisation only requires that string operations are functional, we can also support additional functions that satisfy \reginvrel, such as $\replace_e$ which replaces only the first (leftmost and longest) match of a regular expression.
OSTRICH is extensible and new string functions can be added quite simply (Section~\ref{sec:extensibility}).

Our implementation adds a new theory solver for conjunctive formulas
representing path feasibility problems to Princess
(Section~\ref{sec:dfs}).
This means that we support disjunction as well as conjunction in formulas,
as long as every conjunction of literals fed to the theory solver
corresponds to a path feasibility problem.
OSTRICH also implements a number of
optimisations to efficiently compute pre-images of relevant functions
(Section~\ref{sec:preImgOpt}). OSTRICH is entirely written in Scala and
is open-source.
We report on our experiments with OSTRICH in Section~\ref{sec:experiments}.
The tool is available on GitHub\footnote{\url{https://github.com/pruemmer/ostrich}}.
The artifact is available on the ACM DL.

\subsection{Depth-First Path Feasibility}
\label{sec:dfs}

\begin{algorithm}[h]
  \small
  \KwIn{Sets~$\mathit{active}, \mathit{passive}$ of regex constraints
    $x \in L$;
    acyclic set~$\mathit{funApps}$ of assignments $x := f(\bar y)$.}
  \KwResult{Either $\mathit{Model}(m)$ with $m$ a model satisfying
    all constraints and function applications;\newline
    or $\mathit{Conflict}(s)$ with $s$ a set of regex constraints that
    is inconsistent with $\mathit{funApps}$.}

  \Begin{
    \eIf{$\mathit{active} = \emptyset$}{
      \tcc{Extract a model by solving constraints and evaluating functions}
      $\mathit{leafTerms} \leftarrow
      \{ x \mid x \text{~occurs in~} \mathit{passive} \cup \mathit{funApps} \}
      \setminus
      \{x \mid (x := f(\bar y)) \in \mathit{funApps} \}$\;
      $\mathit{leafModel} \leftarrow
      \{ x \mapsto w \mid x \in \mathit{leafTerms}, ~w
      \text{~satisfies all constraints on~} x \text{~in~} \mathit{passive}\}$\;
      $m \leftarrow
      \mu m\,.\, \mathit{leafModel} \cup
        \{ x \mapsto f(\bar w) \mid (x := f(\bar y)) \in \mathit{funApps},~
        m(\bar y) = \bar w \text{~is defined}\}$\;
      \Return{$\mathit{Model}(m)$}
    }{
      \tcc{Compute the pre-image of one of the active constraints}
      choose a constraint~$x \in L$ in $\mathit{active}$\;
      \eIf{there is an assignment~$x := f(y_1, \ldots, y_r)$ defining $x$
        in $\mathit{funApps}$}{
        $\mathit{cset} \leftarrow \{x \in L\}$ \tcc*{start constructing a conflict set}
        compute the pre-image $f^{-1}(L) = \bigcup_{i=1}^n L_1^{(i)} \times \cdots \times L_r^{(i)}$\;
        $\mathit{act} \leftarrow \mathit{active} \setminus \{x \in L\},~~
        \mathit{pas} \leftarrow \mathit{passive} \cup \{x \in L\}$\;
        \For{$i \leftarrow 1$ \KwTo $n$}{
          $\mathit{newRegexes} \leftarrow
          \{y_1 \in L_1^{(i)}, \ldots, y_r \in L_r^{(i)}\} \setminus (\mathit{act} \cup \mathit{pas})$\;
          \eIf{$\mathit{act} \cup \mathit{pas} \cup \mathit{newRegexes}$ is inconsistent}{
            compute an unsatisfiable core $c \subseteq \mathit{act} \cup \mathit{pas} \cup \mathit{newRegexes}$\;
            $\mathit{cset} \leftarrow \mathit{cset} \cup (c \setminus \{y_1 \in L_1^{(i)}, \ldots, y_r \in L_r^{(i)}\})$\;
          }{
          \Switch{$\mathit{findModel}(\mathit{act}
            \cup \mathit{newRegexes},\;
          \mathit{pas},\;
          \mathit{funApps})$}{
          \uCase{$\mathit{Model}(m)$}{
            \Return{$\mathit{Model}(m)$}\;
          }
          \Case{$\mathit{Conflict}(s)$}{
            \eIf{$s \cap \mathit{newRegexes} = \emptyset$}{
              \Return{$\mathit{Conflict}(s)$}  \tcc*{back-jump}
            }{
              $\mathit{cset} \leftarrow \mathit{cset} \cup (s \setminus \{y_1 \in L_1^{(i)}, \ldots, y_r \in L_r^{(i)}\})$\;
            }
          }
          }
        }}
        \Return{$\mathit{Conflict}(cset)$} \tcc*{backtrack}
      }{
        \Return{$\mathit{findModel}(\mathit{active} \setminus \{x \in L\},\;
          \mathit{passive} \cup \{x \in L\},\;
          \mathit{funApps})$}\;
      }
    }
  }

  \caption{Recursive function~$\mathit{findModel}$
    defining depth-first model construction for
    SL \label{alg:dfs}}
\end{algorithm}

We first discuss the overall decision procedure for path feasibility
implemented in OSTRICH. The procedure performs depth-first exploration
of the search tree resulting from repeatedly splitting the
disjunctions (or unions) in the recognisable pre-images of
functions. Similar to the DPLL/CDCL architecture used in SAT solvers,
our procedure computes conflict sets and applies
back-jumping~\cite{tinelli_dpll_2004} to skip irrelevant parts
of the search tree.

For solving, a path feasibility problem is represented as a
set~$\mathit{funApps}$ of assignments~$x := f(\bar y)$, and a set
$\mathit{regex}$ of regular expression constraints~$x \in L$, with $x$
being a string variable and $\bar y$ a vector of string variables. The
set~$\mathit{funApps}$ by definition contains at most one assignment
for each variable~$x$, and by nature of a path   there are no
cyclic dependencies. We make two further simplifying assumptions:
(i)~the set $\mathit{regex}$ is on its own satisfiable, i.e., the
constraints given for each variable~$x$ are consistent; and (ii)~for
each variable that occurs as the left-hand side of an
assignment~$x := f(\bar y)$, there is at least one
constraint~$x \in L$ in the set $\mathit{regex}$.  Assumption~(i)
boils down to checking the non-emptiness of intersections of regular
languages, i.e., to a reachability check in the
product transition system, and can be done efficiently in practical
cases.  Both assumptions could easily be relaxed, at the cost of making the
algorithm slightly more involved.

\subsubsection{The exploration function}

The algorithm is presented as a recursive
function~$\mathit{findModel}$ in pseudo-code in
Algorithm~\ref{alg:dfs}. The function maintains two sets
$\mathit{active}, \mathit{passive}$ of regular expression constraints,
kept as parameters, and in addition receives the set~$\mathit{funApps}$
as parameter. \emph{Active} regular expression constraints are those
for which no pre-images have been computed yet, while
$\mathit{passive}$ are the constraints that have already gone through
pre-image computation. In the initial
call~$\mathit{findModel}(\mathit{regex}, \emptyset, \mathit{funApps})$
of the function, $\mathit{active}$ is chosen to be $\mathit{regex}$,
while $\mathit{passive}$ is empty.

Depending on the status of a path feasibility problem,
$\mathit{findModel}$ can produce two outcomes:
\begin{itemize}
\item $\mathit{Model}(m)$, where $m$ is maps string
  variables to words that satisfy the regular expression
  constraints in $\mathit{active} \cup \mathit{passive}$ and the
  assignments in $\mathit{funApps}$, interpreted as equations.
\item $\mathit{Conflict}(s)$, with
  $s \subseteq \mathit{active} \cup \mathit{passive}$ being a
  \emph{conflict set}, i.e., a set of constraints that is inconsistent
  with $\mathit{funApps}$. The set~$s \cup \mathit{funApps}$ can be
  interpreted as an unsatisfiable core of the path feasibility
  problem, and is used to identify irrelevant splits during the
  search.
\end{itemize}

\subsubsection{Implementation in detail}

Model construction terminates with a positive result when the
set~$\mathit{active}$ becomes empty (line~2), in which case it is only
necessary to compute a model~$m$ (lines~3--6). For this, the algorithm
first computes the set of variables that do not occur on the left-hand
side of any assignment (line~3). The values of such leaf terms can be
chosen arbitrarily, as long as all derived regular expression
constraints are satisfied (line~4). The values of all other variables
are extracted from the assignments in $\mathit{funApps}$:
whenever an assignment~$x := f(\bar y)$ is found for which all
argument variables already have a value, also the value of the
left-hand side~$x$ is known (line~5).

Otherwise, one of the active constraints~$x \in L$ is selected for
pre-image computation~(line~8). For the correctness it is irrelevant
in which order the constraints are selected, and branching heuristics
from the SAT world might be applicable. Our current implementation
selects the constraints in the fixed order in which the constrained
variables occur on the path, starting with constraints at the end of
the path. If $x$ is a left-hand side of an assignment, the pre-image
of $L$ is a recognisable relation that can be represented through
regular languages (line~11); the constraint~$x \in L$ then becomes
passive in subsequent recursive calls (line~12).

The algorithm then iterates over the disjuncts of the pre-image
(line~13), generates new regular expression constraints for the
function arguments (line~14), and checks whether any disjuncts
lead to a solution. During the iteration over disjuncts, the
algorithm builds up a conflict set~$\mathit{cset}$ collecting
constraints responsible for absence of a solution (lines~10,
17, 26).  If the new constraints are inconsistent with
existing constraints (line~15), the disjunct does not have to be
considered further; the algorithm then computes a (possibly minimal)
unsatisfiable subset of the constraints, and adds it to the conflict
set~$\mathit{cset}$. Otherwise, $\mathit{findModel}$ is called
recursively (line~19). If the recursive call produces a model, no
further search is necessary, and the function returns
(lines~20-21). Similarly, if the recursive call reports a conflict
that is independent of the generated regular expression constraints,
it follows that no solution can exist for any of the disjuncts of the
pre-image, and the function can return immediately (lines~23--24). In
case of other conflicts, the conflict set~$\mathit{cset}$ is extended
(line~26), and finally returned to explain why no model could be found
(line~27).

\subsection{Optimisation of Pre-Image Computation}
\label{sec:preImgOpt}

We have optimised the pre-image computation of the concatenation and $\replaceall$ operations.

\subsubsection{Concatenation}

The pre-image of a regular language~$L$ for
concatenation~$x := y \circ z$ can be computed by representing $L$ as
an FA, say with $n$ states, and generating a
union~$\bigcup_{i=1}^n L_1^{(i)} \times L_2^{(i)}$ in which the
accepting state of $L_1^{(i)}$ and the initial state of $L_2^{(i)}$
iterate over all $n$ states of $L$~\cite{Abdulla14}. In practice, most of the resulting $n$~cases are
immediately inconsistent with other regular expression constraints in
$\mathit{active} \cup \mathit{passive}$; this happens for instance
when the length of $y$ or $z$ are already predetermined by other
constraints. Our implementation therefore filters the considered
languages~$L_1^{(i)}, L_2^{(i)}$ upfront using length information
extracted from other constraints.

\subsubsection{The $\replaceall$ Function}
\label{sec:replaceall-explanation}

We implement $\replaceall_e$ by reduction to $\replaceall_a$ for a single character $a$.
We translate all
$x := \replaceall_e(y, z)$
into
$y' := T_e(y); x := \replaceall_\internalchar(y', z)$
where $\internalchar$ is a special character disjoint from the rest of the alphabet.
The transducer $T_e$ copies the contents of $y$ and replaces all left-most and longest subwords satisfying the regular expression $e$ with $\internalchar$.  This construction uses a \emph{parsing automaton} and details can be found in~\cite{CCHLW18}.
We first recall how $\replaceall_a$ can be tackled, before explaining the inefficiencies and the solution we use.

\paragraph{Naive Recognisability}

Suppose we have an \FA{} $\Aut_x$ giving a regular constraint on $x$.
We need to translate this automaton into a sequence of regular constraints on $y$ and $z$.
To do this, we observe that all satisfying assignments to $x$ must take the form
$u_1 w_z u_2 w_z \ldots w_z u_n$
where $w_z$ is the value of $z$, and
$u_1 a u_2 a \ldots a u_n$
is the value of $y$.
Moreover, each word $u_i$ cannot contain the character $a$ since it would have been replaced by $w_z$.
The (satisfying) assignment to $x$ must be accepted by $\Aut_x$.
Thus, we can extract from an accepting run of $\Aut_x$ a set of pairs of states $Q_z$, which is the set of all pairs $(q, q')$ such that the run of $\Aut_x$ moves from $q$ to $q'$ while reading a copy of $w_z$.
Then, we can obtain a new \FA{} $\Aut_y$ by removing all $a$-transitions from $\Aut_x$ and then adding $a$-transitions
$(q, a, q')$
for each
$(q, q') \in Q_z$.
It is easy to verify that there is an accepting run of $\Aut_y$ over
$u_1 a u_2 a \ldots a u_n$.
Similarly, we define $\Aut_z$ to be the intersection of $\Aut_x(q, \{q'\})$ for all $(q, q') \in Q_z$.
We know by design that there is an accepting run of $\Aut_z$ over $w_z$.

The value of $Q_z$ above was extracted from an accepting run of $\Aut_x$.
There are, of course, many possible accepting runs of $\Aut_x$, each leading to a different value of $Q_z$, and thus a different $\Aut_y$ and $\Aut_z$.
Since each $Q_z$ a set of pairs of states of $\Aut_x$, there are only a finite number of values that can be taken by $Q_z$.
Thus, we can show the pre-image of
$x := \replaceall_a(y, a, z)$
is recognisable by enumerating all possible values of $Q_z$.
For each $Q_z$ we can produce a pair of automata
$(\Aut^{Q_z}_y, \Aut^{Q_z}_z)$
as described above.
Thus, the pre-image can be expressed by
$\bigcup_{Q_z} \brac{\lang{\Aut^{Q_z}_x} \times \lang{\Aut^{Q_z}_y}}$.

\paragraph{Optimised Recognisability}

A problem with the above algorithm is that there are an exponential number of possible values of $Q_z$.
For example, if $\Aut_x$ has $10$ states, there are $2^{100}$ possible values of $Q_z$;
it is infeasible to enumerate them all.
To reduce the number of considered pairs, we use the notion of a \emph{Cayley Graph}~\cite{Z81,D67}.
Note, this technique was already used by Chen~\cite{ChenThesis18,yan-tool} as part of an implementation of~\cite{CCHLW18}.

Given an automaton $\Aut$ and a word $w$, we define
\[
    \caleybox{w} =
    \setcomp{(q_0, q_n)}
            {\text{there exists a run $q_0, \ldots, q_n$ of $\Aut$ over $w$}}.
\]
The number of distinct
$\caleybox{w}$
is finite for a given \FA{}.
We define Cayley Graphs in the context of \FA{}.

\begin{definition}[Cayley Graph]
    Given an \FA{}
    $\Aut = (\ialphabet, \controls, q_0, \finals, \transrel)$
    the \emph{Cayley Graph} of $\Aut$ is a graph
    $(V, E)$
    with nodes
    $V = \setcomp{\caleybox{w}}{w \in \ialphabet^\ast}$
    and edges
    $E = \setcomp{(\caleybox{w}, a, \caleybox{wa})}
                 {w \in \ialphabet^\ast \land a \in \ialphabet}$.
\end{definition}

For a given automaton, it is straight-forward to compute the Cayley Graph using a fixed point iteration: begin with $\caleybox{\emptyword} \in V$, then, until a fixed point is reached, take some $\caleybox{w}$ in $V$ and add $\caleybox{wa}$ for all $a \in \ialphabet$.
Note $\caleybox{wa}$ is a simple composition of $\caleybox{w}$ and
$\caleybox{a} = \setcomp{(q, q')}{(q, a, q') \text{ is an edge of $\Aut$}}$.

We claim that instead of enumerating all $Q_z$, one only needs to consider all
$\caleybox{w} \in V$.
Since $\caleybox{w}$ can be a value of $Q_z$, the restriction does not increase the set of potential solutions.
We need to argue that it does not reduce them.
Hence, take some satisfying value
$u_1 w_z u_2 w_z \ldots w_z u_n$
of $x$.
One can easily verify that $w_z$ is accepted by the $\Aut_z$ constructed from $\caleybox{w_z}$.
Moreover,
$u_1 a u_2 a \ldots a u_n$
is accepted by the corresponding $\Aut_y$.
Thus we have not restricted the algorithm.

From experience, it is reasonable to hope that the Cayley Graph has far fewer nodes than the set of all potential $Q_z$.
Moreover, we can further improve the enumeration by considering any other regular constraints
$\Aut^1_z, \ldots, \Aut^m_z$
that may exist on the value of $z$.
As a simple example, if we had
$\text{\ASSERT{$z \in b^\ast$}};
 x := \replaceall_a(y, z);
 \text{\ASSERT{$x \in \Aut_x$}}$
where $\Aut_x$ has initial state $q_0$ and accepting states $q_1$ and $q_2$ with transitions
$(q_0, a, q_1)$ and $(q_0, b, q_1)$, there is no need to consider
$\caleybox{a} = \set{(q_0, q_1)}$
since $z$ cannot take the value $a$ without violating
$\text{\ASSERT{$z \in b^\ast$}}$.

Using this observation, assume we already know that $z$'s value must be accepted by
$\Aut^1_z, \ldots, \Aut^m_z$.
Instead of building the Cayley Graph alone, we build a product of the Cayley Graph and
$\Aut^1_z, \ldots, \Aut^m_z$
on the fly.
This product has states
$(\caleybox{w}, q_1, \ldots, q_m)$
for some $w \in \ialphabet^\ast$ and
$q_1$, \ldots, $q_m$
states of
$\Aut^1_z, \ldots, \Aut^m_z$
respectively.
The only $\caleybox{w}$ we need to consider are those such that
$(\caleybox{w}, q_1, \ldots, q_m)$
is a (reachable) product state, and, moreover, $q_1, \ldots, q_m$ are accepting states of
$\Aut^1_z, \ldots, \Aut^m_z$.

This technique first speeds up the construction of the Cayley Graph by limiting which nodes are generated, and second, avoids values of $Q_z$ which are guaranteed to be unsatisfying.

\subsection{Extensibility}
\label{sec:extensibility}

One may extend OSTRICH to include any
string function with a recognisable pre-image \emph{without having to worry
about other parts of the solver}.  We give an example of adding a new
$\reverse$ function.

We have defined a Scala trait \mintinline{scala}{PreOp}.
To add a string function, one defines a new Scala object with this trait.
This requires two methods described below.
An example object is given in Figure~\ref{fig:preop-reverse}.

The first method is \mintinline{scala}{eval}, which implements the string
function.
It takes a sequence of strings represented as sequence of integers.
For $\reverse$, this method reverses the argument.
For $\replaceall_a(x, y)$, the \mintinline{scala}{eval} function would take a sequence of two arguments (the values of $x$ and $y$ respectively) and replace all $a$ characters in $x$ with the value of $y$ to produce the result.
The return value can be \mintinline{scala}{None} if the function is not applicable to the given arguments.

The second method \mintinline{scala}{apply} performs the pre-image computation.
It takes two arguments: a sequence of constraints on the arguments (\mintinline{scala}{argumentConstraints}), and a constraint on the result (\mintinline{scala}{resultConstraint}).
The result constraint is represented as an \mintinline{scala}{Automaton} that accepts the language for which we are computing $f^{-1}$.
The argument constraints are represented as sequences of sequences of automata:
for each argument of the function there will be one sequence of automata.
These constraints give further information on what is known about the constraints on the arguments of the function.
For example, if we are computing
$x := \reverse(y)$
and elsewhere we have determined $y$ must be accepted by both $\Aut_1$ and $\Aut_2$, then the first (and only) element of the argument constraints will be the sequence $\Aut_1, \Aut_2$.
It is not necessary to use these constraints, but they may help to optimise the pre-image computation (as in the case of $\replaceall$ described above).

The return value of \mintinline{scala}{apply} is a pair.
The first element is the pre-image (a recognisable relation).
It is represented as an iterator over sequences of automata, where each sequence corresponds to a tuple
$(\Aut_1, \ldots, \Aut_n)$
of the relation.
The second element is a list of the argument constraints used during the pre-image computation; this information is needed to compute correct
conflict sets in Algorithm~\ref{alg:dfs}.
If the argument constraints were not used to optimise the pre-image computation, this value can be an empty list.
If the arguments were used, then those constraints which were used should be returned in the same format as the argument constraints.

For convenience, assume that we have already implemented a reversal operation on automata as
\mintinline{scala}{AutomataUtils.reverse}.
A \mintinline{scala}{PreOp} object for the $\reverse$ function is given in Figure~\ref{fig:preop-reverse}.
The \mintinline{scala}{apply} method reverses the result constraint, and returns an iterator over this single automaton.
The second component of the return value is an empty list since the argument constraints were not used.
The \mintinline{scala}{eval} method simply reverses its first (and only) argument.

\begin{figure}
    \begin{minted}[fontsize=\footnotesize]{scala}
        object ReversePreOp extends PreOp {
          def apply(argumentConstraints : Seq[Seq[Automaton]], resultConstraint : Automaton)
                  : (Iterator[Seq[Automaton]], Seq[Seq[Automaton]]) = {
            val revAut = AutomataUtils.reverse(resultConstraint)
            (Iterator(Seq(revAut)), List())
          }
          def eval(arguments : Seq[Seq[Int]]) : Option[Seq[Int]] = Some(arguments(0).reverse)
        }
    \end{minted}
    \caption{\label{fig:preop-reverse}A \mintinline{scala}{PreOp} for the $\reverse$ function.}
\end{figure}

To complete the addition of the function, the reverse function is
registered in the OSTRICH string theory object. The function is then
ready to be used in an SMT-LIB problem, e.g. by writing the assertion
\verb+(assert (= x (user_reverse y)))+.

\subsection{Experiments}
\label{sec:experiments}

We have compared OSTRICH with a number of existing solvers on a wide range of benchmarks.
In particular, we compared OSTRICH with
    CVC4~1.6~\cite{cvc4},
    SLOTH~\cite{HJLRV18}, and
    Z3\footnote{Github version 2aeb814f4e7ae8136ca5aeeae4d939a0828794c8} 
  configured to use the Z3-str3 string solver~\cite{Z3-str3}.
We considered several sets of benchmarks which are described in the next sub-section.
The results are given in Section~\ref{sec:results}.

In~\cite{HJLRV18} SLOTH was compared with S3P~\cite{TCJ16} where inconsistent behaviour was reported.
We contacted the S3P authors who report that the current code is unsupported;
moreover, S3P is being integrated with Z3.
Hence, we do not compare with this tool.

\subsubsection{Benchmarks}

The first set of benchmarks we call \transducerbench.
It combines the benchmark sets of Stranger \cite{Stranger} and the mutation XSS benchmarks of~\cite{LB16}.
The first (sub-)set appeared in~\cite{HJLRV18} and contains instances manually derived from PHP programs taken from the website of Stranger~\cite{Stranger}.
It contains 10 formulae (5 sat, 5 unsat) each testing for the absence of the vulnerability pattern \verb+.*<script.*+ in the program output.
These formulae contain between 7 and 42 variables, with an average of 21.
The number of atomic constraints ranges between 7 and 38 and averages 18.
These examples use disjunction, conjunction, regular constraints, and 
concatenation, $\replaceall$. They also contain several one-way functional 
transducers (defined in SMTLIB in \cite{HJLRV18}) encoding functions such as 
\verb+addslashes+ and \verb+trim+ used by the programs. Note that transducers
have been known for some time to be a good framework for
specifying sanitisers and browser transductions 
(e.g., see the works by Minamide, Veanes, Saxena, and others
\cite{Min05,BEK,web-model,DV13}), and a library of transducer specifications for
such functions is available (e.g.~see the language BEK \cite{BEK}). 

The second (sub-)set was used by~\cite{LB16,HJLRV18} and consists of 8
formulae taken from~\cite{LB16,Kern14}.
These examples explore mutation XSS vulnerabilities in JavaScript programs.
They contain between 9 and 12 variables, averaging 9.75, and 9-13 atomic constraints, with an average of 10.5.
They use conjunctions, regular constraints, concatenation, $\replaceall$, and transducers providing functions such as \verb+htmlEscape+ and \verb+escapeString+.

Our next set of benchmarks, \slogbench,
came from the SLOG tool~\cite{fang-yu-circuits} and consist of 3,392 instances.
They are derived from the security analysis of real web applications and contain 1-211 string variables (average 6.5) and 1-182 atomic formula (average 5.8).
We split these benchmarks into two sets \slogbenchr\ and \slogbenchra.
Each use conjunction, disjunction, regular constraints, and concatenation.
The set \slogbenchr\ contains 3,391 instances and uses $\replace$.
\slogbenchra\ contains 120 instances using the $\replaceall$ operation.

Our next set of benchmarks \kaluzabench\ is the well-known set of \emph{Kaluza}
benchmarks~\cite{Berkeley-JavaScript} restricted to those instances which
satisfy our semantic conditions (roughly ${\sim}80$\% of the benchmarks).
Kaluza contains concatenation, regular
constraints, and length constraints, most of which admit regular monadic
decomposition.
There are 37,090 such benchmarks (28\,032 sat).

We also considered the benchmark set of~\cite{yan-tool,ChenThesis18}.
This contains 42 hand-crafted benchmarks using regular constraints, concatenation, and $\replaceall$ with variables in both argument positions.
The benchmarks contain 3-7 string variables and 3-9 atomic constraints.

\subsubsection{Results}
\label{sec:results}

We compared the tools on an AMD Opteron 2220 SE machine,
running 64-bit Linux and Java~1.8, with the heap space of
each job limited to 2~GB.
We used a timeout of 240s for each Kaluza problem, and 600s for
the other benchmarks.
Figure~\ref{fig:results} summarises our findings as cactus plots.
For each benchmark set, we plot the time in seconds on a cubic-root scale against the number of benchmarks solved (individually) within that time.
The extent of each line on the Time axis gives the maximum time in seconds required to solve any instance in the set.
When a solver is not plotted it is because it was unable to analyse the benchmark set.

\begin{figure}
    \includegraphics[width=0.9\linewidth]{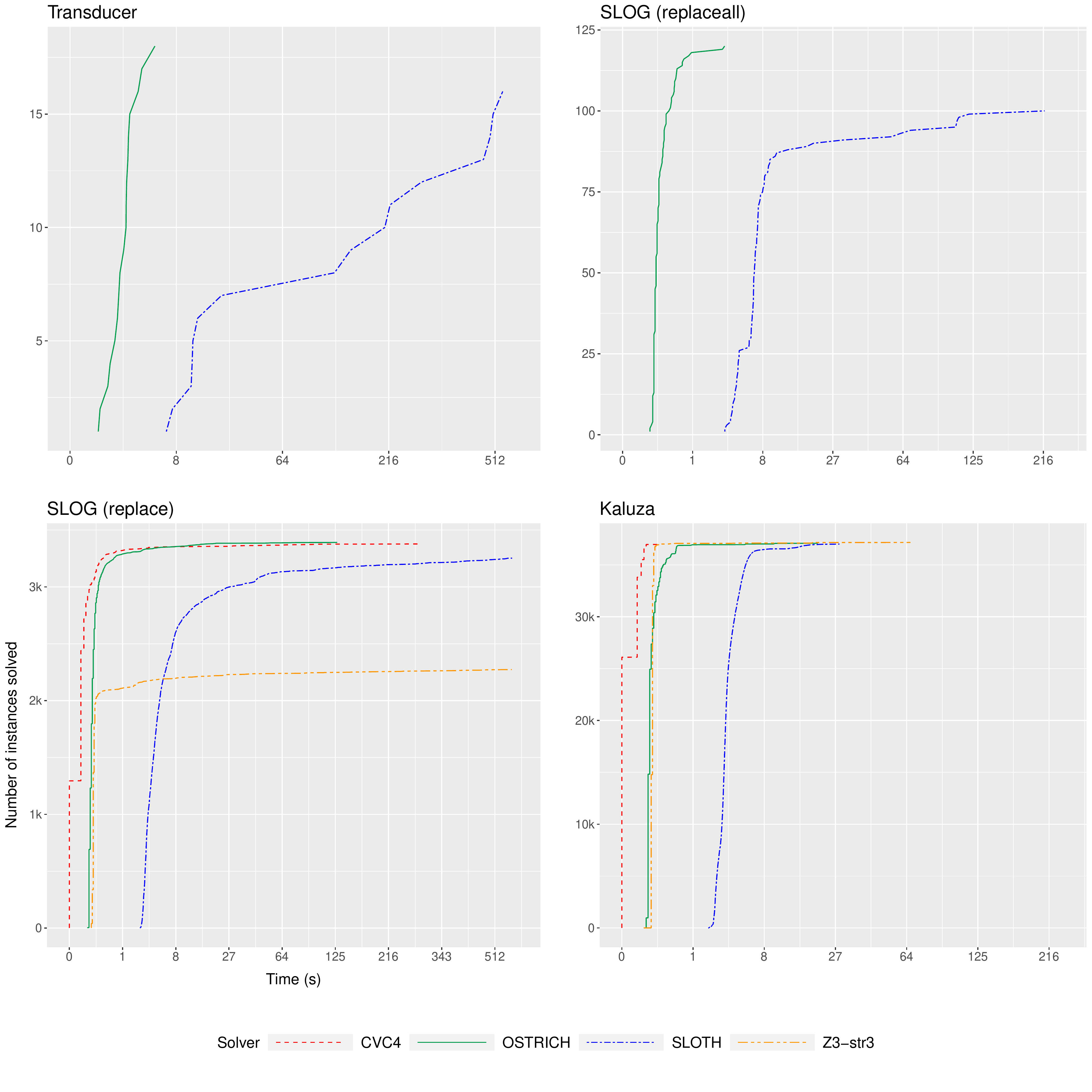}
    \vspace{-4ex}
    \caption{\label{fig:results}Comparison of solvers on several sets of benchmarks.}
\end{figure}

For the \transducerbench\ set, OSTRICH solved all benchmarks taking a maximum of 4s.
SLOTH did not answer 2 instances and was slower on the rest.
This set is not supported by CVC4 or Z3-str3.

For the \slogbenchra\ set, OSTRICH solved all 120 instances within a few seconds, while SLOTH only solved 100.
CVC4 and Z3-str3 were omitted as they do not support $\replaceall$ constraints.

For the \slogbenchr\ set, OSTRICH was also able to solve all instances within the timeout.
CVC4 was able to solve all but 13 of the benchmarks.
Similarly, SLOTH was unable to solve 138 instances, while Z3-str3 could not solve 1,118.
We note that Z3-str3 gave inconsistent results for 18 of these benchmarks,
an issue that could not be conclusively resolved before submission..

For the \kaluzabench\ set, CVC4, Z3-str3, and OSTRICH were able to solve all instances.
Since SLOTH does not support length constraints, it reported errors in 81 of these benchmarks.
Otherwise all instances were solved within the timeout.

We were unable to install the tool of~\cite{yan-tool} for comparison with the benchmarks of~\cite{yan-tool,ChenThesis18}.
Since OSTRICH was the only available tool supporting variables in both arguments of $\replaceall$, we do not provide a plot.
We note that the most difficult benchmark took OSTRICH 1.56s to answer, with the second hardest requiring 0.34s.

Overall, CVC4 was the fastest for the constraints it was able to answer,
while
OSTRICH was the only solver which answered all benchmark instances.
However, the runtime differences are fractions of a second.
In terms of completeness guarantees and the type of string constraints supported, SLOTH is our nearest competitor, with the main difference being that OSTRICH supports variables in both argument positions of $\replaceall$, while SLOTH will only accept constant strings as the second argument.
Our results show that OSTRICH outperforms SLOTH on all benchmark sets.




\section{Conclusion}\label{sec-conc}

We proposed two general semantic conditions which together ensure the 
decidability of path feasibility with complex string operations including
$\replaceall$, transducers, and concatenation. Our semantic conditions are 
satisfied by a 
wide range of complex string operations and subsume various existing string 
constraint languages
(e.g.~\cite{LB16,CCHLW18}) and many current existing benchmarks
(e.g.~\cite{Berkeley-JavaScript,HJLRV18,Stranger,fang-yu-circuits,LB16}).
Based on the semantic conditions, we developed a conceptually simple and
generic decision procedure with an extensible architecture that allow a user to 
easily incorporate a user-defined function.
After providing theoretical evidence via computational complexity that these 
semantic conditions might be too general to provide an efficient decision
procedure, we proposed two different solutions.
We advocated the first solution (prohibit nondeterminism in string operations)
whenever possible since it permits a highly effective optimisation of the solver
based on a kind of distributivity property of regular constraints across string
functions. In fact, the extra restriction imposed by this is satisfied by
most (but not all) existing benchmarking examples.
We developed a new string solver OSTRICH that implements the first solution and 
demonstrate its
%
    efficacy against other competitive solvers on most existing benchmarks.


\subsubsection*{Acknowledgments}

We are grateful to the anonymous referees for their constructive and detailed comments.
T. Chen is supported by the
Engineering and Physical Sciences Research Council under Grant No.~{EP/P00430X/1} and the
Australian Research Council under Grant No.~{DP160101652, DP180100691}.
    M. Hague is supported by the
    Engineering and Physical Sciences Research Council
    under Grant No.~{EP/K009907/1}.
    A.~Lin is supported by the European Research Council (ERC) under the European
    Union's Horizon 2020 research and innovation programme (grant agreement no
    759969).
    P.\ R\"ummer is supported by the Swedish Research Council (VR)
    under grant~2014-5484, and by the Swedish Foundation for Strategic
    Research (SSF) under the project WebSec (Ref.\ RIT17-0011).
    Z. Wu is supported by the
    National Natural Science Foundation of China
    under Grant No.~61472474, ~61572478, and ~61872340,
    the INRIA-CAS joint research project ``Verification, Interaction, and Proofs''.


\newpage

\bibliography{references}

\shortlong{}{

\newpage


\appendix
\begin{center}
{\huge Appendix} \\
{\large `` Decision Procedures for Path Feasibility of String-Manipulating 
	Programs with Complex Operations''}
\end{center}

\bigskip


\section{Proof of Proposition~\ref{prop:nexpspace-hardness}: Hardness of $\strline$ with 2FTs+conc and $\strline$ with FTs+$\replaceall$}
\label{sec:two-way-lower}

We first give a full definition of tiling problems before showing the result.
We give the details here of the proof for $\strline$ with 2FTS+conc.
The proof can easily be adapted to the case of FTs~$\replaceall$ using the simulation described in Section~\ref{sec:two-way-simulation}.

\subsection{Tiling Problems}

A \emph{tiling problem} is a tuple
$\tup{\tiles, \hrel, \vrel, \inittile, \fintile}$
where
    $\tiles$ is a finite set of tiles,
    $\hrel \subseteq \tiles \times \tiles$ is a horizontal matching relation,
    $\vrel \subseteq \tiles \times \tiles$ is a vertical matching relation, and
    $\inittile, \fintile \in \tiles$ are initial and final tiles respectively.

A solution to a tiling problem over a $\linlen$-width corridor is a sequence
\[
    \begin{array}{c}
        \tile^1_1 \ldots \tile^1_\linlen \\
        \tile^2_1 \ldots \tile^2_\linlen \\
        \ldots \\
        \tile^\tileheight_1 \ldots \tile^\tileheight_\linlen
    \end{array}
\]
where
$\tile^1_1 = \inittile$,
$\tile^\tileheight_\linlen = \fintile$,
and for all
$1 \leq i < \linlen$
and
$1 \leq j \leq \tileheight$
we have
$\tup{\tile^j_i, \tile^j_{i+1}} \in \hrel$
and for all
$1 \leq i \leq \linlen$
and
$1 \leq j < \tileheight$
we have
$\tup{\tile^j_i, \tile^{j+1}_i} \in \vrel$.
Note, we will assume that $\inittile$ and $\fintile$ can only appear at the beginning and end of the tiling respectively.

Tiling problems characterise many complexity classes~\cite{BGG97}.
In particular, we will use the following facts.
\begin{itemize}
\item
    For any $\linlen$-space Turing machine, there exists a tiling problem of size polynomial in the size of the Turing machine, over a corridor of width $\linlen$, that has a solution iff the $\linlen$-space Turing machine has a terminating computation.
    In particular we will consider problems where the corridor width is
    $2^{:^{2^\linlen}}$
    where the height of the stack of exponentials is $\expheight$.
    E.g.\ when $\expheight$ is $0$ the width is $\linlen$, when $\expheight$ is $1$ the width is $2^\linlen$, when $\expheight$ is $2$ the width is $2^{2^\linlen}$ and so on.
    Solving tiling problems of width
    $2^{:^{2^\linlen}}$
    is complete for the same amount of space.

\item
    There is a fixed
    $\tup{\tiles, \hrel, \vrel, \inittile, \fintile}$
    such that for any width $\linlen$ there is a unique solution
    \[
        \begin{array}{c}
            \tile^1_1 \ldots \tile^1_\linlen \\
            \tile^2_1 \ldots \tile^2_\linlen \\
            \ldots \\
            \tile^\tileheight_1 \ldots \tile^\tileheight_\linlen
        \end{array}
    \]
    and moreover $\tileheight$ is exponential in $\linlen$.
    One such example is a Turing machine where the tape contents represent a binary number.
    The Turing machine starts from a tape containing only $0$s and finishes with a tape containing only $1$s by repeatedly incrementing the binary encoding on the tape.
    This Turing machine can be encoded as the required tiling problem.
\end{itemize}

\subsection{Large Numbers}

The crux of the proof is encoding large numbers that can take values between $1$ and $\expheight$-fold exponential.

A linear-length binary number could be encoded simply as a sequence of bits
\[
    b_0 \ldots b_\linlen \in \set{0,1}^\linlen \ .
\]
To aid with later constructions we will take a more oblique approach.
Let
$\tup{\tilesnum{1}, \hrelnum{1}, \vrelnum{1}, \inittilenum{1}, \fintilenum{1}}$
be a copy of the fixed tiling problem from the previous section for which there is a unique solution, whose length must be exponential in the width.
In the future, we will need several copies of this problem, hence the indexing here.
Fix a width $\linlen$ and let $\nmax{1}$ be the corresponding corridor length.
A \emph{level-1} number can encode values from $1$ to $\nmax{1}$.
In particular, for $1 \leq i \leq \nmax{1}$ we define
\[
    \tenc{1}{i} = \tile^i_1 \ldots \tile^i_\linlen
\]
where
$\tile^i_1 \ldots \tile^i_\linlen$
is the tiling of the $i$th row of the unique solution to the tiling problem.

A \emph{level-2} number will be derived from tiling a corridor of width $\nmax{1}$, and thus the number of rows will be doubly-exponential.
For this, we require another copy
$\tup{\tilesnum{2}, \hrelnum{2}, \vrelnum{2}, \inittilenum{2}, \fintilenum{2}}$
of the above tiling problem.
Moreover, let $\nmax{2}$ be the length of the solution for a corridor of width $\nmax{1}$.
Then for any
$1 \leq i \leq \nmax{2}$
we define
\[
    \tenc{2}{i} =
        \tenc{1}{1} \tile^i_1
        \tenc{1}{2} \tile^i_2
        \ldots
        \tenc{1}{\nmax{1}} \tile^i_{\nmax{1}}
\]
where
$\tile^i_1 \ldots \tile^i_{\nmax{1}}$
is the tiling of the $i$th row of the unique solution to the tiling problem.
That is, the encoding indexes each tile with it's column number, where the column number is represented as a level-1 number.

In general, a \emph{level-$\expheight$} number is of length $(\expheight-1)$-fold exponential and can encode numbers $\expheight$-fold exponential in size.
We use a copy
$\tup{\tilesnum{\expheight},
      \hrelnum{\expheight},
      \vrelnum{\expheight},
      \inittilenum{\expheight},
      \fintilenum{\expheight}}$
of the above tiling problem and use a corridor of width
$\nmax{\expheight-1}$.
We define $\nmax{\expheight}$ as the length of the unique solution to this problem.
Then, for any $1 \leq i \leq \nmax{\expheight}$ we have
\[
    \tenc{\expheight}{i} =
        \tenc{(\expheight-1)}{1} \tile^i_1
        \tenc{(\expheight-1)}{2} \tile^i_2
        \ldots
        \tenc{(\expheight-1)}{\nmax{(\expheight-1)}} \tile^i_{\nmax{(\expheight-1)}}
\]
where
$\tile^i_1 \ldots \tile^i_{\nmax{\expheight-1}}$
is the tiling of the $i$th row of the unique solution to the tiling problem.

\subsection{Recognising Large Numbers}

We first define a useful program
$\goodnums{\expheight}{x}$
with a single input string $x$ which can only be satisfied if $x$ is of the following form, where
$\numeq$, $\numplus$, and $\numsep$
are auxiliary symbols.
\[
    \begin{array}{c}
        \goodnums{\expheight}{x} \text{ is satisfiable} \\
        \iff \\
        x \in \brac{
            \brac{\tenc{\expheight}{1} \brac{\numeq \tenc{\expheight}{1}}^\ast}
            \numplus
            \brac{\tenc{\expheight}{2} \brac{\numeq \tenc{\expheight}{2}}^\ast}
            \numplus
            \ldots
            \numplus
            \brac{
                \tenc{\expheight}{\nmax{\expheight}}
                    \brac{\numeq \tenc{\expheight}{\nmax{\expheight}}}^\ast
            }
            \numsep
        }^\ast
    \end{array}
\]
That is, the string must contain sequences of strings that count from $1$ to $\nmax{\expheight}$.
This counting may stutter and repeat a number several times before moving to the next.
A separator $\numeq$ indicates a stutter, while $\numplus$ indicates that the next number must add one to the current number.
Finally, a $\numsep$ ends a sequence and may start again from $\tenc{\expheight}{1}$.

\paragraph{Base case $\expheight = 1$.}

We define
\[
    \goodnums{1}{x} = \left\{
        \begin{array}{l}
            y := \ap{T}{x}; \\
            \ASSERT{y \in \top}
        \end{array}
    \right.
\]
where $\top$ is a character output by $T$ when $x$ is correctly encoded.
Otherwise $T$ outputs $\bot$.

We describe how $T$ operates.
Because $T$ is two-way, it may perform several passes of the input $x$.
Recall that $T$ requires $x$ to contain a word of the form
\[
    \brac{
        \brac{\tenc{1}{1} \brac{\numeq \tenc{1}{1}}^\ast}
        \numplus
        \brac{\tenc{1}{2} \brac{\numeq \tenc{1}{2}}^\ast}
        \numplus
        \ldots
        \numplus
        \brac{
            \tenc{1}{\nmax{1}}
                \brac{\numeq \tenc{1}{\nmax{1}}}^\ast
        }
        \numsep
    }^\ast
\]
and each $\tenc{1}{i}$ is the $i$th row of the unique solution to the tiling problem of width $\linlen$.
The passes proceed as follows.
If a passes fails, the transducer outputs $\bot$ and terminates.
If all passes succeed, the transducer outputs $\top$ and terminates.
\begin{itemize}
\item
    During the first pass $T$ verifies that the input is of the form
    \[
        \brac{
            \tilesnum{1}^\linlen
            \brac{\set{\numeq,\numplus} \tilesnum{1}^\linlen}^\ast
            \numsep
        }^\ast \ .
    \]

\item
    During the second pass the transducer verifies that the first block of
    $\tilesnum{1}^\linlen$,
    and all blocks of
    $\tilesnum{1}^\linlen$
    immediately following a $\numsep$ have $\inittilenum{1}$ as the first tile.
    Simultaneously, it can verify that all blocks of
    $\tilesnum{1}^\linlen$
    immediately preceding a $\numsep$ finish with the tile $\fintilenum{1}$.
    Moreover, it checks that $\inittilenum{1}$ and $\fintilenum{1}$ do not appear elsewhere.

\item
    During the third pass $T$ verifies the horizontal tiling relation.
    That is, every contiguous pair of tiles $\tile, \tile'$ in $x$ must be such that
    $(\tile, \tile') \in \hrelnum{1}$.
    This can easily be done by storing the last character read into the states of $T$.

\item
    The vertical tiling relation and equality checks are verified using $\linlen$ more passes.
    During the $j$th pass, the $j$th column is tested.
    The transducer $T$ stores in its state the tile in the $j$th column of the first block of $\tilesnum{1}^\linlen$ or any block immediately following $\numsep$.
    (The transducer can count to $\linlen$ in its state.)
    It then moves to the $j$th column of the next block of $\tilesnum{1}^\linlen$, remembering whether the blocks were separated with $\numeq$, $\numplus$, or $\numsep$.
    If the separator was $\numeq$ the transducer checks that the $j$th tile of the current block matches the tile stored in the state (i.e.~is equal to the preceding block).
    If the separator was $\numplus$ the transducer checks that the $j$th tile of the current block is related by $\vrelnum{1}$ to the stored tile.
    In this case the current $j$th tile is stored and the previously stored tile forgotten.
    Finally, if the separator was $\numsep$ there is nothing to check.
    If any check fails, the pass will also fail.
\end{itemize}

If all passes succeed, we know that $x$ contains a word where blocks separated by $\numeq$ are equal
(since all positions are equal, as verified individually by the final $\linlen$ passes),
blocks separated by $\numplus$ satisfy $\vrelnum{1}$ in all positions,
the $\inittilenum{1}$ tile appears at the start of all sequences separated by $\numsep$ and each such sequence ends with $\fintilenum{1}$, and
finally the horizontal tiling relation is satisfied at all times.
Thus, $x$ must be of the form
\[
    \brac{
        \brac{\tenc{1}{1} \brac{\numeq \tenc{1}{1}}^\ast}
        \numplus
        \brac{\tenc{1}{2} \brac{\numeq \tenc{1}{2}}^\ast}
        \numplus
        \ldots
        \numplus
        \brac{
            \tenc{1}{\nmax{1}}
                \brac{\numeq \tenc{1}{\nmax{1}}}^\ast
        }
        \numsep
    }^\ast
\]
as required.

\paragraph{Inductive case $\expheight$.}

We define a program
$\goodnums{\expheight}{x}$
such that
\[
    \begin{array}{c}
        \goodnums{\expheight}{x} \text{ is satisfiable} \\
        \iff \\
        x \in \brac{
            \brac{\tenc{\expheight}{1} \brac{\numeq \tenc{\expheight}{1}}^\ast}
            \numplus
            \brac{\tenc{\expheight}{2} \brac{\numeq \tenc{\expheight}{2}}^\ast}
            \numplus
            \ldots
            \numplus
            \brac{
                \tenc{\expheight}{\nmax{\expheight}}
                    \brac{\numeq \tenc{\expheight}{\nmax{\expheight}}}^\ast
            }
            \numsep
        }^\ast \ .
    \end{array}
\]
Assume, by induction, we have a program
$\goodnums{\expheight-1}{x}$
which already satisfies this property (for $\expheight-1$).
We define
\[
    \goodnums{\expheight}{x} =
    \left\{
        \begin{array}{l}
            y = \ap{T}{x}; \\
            \goodnums{\expheight-1}{y}
        \end{array}
    \right.
\]
where $T$ is a transducer that behaves as described below.
Note, the reference to
$\goodnums{\expheight-1}{y}$
is not a procedure call, since these are not supported by our language.
Instead, the procedure is inlined, with its input variable $x$ replaced by $y$ and other variables renamed to avoid clashes.

The transducer will perform several passes to make several checks.
If a check fails it will halt and output a symbol $\bot$, which means that $y$ can no longer satisfy
$\goodnums{\expheight-1}{y}$.
During normal execution $T$ will make checks that rely on level-$(\expheight-1)$ numbers appearing in the correct sequence or being equal.
To ensure these properties hold, $T$ will write these numbers to $y$ and rely on these properties then being verified by
$\goodnums{\expheight-1}{y}$.
The passes behave as follows.
\begin{itemize}
\item
    During the first pass $T$ verifies that $x$ belongs to the regular language
    \[
        \brac{
            \brac{
                \brac{
                    \brac{
                        \brac{\tilesnum{1}^\linlen \tilesnum{2}}^\ast \tilesnum{3}
                    }^\ast
                    \cdots
                }^\ast
                \tilesnum{\expheight}
            }^\ast
            \brac{
                \set{\numeq,\numplus}
                \brac{
                    \brac{
                        \brac{
                            \brac{\tilesnum{1}^\linlen \tilesnum{2}}^\ast \tilesnum{3}
                        }^\ast
                        \cdots
                    }^\ast
                    \tilesnum{\expheight}
                }^\ast
            }^\ast
            \numsep
         }^\ast
         \ .
    \]
    This can be done with a polynomial number of states.

\item
    During the second pass $T$ will verify that the first instance of
    $\tilesnum{\expheight}$
    appearing in the word or after a $\numsep$ is $\inittilenum{\expheight}$.
    Similarly, the final instance of any
    $\tilesnum{\expheight}$
    before any $\numsep$ is $\fintilenum{\expheight}$.
    Moreover, it checks that
    $\inittilenum{\expheight}$
    and
    $\fintilenum{\expheight}$
    do not appear elsewhere.

\item
    During the third pass $T$ will verify the horizontal tiling relation
    $\hrelnum{\expheight}$.
    Each block (separated by $\numeq$, $\numplus$, or $\numsep$) is checked in turn.
    There are two components to this.
    \begin{itemize}
    \item
        The indexing of the tiles must be correct.
        That is, the first tile of the block must be indexed
        $\tenc{\expheight-1}{1}$,
        the second
        $\tenc{\expheight-1}{2}$,
        through to
        $\tenc{\expheight-1}{\nmax{\expheight-1}}$.
        Thus, $T$ copies directly the instance of
        $\brac{
            \brac{
                \brac{\tilesnum{1}^\linlen \tilesnum{2}}^\ast \tilesnum{3}
            }^\ast
            \cdots
        }^\ast$
        preceding each
        $\tilesnum{\expheight}$
        to the output tape, followed immediately by
        $\numplus$
        as long as the character after
        $\tilesnum{\expheight}$
        is not a separator from
        $\set{\numeq,\numplus,\numsep}$.
        Otherwise, it is the end of the block and $\numsep$ is written.

        Hence,
        $\goodnums{\expheight-1}{y}$
        will verify that the output for each block is
        $\tenc{\expheight-1}{1}
         \numplus \cdots \numplus
         \tenc{\expheight-1}{\nmax{\expheight-1}}$
        which enforces that the indexing of the tiles is correct.

    \item
        Horizontally adjacent tiles must satisfy
        $\hrelnum{\expheight}$.
        This is done by simply storing the last read tile from
        $\tilesnum{\expheight}$
        in the state of $T$.
        Then whenever a new tile from
        $\tilesnum{\expheight}$
        is seen without a separator $\numeq$, $\numplus$, or $\numsep$, then it can be checked against the previous tile and
        $\hrelnum{\expheight}$.
    \end{itemize}

\item
    The transducer $T$ then performs a non-deterministic number of passes to check the vertical tiling relation.
    We will use
    $\goodnums{\expheight-1}{y}$
    to ensure that $T$ in fact performs
    $\nmax{\expheight-1}$
    passes, the first checking the first column of the tiling over
    $\tilesnum{\expheight}$,
    the second checking the second column, and so on up to the
    $\nmax{\expheight-1}$th column.

    Note, we know from the previous pass that each row of the tiling is indexed correctly.
    In the sequel, let us use the term ``session'' to refer to the sequences of characters separated by $\numsep$.

    Each pass of $T$ checks a single column (across all sessions).
    At the start of each session, $T$ moves non-deterministically to the start of some block
    $\brac{
        \brac{
            \brac{\tilesnum{1}^\linlen \tilesnum{2}}^\ast \tilesnum{3}
        }^\ast
        \cdots
    }^\ast
    \tilesnum{\expheight}$
    (without passing $\numeq$, $\numplus$, or $\numsep$).
    It then copies the tiles from
    $\brac{
        \brac{
            \brac{\tilesnum{1}^\linlen \tilesnum{2}}^\ast \tilesnum{3}
        }^\ast
        \cdots
    }^\ast$
    to $y$ and saves the tile from
    $\tilesnum{\expheight}$
    in its state before moving to the next separator from
    $\set{\numeq,\numplus,\numsep}$.
    In the case of $\numsep$ nothing needs to be checked and $T$ continues to the next session or finishes the pass if there are no more sessions.
    In the case of $\numeq$ or $\numplus$ the transducer remembers this separator and moves non-deterministically to the start of some block (without passing another $\numeq$, $\numplus$, or $\numsep$).
    It then writes $\numeq$ to $y$ as it is intended that $T$ choose the same column as before.
    This will be verified by
    $\goodnums{\expheight-1}{y}$.
    To aid with this $T$ copies the tiles from
    $\brac{
        \brac{
            \brac{\tilesnum{1}^\linlen \tilesnum{2}}^\ast \tilesnum{3}
        }^\ast
        \cdots
    }^\ast$
    to $y$.
    It can then check the tile from
    $\tilesnum{\expheight}$.
    If the remembered separator was $\numeq$ then this tile must match the saved one.
    If it was $\numplus$ then this tile must be related by
    $\vrelnum{\expheight}$
    to the saved one.
    If this succeeds , $T$ stores the new tile and forgets the old and continues to the next separator to continue checking
    $\vrelnum{\expheight}$.

    At the end of the pass (checking a single column from all sessions) then $T$ will either have failed and written $\bot$ or written a sequence of level-$(\expheight-1)$ numbers to $y$ separated by $\numeq$. That is
$
        \tenc{\expheight-1}{i_1}
        \numeq
        \cdots
        \numeq
        \tenc{\expheight-1}{i_{\alpha}}
$
    for some $\alpha$.
    Since, by induction,
    $\goodnums{\expheight-1}{y}$
    is correct, then the program can only be satisfied if $T$ chose the same position in each row.
    That is
    $i_1 = \cdots = i_\alpha$.
    Thus, the vertical relation for the $i_1$th column has been verified.

    At this point $T$ can either write $\numsep$ and terminate or perform another pass (non-deterministically).
    In the latter case, it outputs $\numplus$, moves back to the beginning of the tape, and starts again.
    Thus, after a number of passes, $T$ will have written

\smallskip
\hspace{1.5cm} $
        \brac{
            \tenc{\expheight-1}{i^1_1}
            \numeq
            \cdots
            \numeq
            \tenc{\expheight-1}{i^1_{\alpha_1}}
        }
        \numplus
        \cdots
        \numplus
        \brac{
            \tenc{\expheight-1}{i^\beta_1}
            \numeq
            \cdots
            \numeq
            \tenc{\expheight-1}{i^\beta_{\alpha_\beta}}
        }
        \numsep
 $
 \smallskip

    for some $\beta$, $\alpha_1$, \ldots, $\alpha_\beta$.
    Since
    $\goodnums{\expheight-1}{y}$
    will only accept such sequences of the form
    \[
        \brac{
            \tenc{\expheight-1}{1}
            \brac{
                \numeq \tenc{\expheight-1}{1}
            }^\ast
        }
        \numplus
        \cdots
        \numplus
        \brac{
            \tenc{\expheight-1}{\nmax{\expheight-1}}
            \brac{
                \numeq \tenc{\expheight-1}{\nmax{\expheight-1}}
            }^\ast
        }
        \numsep
    \]
    we know that $T$ must check each vertical column in turn, from $1$ to
    $\nmax{\expheight-1}$.
\end{itemize}

Thus, at the end of all passes, if $T$ has not output $\bot$ it has verified that $x$ is a correct encoding of a solution to
$\tup{\tilesnum{\expheight},
      \hrelnum{\expheight},
      \vrelnum{\expheight},
      \inittilenum{\expheight},
      \fintilenum{\expheight}}$.
That is, together with
$\goodnums{\expheight-1}{y}$
we know that
    the word is of the correct format,
    each row has a tile for each index and these indices appear in order,
    the horizontal relation is respected, and
    the vertical tiling relation is respected.
If $x$ is not a correct encoding then $T$ will not be able to produce a $y$ that satisfies
$\goodnums{\expheight-1}{y}$.

\subsection{Reducing from a Tiling Problem}

Now that we are able to encode large numbers, we can encode an $\expheight$-$\expspace$-hard tiling problem as a satisfiability problem of $\strline[T]$ with two-way transducers.
In fact, most of the technical work has been done.

Thus, fix a tiling problem
$\tup{\tiles, \hrel, \vrel, \inittile, \fintile}$
that is $\expheight$-$\expspace$-hard.
In particular, we allow a corridor $\nmax{\expheight}$ tiles wide.
We use the program
\[
    S = \left\{
        \begin{array}{l}
            y = \ap{T}{x}; \\
            \goodnums{\expheight}{y}
        \end{array}
    \right.
\]
where $T$ is defined exactly as in the inductive case of
$\goodnums{\expheight}{y}$
except the tiling problem used is
$\tup{\tiles, \hrel, \vrel, \inittile, \fintile}$
rather than
$\tup{\tilesnum{\expheight},
      \hrelnum{\expheight},
      \vrelnum{\expheight},
      \inittilenum{\expheight},
      \fintilenum{\expheight}}$.

A satisfying tiling
\[
    \begin{array}{c}
        \tile^1_1 \ldots \tile^1_{\nmax{\expheight}} \\
        \cdots \\
        \tile^\tileheight_1 \ldots \tile^\tileheight_{\nmax{\expheight}}
    \end{array}
\]
can be encoded
\[
    \tenc{\expheight}{1} \tile^1_1
    \ldots
    \tenc{\expheight}{\nmax{\expheight}} \tile^1_{\nmax{\expheight}}
    \numplus
    \cdots
    \numplus
    \tenc{\expheight}{1} \tile^\expheight_1
    \ldots
    \tenc{\expheight}{\nmax{\expheight}} \tile^\expheight_{\nmax{\expheight}}
    \numsep
\]
which will satisfy $S$ in the same way as a correct input to
$\goodnums{\expheight}{y}$.
To see this, note that
\[
    \tenc{\expheight}{1} \tile^i_1
    \ldots
    \tenc{\expheight}{\nmax{\expheight}} \tile^i_{\nmax{\expheight}}
\]
acts like some
$\tenc{\expheight+1}{i}$.

In the opposite direction, assume some input satisfying $S$.
Arguing as in the encoding of large numbers, this input must be of the form
\[
    \brac{
        \brac{\tilerow_1 \brac{\numeq \tilerow_1}^\ast}
        \numplus
        \brac{\tilerow_2 \brac{\numeq \tilerow_2}^\ast}
        \numplus
        \cdots
        \numplus
        \brac{
            \tilerow_\tileheight
            \brac{\numeq \tilerow_\tileheight}^\ast
        }
        \numsep
    }^\ast
\]
where each $\tilerow_i$ is a row of a correct solution to the tiling problem.

Thus, with $\expheight+1$ transducers, we can encode a $\expheight$-$\expspace$-hard problem.


\section{Proof of Proposition~\ref{prop-fft-upper}}\label{app-fft-upper}

We start with a remark that, as mentioned the concatenation function $\concat$ can be encoded by the $\replaceall$ function (cf. \cite{CCHLW18} for details), so we shall not discuss  $\concat$ separately in the proof.

The following lemma is crucial for showing Proposition~\ref{prop-fft-upper}.  
\begin{lemma}\label{lem-prerec-comp}
Let $(\Aut, \conacc)$  be a conjunctive FA. Then for each string function  $f$ in \strlinefft{},
	there is an algorithm that runs in $(\ell_f(|f|, |(\Aut,\conacc)|))^{c_0}$ space (where $c_0$ is a constant) which enumerates
	each disjunct of a conjunctive representation of $\Pre_{R_f}((\Aut, \conacc))$, whose atom size is bounded by $\ell_f(|f|, |(\Aut,\conacc)|)$, where
%
\begin{itemize}
\item if $f$ is $\replaceall_e$ for a regular expression $e$, then $|f| = |e|$ and $\ell_f(i, j)= 2^{c_1 i^{c_2}} j$ for some constants $c_1,c_2$,
\item if $f$ is $\reverse$, then $|f| = 1$ and $\ell_f(i, j) = j$,
\item if $f$ is defined by an \FunFT{} $\Transducer$, then $|f| = |\Transducer|$ and $\ell_f(i, j)= i j$.
\end{itemize}
\end{lemma}


%
%
%


\begin{proof}
	Let $\Aut=(\controls, \transrel), \conacc)$ be a conjunctive \FA.
	
	For $f=\replaceall_e$, the result was shown in \cite{CCHLW18} .
	
	For the $\reverse$ function,  $\Pre_{R_\reverse}((\Aut, \conacc))$ is exactly the language defined by $(\Aut^\revsym, \conacc^\revsym)$, where 
	$\conacc^\revsym = \{(q_2, q_1) \mid (q_1, q_2) \in \conacc\}$.
	
	If $f$ is defined by an \FunFT{} $\Transducer=(\controls', q'_0, \finals', \transrel')$, then $\Pre_{R_T}(\Aut, \conacc)$ is conjunctively represented by $((\Aut'', \conacc_{q'}))_{q' \in \finals'}$, where
	\begin{itemize}
		\item $\Aut'' = (\controls'', \transrel'')$, with $\controls'' = \controls' \times \controls$, $\transrel''$ comprises the transitions $((q'_1, q_1), a, (q'_2, q_2))$ such that there exists $w \in \ialphabet^*$ satisfying that $(q'_1, a, w, q'_2) \in \transrel'$ and $q_1 \xrightarrow[\Aut]{w} q_2$,
		\item $\conacc_{q'} = \{((q'_0, q_1), (q', q_2)) \mid (q_1, q_2) \in \conacc\}$.
	\end{itemize}
	Note that the language defined by $((\Aut'', \conacc_{q'}))_{q' \in \finals'}$ is $\bigcup \limits_{q' \in \finals'} \Lang(\Aut'', \conacc_{q'})$. The size of each atom $(\Aut'', \conacc_{q'})$ is $|T||\Aut| = |f| |\Aut|$.
\end{proof}

\begin{proof}[Proof of Proposition~\ref{prop-fft-upper}]
Let $S$ be a program in \strlinefft{}. For technical convenience, we consider the \emph{dependency graph} of $S$, denoted by $G_S=(V_S, E_S)$, where $V_S$ is the set of string variables in $S$, and $E_S$ comprises the edges $(y, x_j)$ for each assignment $y := f(x_1, \ldots, x_r)$ in $S$.

Recall that the decision procedure in the proof of Theorem~\ref{th:gen} works by repeatedly removing the last assignment, say $y := f(\vec{x})$, and generating new assertions involving $\vec{x}$ from the assertions of $y$.
We adapt that decision procedure as follows:
\begin{itemize}
\item Replace \FA{}s with conjunctive \FA{}s and use the conjunctive representations of the pre-images of string operations.
\item Before removing the assignments, a preprocessing is carried out for $S$ as follows: For each assertion $\ASSERT{R(\vec{x})}$ with $\vec{x} = (x_1,\ldots, x_\arity)$ in $S$, nondeterministically guess a disjunct of the conjunctive representation of $R$, say $((\Aut_{1}, \conacc_{1}), \ldots, (\Aut_{\arity}, \conacc_{\arity}))$, and replace $\ASSERT{R(\vec{x})}$ with the sequence of assertions $\ASSERT{x_1 \in (\Aut_1, \conacc_1)}; \ldots; \ASSERT{x_\arity \in (\Aut_\arity, \conacc_\arity)}$. Note that after the preprocessing, each assertion is of the form $\ASSERT{y \in (\Aut, \conacc)}$ for a string variable $y$ and a conjunctive \FA{} $(\Aut, \conacc)$. Let $S_0$ be the resulting program  after preprocessing  $S$.

\item When removing each assignment $y:=f(\vec{x})$ with $\vec{x} = (x_1, \ldots, x_\arity)$, for each conjunctive \FA{} $(\Aut, \conacc)\in \sigma$ (where $\sigma$ is the collective constraints for $y$), nondeterministically guess one disjunct of the pre-image of $f$ under $(\Aut, \conacc)$ (NB.\ here we neither compute the product of the conjunctive \FA{}s from $\sigma$, nor compute an explicit representation of the pre-image), say $((\Aut_{1}, \conacc_{1}), \ldots, (\Aut_{\arity}, \conacc_{\arity}))$, and insert the sequence of assertions $\ASSERT{x_1 \in (\Aut_1, \conacc_1)}; \ldots; \ASSERT{x_\arity \in (\Aut_\arity, \conacc_\arity)}$ to the program.
\end{itemize}
We then show that the resulting (nondeterministic) decision procedure \emph{requires only exponential space}, which 
implies the \expspace{} upper-bound via Savitch's theorem.

Let $S'$ be the program obtained after removing $y:=f(\vec{x})$ and all assertions with conditions in $\rho$,   and $\sigma$ be the set of all conjunctive \FA{}s in $\rho$. Then for each $(\Aut, \conacc) \in \sigma$, a disjunct of the conjunctive representation of $\Pre_{R_f}(\Aut, \conacc)$, say $((\Aut_{1}, \conacc_{1}), \ldots, (\Aut_{\arity}, \conacc_{\arity}))$, is guessed, moreover, for each $j \in [\arity]$, an assertion $\ASSERT{x_j \in (\Aut_{j}, \conacc_{j})}$ is added. Let $S''$ be the resulting program. We say that the assertion $y \in (\Aut, \conacc)$ in $S'$ \emph{generates} the assertion $x_j \in (\Aut_{j}, \conacc_{j})$ in $S''$. One can easily extend this single-step generation relation to multiple steps by considering its transitive closure.

Let $S_1$ be the resulting program after all the assignments are removed. Namely, $S_1$ contains only assertions for input variables.
By induction on the number of removed assignments, we can show that for each input variable $y$ in $S$, each assertion $x \in (\Aut, \conacc)$ in $S_0$, and each path $\pi$ from $x$ to $y$ in $G_S$,  the assertion $x \in (\Aut, \conacc)$ generates \emph{exactly one} assertion $y \in (\Aut', \conacc')$ in $S_1$. Since for each pair of variables $(x,y)$ in $G_S$, there are at most exponentially many paths from $x$ to $y$,  $S_1$ contains at most exponentially many assertions for each input variable. Moreover, according to Lemma~\ref{lem-prerec-comp}, for each assertion $y \in (\Aut', \conacc')$ in $S_1$, suppose that $y \in (\Aut', \conacc')$ is generated by some assertion $x \in (\Aut, \conacc)$ in $S_0$, then $|(\Aut', \conacc')|$ is at most exponential in $|(\Aut, \conacc)|$. Therefore, we conclude that for each input variable $y$, $S_1$ contains at most exponentially many assertions for $y$, where each of them is of at most exponential size.
It follows that the product \FA{} of all the assertions for each input variable $y$ in $S_1$ is of doubly exponential size.

Since the last step of the decision procedure is to decide the nonemptiness of the intersection of all the assertions for each input variable $y$ and  nonemptiness of \FA{}s can be solved in nondeterministic logarithmic space, we deduce that the last step of the decision procedure can be done in nondeterministic exponential space.  We conclude that the aforementioned decision procedure is in nondeterministic exponential space.
\end{proof}



\section{Proof of Proposition~\ref{prop:expspace-lower}: Hardness of $\strline[\replaceall]$}
\label{sec:expspace-hardness-appendix}

We show that the path feasibility problem for $\strline[\replaceall]$ is $\expspace$-hard.
In fact, this holds even when only single characters are replaced.
To show the result, we will give a symbolic execution which ``transforms'' an automaton into exponentially many different automata.
These exponentially many automata will be used to check the vertical tiling relation for a tiling problem with an exponentially wide corridor.
Since such tiling problems are \expspace-hard, the result follows.

After introducing the tiling problem, we will show the form of the symbolic execution without specifying the regular languages used.
Starting from the end of the  symbolic execution, we will work backwards, showing the effect that eliminating the $\replaceall$ and string concatenations has on the regular languages.
We will then show a simple example before giving the precise languages needed to obtain our result.


The reduction builds a constraint that is satisfiable only if a given variable $x_0$ encodes a solution to a tiling problem.
The solution will be encoded in the following form, for some $\tileheight$.
Note, we use different symbols for each bit position of the binary encoding.
We use
\[
\resetchar
\rowdelim \nbit{1}{0} \ldots \nbit{n}{0} \tile^1_1
\nbit{1}{0} \ldots \nbit{n}{1} \tile^1_2
\ldots
\nbit{1}{1} \ldots \nbit{n}{1} \tile^1_\tilewidth
\rowdelim \ldots
\rowdelim \nbit{1}{0} \ldots \nbit{n}{0} \tile^\tileheight_1
\ldots
\nbit{1}{1} \ldots \nbit{n}{1} \tile^\tileheight_\tilewidth
\rowdelim
\resetchar
\]
where $\tile^i_j$ are tiles, preceded by a binary encoding of the column position of the tile.
The $\rowdelim$ character delimits each row and $\resetchar$ delimits the ends of the encoding.
It is a standard exercise to express that a string has the above form as the intersection of a polynomial number of polynomially sized automata.
It is also straight-forward to check the horizontal tiling relation using a polynomially sized automaton (each contiguous pair of tiles needs to appear in the horizontal relation $\hrel$).

The difficulty lies in asserting that the vertical tiling relation $\vrel$ is respected.
For a given position
$\bit_1 \ldots \bit_n$,
it is easy to construct an automaton that checks the tiles in the column labelled
$\bit_1 \ldots \bit_n$
respects $\vrel$.
The crux of our reduction is showing that, using only concatenation and replaceall, we can transform a single automaton constraint into an exponential number of checks, one for each position
$\bit_1 \ldots \bit_n$.
Recall, concatenation can be expressed using $\replaceall$.

This process will be explained in full shortly, but first 
let us give a simple example of how we can obtain the required checks. 

\begin{example}\label{sec:simple-expspace-example}
Let $n = 2$ and $V=\{(t_1, t_2), (t_2, t_1)\}$.

Consider the constraint (which we will explain in the sequel)
\[
\begin{array}{c}
y_1 := \replaceall_{\nbit{1}{0}}(x_0, \simplerepl{1}{0});
\ \ %
z_1 := \replaceall_{\nbit{1}{1}}(x_0, \simplerepl{1}{1});
\\
x_1 := y_1 \concat z_1;
\\
y_2 := \replaceall_{\nbit{2}{0}}(x_1, \simplerepl{2}{0});
\ \ %
z_2 := \replaceall_{\nbit{2}{1}}(x_1, \simplerepl{2}{1});
\\
x_2 := y_2 \concat z_2;\\
x_2 \in \Aut_\vrel
\end{array}
\]
where $\Aut_\vrel$ is the automaton below.
The purpose of this automaton will become clear later in the description.
\begin{figure*}[htbp]
	\includegraphics[scale=0.7]{expspace-hardness-example}
\end{figure*}

Let
$x_0$ take the value
$$!\# \nbit{1}{0} \nbit{2}{0} t_1
\nbit{1}{0} \nbit{2}{1} t_1
\nbit{1}{1} \nbit{2}{0} t_2
\nbit{1}{1} \nbit{2}{1} t_2\ \#\ 
\nbit{1}{0} \nbit{2}{0} t_2
\nbit{1}{0} \nbit{2}{1} t_2
\nbit{1}{1} \nbit{2}{0} t_1
\nbit{1}{1} \nbit{2}{1} t_1\# !.
$$
That is, there are two rows in $x_0$, separated by $\#$, and in each row, $x_0$ counts from $00$ to $11$ in binary (marked with the bit indices $1,2$).
After the first pair of $\replaceall$ operations and concatenation of $y_1$ and $z_1$, the variable $x_1$ must have the value
%
$$
\begin{array}{l}
!\# \simplerepl{1}{0} \nbit{2}{0} t_1
\simplerepl{1}{0} \nbit{2}{1} t_1
\nbit{1}{1} \nbit{2}{0} t_2
\nbit{1}{1} \nbit{2}{1} t_2\ \#\ 
\simplerepl{1}{0} \nbit{2}{0} t_2
\simplerepl{1}{0} \nbit{2}{1} t_2
\nbit{1}{1} \nbit{2}{0} t_1
\nbit{1}{1} \nbit{2}{1} t_1\# ! \\
!\# \nbit{1}{0} \nbit{2}{0} t_1
\nbit{1}{0} \nbit{2}{1} t_1
\simplerepl{1}{1} \nbit{2}{0} t_2
\simplerepl{1}{1} \nbit{2}{1} t_2\ \#\ 
\nbit{1}{0} \nbit{2}{0} t_2
\nbit{1}{0} \nbit{2}{1} t_2
\simplerepl{1}{1} \nbit{2}{0} t_1
\simplerepl{1}{1} \nbit{2}{1} t_1\# !
\end{array}
$$
After the next $\replaceall$ operations and concatenation of $y_2$ and $z_2$, the variable $x_2$ has the value
$$
\begin{array}{l}
!\# \underline{\simplerepl{1}{0} \simplerepl{2}{0} t_1}
\simplerepl{1}{0} \nbit{2}{1} t_1
\nbit{1}{1} \simplerepl{2}{0} t_2
\nbit{1}{1} \nbit{2}{1} t_2\ \#\ 
\underline{\simplerepl{1}{0} \simplerepl{2}{0} t_2}
\simplerepl{1}{0} \nbit{2}{1} t_2
\nbit{1}{1} \simplerepl{2}{0} t_1
\nbit{1}{1} \nbit{2}{1} t_1\# ! \\
!\# \nbit{1}{0} \simplerepl{2}{0} t_1
\nbit{1}{0} \nbit{2}{1} t_1
\underline{\simplerepl{1}{1} \simplerepl{2}{0} t_2}
\simplerepl{1}{1} \nbit{2}{1} t_2\ \#\ 
\nbit{1}{0} \simplerepl{2}{0} t_2
\nbit{1}{0} \nbit{2}{1} t_2
\underline{\simplerepl{1}{1} \simplerepl{2}{0} t_1}
\simplerepl{1}{1} \nbit{2}{1} t_1\# !\\
!\# \simplerepl{1}{0} \nbit{2}{0} t_1
\underline{\simplerepl{1}{0} \simplerepl{2}{1} t_1}
\nbit{1}{1} \nbit{2}{0} t_2
\nbit{1}{1} \simplerepl{2}{1} t_2\ \#\ 
\simplerepl{1}{0} \nbit{2}{0} t_2
\underline{\simplerepl{1}{0} \simplerepl{2}{1} t_2}
\nbit{1}{1} \nbit{2}{0} t_1
\nbit{1}{1} \simplerepl{2}{1} t_1\# ! \\
!\# \nbit{1}{0} \nbit{2}{0} t_1
\nbit{1}{0} \simplerepl{2}{1} t_1
\simplerepl{1}{1} \nbit{2}{0} t_2
\underline{\simplerepl{1}{1} \simplerepl{2}{1} t_2}\ \#\ 
\nbit{1}{0} \nbit{2}{0} t_2
\nbit{1}{0} \simplerepl{2}{1} t_2
\simplerepl{1}{1} \nbit{2}{0} t_1
\underline{\simplerepl{1}{1} \simplerepl{2}{1} t_1} \# !
\end{array}
$$
Notice that this value has four (altered) copies of the original value of $x_0$ and this value enjoys the property that each copy of $x_0$ contains the occurrences of exactly one of $\simplerepl{1}{0} \simplerepl{2}{0}$, $\simplerepl{1}{1} \simplerepl{2}{0}$, $\simplerepl{1}{0} \simplerepl{2}{1}$, $\simplerepl{1}{1} \simplerepl{2}{1}$.
In each copy of $x_0$, we have underlined the positions where the vertical tiling relation is checked.
In particular, in the first copy, the  vertical tiling relation is checked for the $\nbit{1}{0} \nbit{2}{0}$ position of $x_0$, witnessed by the run of $\Aut_\vrel$ on the first copy sketched below, 
\[
\begin{array}{l}
q_0 \xrightarrow{!} q_0 \xrightarrow{\#} q_1 \xrightarrow{\simplerepl{1}{0}} q_2 \xrightarrow {\simplerepl{2}{0} } q_3 \xrightarrow{t_1} (q_1, t_1) \ldots (p_3, t_1) \xrightarrow{t_2} (q_1, t_1) \xrightarrow {\#} (q_1, t_1)  \\ 
\xrightarrow{\simplerepl{1}{0}} (q_2, t_1) \xrightarrow{\simplerepl{2}{0}} (q_3, t_1) \xrightarrow{t_2} (q_1, t_2) \ldots (q_1, t_2) \xrightarrow{\#} (q_1, t_2) \xrightarrow{!} q_0.
\end{array}
\]
In the next copies, the vertical tiling relation is checked for the positions
$\nbit{1}{1} \nbit{2}{0}$,
$\nbit{1}{0} \nbit{2}{1}$, and
$\nbit{1}{1} \nbit{2}{1}$
in $x_0$ respectively. Note that $\Aut_\vrel$ \emph{does not} have to check that in each copy, the same position in each row is marked (which would need a state space exponential in $n$).
In particular, the occurrences of the subwords $\$^{b_1}_1\$^{b_2}_2$ ($b_1,b_2 \in \{0,1\}$)  are equal in each copy since the value of $x_2$ resulting from the assignments already guarantees this. \qed
\end{example}


We now give the formal proof. Fix a tiling problem with
    tiles $\tiles$,
    initial and final tiles $\inittile$ and $\fintile$ respectively,
    horizontal and vertical tiling relations $\hrel$ and $\vrel$,
    and a width $\tilewidth = 2^n$ for some $n$.
We will create a symbolic execution in $\strline[\replaceall]$ which is path feasible iff the tiling problem has a solution.

In particular, we will require that a certain variable $x_1$ can take a value
\[
    \resetchar
    \rowdelim \nbit{1}{0} \ldots \nbit{n}{0} \tile^1_1
              \nbit{1}{0} \ldots \nbit{n}{1} \tile^1_2
              \ldots
              \nbit{1}{1} \ldots \nbit{n}{1} \tile^1_\tilewidth
    \rowdelim \ldots
    \rowdelim \nbit{1}{0} \ldots \nbit{n}{0} \tile^\tileheight_1
              \ldots
              \nbit{1}{1} \ldots \nbit{n}{1} \tile^\tileheight_\tilewidth
    \rowdelim
    \resetchar
\]
where
$\resetchar, \rowdelim,
 \nbit{1}{0}, \nbit{1}{1},
 \ldots,
 \nbit{n}{0}, \nbit{n}{1} \notin \tiles$.
In particular, the string encodes a solution to the tiling problem, where each tile is preceded by a binary number of length $n$ representing the number of the column in which it appears.
Note, we have a different character for each bit position, e.g. $\nbit{2}{0}$ is a distinct character from $\nbit{3}{0}$.
The $\resetchar$ will be used to mark the beginning and end of the string and $\rowdelim$ is used to separate the rows of the solution.

To reach such a solution, we will use the symbolic execution to effectively generate an exponential number of languages.
We can number these languages in binary, i.e.\ $L_{0\ldots00}$ to $L_{1\ldots11}$.
In addition to $\resetchar$, $\rowdelim$, $\nbit{i}{0}$, and $\nbit{i}{1}$, we will also introduce characters of the form
$\repl{i}{\bit}{\bit'}$.
In language
$L_{\bit_1\ldots\bit_n}$,
the character
$\repl{i}{\bit}{\bit'}$
will mean that if $\bit_i = \bit$, then match the character
$\bit' \in \set{\nbit{i}{0},\nbit{i}{1}}$.
This will become clearer when we show an example, but first we will give the symbolic execution needed to generate the languages.

\subsection*{The symbolic execution}

\[
    \begin{array}{rcl}
        \varphi &=& \text{\ASSERT{$x_1 \in L$}};\ \text{\ASSERT{$x_1 \in L_\hrel$}}; \\
                & & \text{\ASSERT{$x_1 \in L_1$}};
                    \ldots;
                    \text{\ASSERT{$x_1 \in L_n$}}; \\
                \\
                & & y^0_2 := \replaceall_{\nbit{1}{0}}(x_1, \repl{1}{0}{0}); \\
                & & y^1_2 := \replaceall_{\nbit{1}{1}}(y^0_2, \repl{1}{0}{1}); \\
                & & z^0_2 := \replaceall_{\nbit{1}{0}}(x_1, \repl{1}{1}{0}); \\
                & & z^1_2 := \replaceall_{\nbit{1}{1}}(z^0_2, \repl{1}{1}{1}); \\
                & & x_2 := y^1_2 \concat z^1_2; \\
                \\
                & & y^0_3 := \replaceall_{\nbit{2}{0}}(x_2, \repl{2}{0}{0}); \\
                & & y^1_3 := \replaceall_{\nbit{2}{1}}(y^0_3, \repl{2}{0}{1}); \\
                & & z^0_3 := \replaceall_{\nbit{2}{0}}(x_2, \repl{2}{1}{0}); \\
                & & z^1_3 := \replaceall_{\nbit{2}{1}}(z^0_2, \repl{2}{1}{1}); \\
                & & x_3 := y^1_3 \concat z^1_3; \\
 		\\
                & & \ldots \\
                & & y^0_n := \replaceall_{\nbit{n}{0}}(x_{n-1}, \repl{n}{0}{0}); \\
                & & y^1_n := \replaceall_{\nbit{n}{1}}(y^0_n, \repl{n}{0}{1}); \\
                & & z^0_n := \replaceall_{\nbit{n}{0}}(x_{n-1}, \repl{n}{1}{0}); \\
                & & z^1_n := \replaceall_{\nbit{n}{1}}(z^0_n, \repl{n}{1}{1}); \\
                & & x_n := y^1_n \concat z^1_n; \\
                \\
                & & \text{\ASSERT{$x_n \in L_\vrel$}}
    \end{array}
\]

The symbolic execution we will need to encode solutions to the tiling problem is put above.
The regular constraints on $x_1$ will enforce that any solution also gives a solution to the tiling problem if we do not enforce the vertical matching relation.
To enforce the vertical matching relation we use the constraint on $x_n$ at the end.
Notice that each sequence of uses of $\replaceall$ translates between
$\repl{i}{\bit}{\bit'}$
and the values $\nbit{i}{0}$ and $\nbit{i}{1}$, where the $y$ variables handle the case where $\bit$ is $0$, and the $z$ when $\bit$ is $1$.
This process will be illuminated later with an example.
Note, $\concat$ can be expressed with $\replaceall$ and hence is not directly needed.

\subsection{Unravelling the Constraint}

We show how $\varphi$ can lead to $x_1$ having to be included in an exponential number of languages.
To do this, we eliminate each use of $\replaceall$ and $\concat$ from the bottom-up.
The process is illustrated in the forwards direction by the example in Section~\ref{sec:simple-expspace-example}.
However, note, in Section~\ref{sec:simple-expspace-example} we did not use the characters
$\repl{i}{\bit}{\bit'}$
but a simpler version
$\simplerepl{i}{\bit}$.
After our explanation, we give a similar example for the backwards direction.

The first step is to eliminate
$x_n := y^1_n \concat z^1_n$.
This is done by removing
$x_n := y^1_n \concat z^1_n$.
and replacing it with
$\text{\ASSERT{$y^1_n \in L'_{0}$}};
 \text{\ASSERT{$z^1_n \in L'_{1}$}}$
where $L_\vrel = L'_{0} \concat L'_{1}$.
This can be done by taking an automaton $\cA$ such that $L_\vrel = \lang{\cA}$ and guessing the state at the split between $y^1_n$ and $z^1_n$ in an accepting run over the value of $x_n$.
This means that $L'_{0}$ and $L'_{1}$ can be represented by automata with the same states and transitions, but different initial and final states.

Next, we eliminate the $\replaceall$ functions.
This leads to
$\text{\ASSERT{$x_{n-1} \in L_{0}$}}$,
where $L_{0}$ is $L'_{0}$ except all $\repl{n}{0}{\bit}$ characters have been replaced by $\nbit{n}{\bit}$.
Similarly, we also have
$\text{\ASSERT{$x_{n-1} \in L_{1}$}}$,
where $L_{1}$ is $L'_{1}$ except all $\repl{n}{1}{\bit}$ characters have been replaced by $\nbit{n}{\bit}$.
Note that different $\repl{n}{\bit'}{\bit}$ characters have been replaced in $L_{0}$ and $L_{1}$.
The languages have begun to diverge.

We then eliminate
$x_{n-1} := y^1_{n-1} \concat z^1_{n-1}$.
Thus $L_{0}$ needs to be split into $L'_{00}$ and $L'_{10}$.
Similarly, $L_{1}$ needs to be split into $L'_{01}$ and $L'_{11}$.
This results in the constraints
\[
    \text{\ASSERT{$y^1_{n-1} \in L'_{00}$}};
    \text{\ASSERT{$y^1_{n-1} \in L'_{01}$}};
    \text{\ASSERT{$z^1_{n-1} \in L'_{10}$}};
    \text{\ASSERT{$z^1_{n-1} \in L'_{11}$}}
\]
and after eliminating the next batch of $\replaceall$ functions we have
\[
    \text{\ASSERT{$y^1_{n-1} \in L_{00}$}};
    \text{\ASSERT{$y^1_{n-1} \in L_{01}$}};
    \text{\ASSERT{$z^1_{n-1} \in L_{10}$}};
    \text{\ASSERT{$z^1_{n-1} \in L_{11}$}} \ .
\]

By following this procedure, we eventually obtain the constraints
\[
    \text{\ASSERT{$x_1 \in L_{0\ldots00}$}};
    \text{\ASSERT{$x_1 \in L_{0\ldots01}$}};
    \ldots;
    \text{\ASSERT{$x_1 \in L_{1\ldots11}$}} \ .
\]
Notice, furthermore, that
$L_{\bit_1 \ldots \bit_n}$
has all
$\repl{i}{\bit_i}{0}$
replaced by $\nbit{i}{0}$
and all
$\repl{i}{\bit_i}{1}$
replaced by $\nbit{i}{1}$
but all characters
$\repl{i}{\bit'_i}{\bit}$
where
$\bit'_i \neq \bit_i$
are unchanged.

\subsection{Controlling the Unravelling}

We revisit the example from Section~\ref{sec:simple-expspace-example} for the case of $n = 3$ to show how we can use the unravelling to obtain automata which will be useful to our encoding.
In this example, we will ignore the issue of initial and final states, and just show the effect on the automaton transition relation.
Let $n = 3$ and $L_\vrel$ be defined by the automaton below.
\begin{center}
\begin{tikzpicture}[node distance=2cm,on grid,auto]
   \node[state] (q_0)   {$q_0$};
   \node[state] (q_1) [right=of q_0] {$q_1$};
   \node[state] (q_2) [right=of q_1] {$q_2$};
   \node[state] (q_3) [right=of q_2] {$q_3$};
    \path[->]
    (q_0) edge [bend left]  node [above] {$\repl{1}{0}{0}$} (q_1)
          edge [bend right] node [below] {$\repl{1}{1}{1}$} (q_1)
    (q_1) edge [bend left]  node [above] {$\repl{2}{0}{0}$} (q_2)
          edge [bend right] node [below] {$\repl{2}{1}{1}$} (q_2)
    (q_2) edge [bend left]  node [above] {$\repl{3}{0}{0}$} (q_3)
          edge [bend right] node [below] {$\repl{3}{1}{1}$} (q_3);
\end{tikzpicture}
\end{center}

We initially have
$\text{\ASSERT{$x_3 \in L_\vrel$}}$.
We first eliminate
$x_3 := y^1_3 \concat z^1_3$ and then
\[
    \begin{array}{rcl}
        y^0_3 &:=& \replaceall_{\nbit{3}{0}}(x_{2}, \repl{3}{0}{0}); \\
        y^1_3 &:=& \replaceall_{\nbit{3}{1}}(y^0_3, \repl{3}{0}{1}); \\
        z^0_3 &:=& \replaceall_{\nbit{3}{0}}(x_{2}, \repl{3}{1}{0}); \\
        z^1_3 &:=& \replaceall_{\nbit{3}{1}}(z^0_3, \repl{3}{1}{1}) \ .
    \end{array}
\]
Note, from $L'_{0}$ we replace the $\repl{3}{0}{0}$- and $\repl{3}{0}{1}$-transitions (although the latter does not appear in $\cA$), while from $L'_{1}$ we replace the $\repl{3}{1}{0}$- and $\repl{3}{1}{1}$-transitions.
This leaves us with
\[
    \text{\ASSERT{$x_{2} \in L_{0}$}};
    \text{\ASSERT{$\land x_{2} \in L_{1}$}}
\]
where $L_{0}$ is given by
\begin{center}
\begin{tikzpicture}[node distance=2cm,on grid,auto]
   \node[state] (q_0)   {$q_0$};
   \node[state] (q_1) [right=of q_0] {$q_1$};
   \node[state] (q_2) [right=of q_1] {$q_2$};
   \node[state] (q_3) [right=of q_2] {$q_3$};
    \path[->]
    (q_0) edge [bend left]  node [above] {$\repl{1}{0}{0}$} (q_1)
          edge [bend right] node [below] {$\repl{1}{1}{1}$} (q_1)
    (q_1) edge [bend left]  node [above] {$\repl{2}{0}{0}$} (q_2)
          edge [bend right] node [below] {$\repl{2}{1}{1}$} (q_2)
    (q_2) edge [bend left]  node [above] {$\nbit{3}{0}$} (q_3)
          edge [bend right] node [below] {$\repl{3}{1}{1}$} (q_3);
\end{tikzpicture}
\end{center}
and $L_{1}$ is given by
\begin{center}
\begin{tikzpicture}[node distance=2cm,on grid,auto]
   \node[state] (q_0)   {$q_0$};
   \node[state] (q_1) [right=of q_0] {$q_1$};
   \node[state] (q_2) [right=of q_1] {$q_2$};
   \node[state] (q_3) [right=of q_2] {$q_3$};
    \path[->]
    (q_0) edge [bend left]  node [above] {$\repl{1}{0}{0}$} (q_1)
          edge [bend right] node [below] {$\repl{1}{1}{1}$} (q_1)
    (q_1) edge [bend left]  node [above] {$\repl{2}{0}{0}$} (q_2)
          edge [bend right] node [below] {$\repl{2}{1}{1}$} (q_2)
    (q_2) edge [bend left]  node [above] {$\repl{3}{0}{0}$} (q_3)
          edge [bend right] node [below] {$\nbit{3}{1}$} (q_3);
\end{tikzpicture}
\end{center}

After completing the elimination process, we are left with
\[
    \text{\ASSERT{$x_1 \in L_{000}$}};
    \ldots;
    \text{\ASSERT{$\land x_1 \in L_{111}$}}
\]
where, for example, $L_{010}$ is given by
\begin{center}
\begin{tikzpicture}[node distance=2cm,on grid,auto]
   \node[state] (q_0)   {$q_0$};
   \node[state] (q_1) [right=of q_0] {$q_1$};
   \node[state] (q_2) [right=of q_1] {$q_2$};
   \node[state] (q_3) [right=of q_2] {$q_3$};
    \path[->]
    (q_0) edge [bend left]  node [above] {$\nbit{1}{0}$} (q_1)
          edge [bend right] node [below] {$\repl{1}{1}{1}$} (q_1)
    (q_1) edge [bend left]  node [above] {$\repl{2}{0}{0}$} (q_2)
          edge [bend right] node [below] {$\nbit{2}{1}$} (q_2)
    (q_2) edge [bend left]  node [above] {$\nbit{3}{0}$} (q_3)
          edge [bend right] node [below] {$\repl{3}{1}{1}$} (q_3);
\end{tikzpicture}
\end{center}
If we further insist that $x_1$ contains only characters from
$\set{\nbit{1}{0},
      \nbit{1}{1},
      \nbit{2}{1},
      \nbit{2}{1},
      \nbit{3}{1},
      \nbit{3}{1}}$
then the only path from $q_0$ to $q_3$ is via the sequence
$\nbit{1}{0} \nbit{2}{1} \nbit{3}{0}$.
Likewise, in $L_{111}$ the only valid sequence will be
$\nbit{1}{1} \nbit{2}{1} \nbit{3}{1}$.

\subsection{Completing the Reduction}

To finish the reduction, we have to instantiate $\varphi$ by giving
$L$, $L_\hrel$, $L_1$, \ldots, $L_n$ and $L_\vrel$ where all languages but $L_\vrel$ are easily seen to be representable by automata whose size is polynomial in $n$.
\begin{enumerate}
\item
    $L$ will enforce that $x_1$ matches
    \[
        \resetchar
        \brac{
            \rowdelim
            \brac{
                \set{\nbit{1}{0},\nbit{1}{1}}
                \ldots
                \set{\nbit{n}{0},\nbit{n}{1}}
                \tiles
            }^\ast
        }^\ast
        \rowdelim
        \resetchar \ .
    \]
    That is, a sequence of delimited rows, each consisting of a sequence of alternating $n$-bit sequences and tiles;
    the beginning and end of the word is marked by $\resetchar$.
    Moreover, $L$ will insist that the first tile seen is $\inittile$ and that the final tile seen is $\fintile$, as required by the tiling problem.

\item
    $L_\hrel$ will enforce that any subsequence
    \[
        \tile \bit_1 \ldots \bit_n \tile'
    \]
    where for all $i$ we have
    $\bit_i \in \set{\nbit{i}{0}, \nbit{i}{1}}$
    in the value of $x_1$ is such that $\tup{\tile, \tile'} \in \hrel$.

\item
    The languages $L_1, \ldots L_n$ will together enforce the correct sequencing of bit values $\bit_1 \ldots \bit_n$.
    That is, between each $\rowdelim$ each sequence $\bit_1 \ldots \bit_n$ appears exactly once and in the correct order
    (i.e.\ %
    $\nbit{1}{0}\ldots\nbit{n-1}{0}\nbit{n}{0}$
    appears before
    $\nbit{1}{0}\ldots\nbit{n-1}{0}\nbit{n}{1}$
    and so on up to
    $\nbit{1}{1}\ldots\nbit{n-1}{1}\nbit{n}{1}$).
    To do this, each $L_i$ will check the following.
    \begin{enumerate}
    \item
        After each $\rowdelim$ the first instance of
        $\set{\nbit{i}{0},\nbit{i}{1}}$
        is $\nbit{i}{0}$.
    \item
        Before each $\rowdelim$ the last instance of
        $\set{\nbit{i}{0},\nbit{i}{1}}$
        is $\nbit{i}{1}$.
    \item
        For every $\nbit{i}{0}$ such that
        $\rowdelim$ does not appear before the next occurrence of
        $\bit_i \in \set{\nbit{i}{0},\nbit{i}{1}}$,
        if the immediately succeeding characters are
        $\nbit{i+1}{1}\ldots\nbit{n}{1}$
        then $\bit_i$ is $\nbit{i}{1}$,
        otherwise $\bit_i$ is $\nbit{i}{0}$.
    \item
        For every $\nbit{i}{1}$ such that
        $\rowdelim$ does not appear before the next occurrence of
        $\bit_i \in \set{\nbit{i}{0},\nbit{i}{1}}$,
        if the immediately succeeding characters are
        $\nbit{i+1}{1}\ldots\nbit{n}{1}$
        then $\bit_i$ is $\nbit{i}{0}$,
        otherwise $\bit_i$ is $\nbit{i}{1}$.
    \end{enumerate}
\end{enumerate}

Finally, we need to define $L_\vrel$.
As seen above, $L_\vrel$ will lead to an exponential number of languages
$L_{\bit_1\ldots\bit_n}$
inside which $x_1$ will need to be contained.
The role of
$L_{\bit_1\ldots\bit_n}$
will be to check that the
$\bit_1\ldots\bit_n$th
column of the tiling obeys the vertical matching relation.
Each
$L_{\bit_1\ldots\bit_n}$
can be seen to be representable by a polynomially sized automaton that stores in its states the last tile seen after the sequence
$\bit_1\ldots\bit_n$.
Then, when the sequence next occurs, the new tile can be compared with the previous one.
The automaton will proceed by storing the new tile and forgetting the old.

We will design an automaton for $L_\vrel$ which will lead to the generation of the correct automaton for each
$L_{\bit_1\ldots\bit_n}$.
We will use characters
$\repl{i}{\bit}{\bit'}$
as before to generate the required bit sequences in the transition labelling of the automaton representing
$L_{\bit_1\ldots\bit_n}$.

\begin{center}
    \begin{tikzpicture}[node distance=3cm and 2cm,on grid,auto,bend angle=75,shorten >= 1pt]
       \node[state] (q1)   {$\tup{q_1, \tile}$};
       \node[state] (q2) [right=of q1] {$\tup{q_2, \tile}$};
       \node[state] (q3) [right=of q2] {$\tup{q_3, \tile}$};
       \node        (qdots) [right=of q3] {$\cdots$};
       \node[state] (qn) [right=of qdots] {$\tup{q_{n+1}, \tile}$};
       \node[state] (qt1) [above right=of qn,yshift=-1.5cm] {$\tup{q_1, \tile_1}$};
       \node        (qtdots) [right=of qn] {$\vdots$};
       \node[state] (qtm) [below right=of qn,yshift=1.5cm] {$\tup{q_1, \tile_m}$};
       \node[state] (p2) [below=of q2] {$\tup{p_2, \tile}$};
       \node[state] (p3) [below=of q3] {$\tup{p_3, \tile}$};
       \node        (pdots) [below=of qdots] {$\cdots$};
       \node[state] (pn) [below=of qn] {$\tup{p_{n+1}, \tile}$};
       \path[->]
         (q1) edge node [above] {$\repl{1}{0}{0}$} (q2)
         (q1) edge node [below] {$\repl{1}{1}{1}$} (q2)
         (q1) edge node [above] {$\repl{1}{0}{1}$} (p2)
         (q1) edge node [below] {$\repl{1}{1}{0}$} (p2)
         (q2) edge node [above] {$\repl{2}{0}{0}$} (q3)
         (q2) edge node [below] {$\repl{2}{1}{1}$} (q3)
         (q2) edge node [above] {$\repl{2}{0}{1}$} (p3)
         (q2) edge node [below] {$\repl{2}{1}{0}$} (p3)
         (q3) edge node [above] {$\repl{3}{0}{0}$} (qdots)
         (q3) edge node [below] {$\repl{3}{1}{1}$} (qdots)
         (q3) edge node [above] {$\repl{3}{0}{1}$} (pdots)
         (q3) edge node [below] {$\repl{3}{1}{0}$} (pdots)
         (qdots) edge node [above] {$\repl{n}{0}{0}$} (qn)
         (qdots) edge node [below] {$\repl{n}{1}{1}$} (qn)
         (qdots) edge node [above] {$\repl{n}{0}{1}$} (pn)
         (qdots) edge node [below] {$\repl{n}{1}{0}$} (pn)
         (p2) edge node [above] {$\replall{2}$} (p3)
         (p3) edge node [above] {$\replall{3}$} (pdots)
         (pdots) edge node [above] {$\replall{n}$} (pn)
         (pn) edge [bend left] node [below] {$\tiles$} (q1)
         (qn) edge node [above] {$\tile_1$} (qt1)
         (qn) edge node [above] {$\tile_m$} (qtm);
    \end{tikzpicture}
    \vspace*{-1cm}
\end{center}

In addition, we also need to ensure that the elimination of the concatenations -- which leads to a non-deterministic change in the initial and final states of the automata -- does not disrupt the language accepted.
For this we will use the $\resetchar$ character, which marks the beginning and end of the value of $x_1$.
The automaton for $L_\vrel$ will treat $\resetchar$ as a kind of ``reset'' character, which takes the automaton back to a defined initial state, which will also be the accepting state.

Before giving the formal definition, we will give an extract of the automaton that will check for the sequence
$\bit_1\ldots\bit_n$
and check the tiling relation.
Each state is of the form $\tup{q, \tile}$ where $\tile$ is the previously saved tile.
In the diagram, $\replall{i}$ denotes the set of all characters
$\repl{i}{\bit}{\bit'}$
for all $\bit, \bit' \in \set{0,1}$.
I.e.\ the transition can read any of these characters.
The top row of the automaton shows the run whilst the correct sequence is being read (and the tile needs to be checked), whilst the bottom is for all bit sequences that diverge from
$\bit_1\ldots\bit_n$
(another column is being read).
From $\tup{q_n, \tile}$ the tiles
$t_1, \ldots, t_m$
are all tiles such that
$\tup{t, t_j} \in \vrel$
for all $j$.
The transitions from each
$\tup{q_1, \tile_j}$
will be analogous to those from
$\tup{q_1, \tile}$.

To see the pattern of the automaton extract, we show how it would be instantiated for $L_{0\ldots00}$ with all edges labelled by
$\repl{i}{\bit}{\bit'}$
removed.

\begin{center}
\small
    \begin{tikzpicture}[node distance=3cm and 2cm,on grid,auto,bend angle=75,shorten >= 1pt]
       \node[state] (q1)   {$\tup{q_1, \tile}$};
       \node[state] (q2) [right=of q1] {$\tup{q_2, \tile}$};
       \node[state] (q3) [right=of q2] {$\tup{q_3, \tile}$};
       \node        (qdots) [right=of q3] {$\cdots$};
       \node[state] (qn) [right=of qdots] {$\tup{q_{n+1}, \tile}$};
       \node[state] (qt1) [above right=of qn,yshift=-1.5cm] {$\tup{q_1, \tile_1}$};
       \node        (qtdots) [right=of qn] {$\vdots$};
       \node[state] (qtm) [below right=of qn,yshift=1.5cm] {$\tup{q_1, \tile_m}$};
       \node[state] (p2) [below=of q2] {$\tup{p_2, \tile}$};
       \node[state] (p3) [below=of q3] {$\tup{p_3, \tile}$};
       \node        (pdots) [below=of qdots] {$\cdots$};
       \node[state] (pn) [below=of qn] {$\tup{p_{n+1}, \tile}$};
       \path[->]
         (q1) edge node [above] {$\nbit{1}{0}$} (q2)
         (q1) edge node [below] {$\nbit{1}{1}$} (p2)
         (q2) edge node [above] {$\nbit{2}{0}$} (q3)
         (q2) edge node [above] {$\nbit{2}{1}$} (p3)
         (q3) edge node [above] {$\nbit{3}{0}$} (qdots)
         (q3) edge node [above] {$\nbit{3}{1}$} (pdots)
         (qdots) edge node [above] {$\nbit{n}{0}$} (qn)
         (qdots) edge node [above] {$\nbit{n}{1}$} (pn)
         (p2) edge node [above] {$\nbit{2}{0}$} (p3)
         (p2) edge node [below] {$\nbit{2}{1}$} (p3)
         (p3) edge node [above] {$\nbit{3}{0}$} (pdots)
         (p3) edge node [below] {$\nbit{3}{1}$} (pdots)
         (pdots) edge node [above] {$\nbit{n}{0}$} (pn)
         (pdots) edge node [below] {$\nbit{n}{1}$} (pn)
         (pn) edge [bend left] node [below] {$\tiles$} (q1)
         (qn) edge node [above] {$\tile_1$} (qt1)
         (qn) edge node [above] {$\tile_m$} (qtm);
    \end{tikzpicture}
\end{center}

Hence, we are ready to define an automaton $\cA_\vrel$ giving the language $L_\vrel$.
We give the definition first, and explanation below.
We define
\[
    \cA_\vrel = \tup{Q, \delta, q_0, F}
\]
where
\[
    \begin{array}{rcl}
        Q &=& \set{q_0, q_1, \ldots, q_n, q_{n+1}, p_2, \ldots, p_n, p_{n+1}} \ \cup \\
          & & \setcomp{\tup{q_i, \tile}}
                      {1 \leq i \leq n+1 \land \tile \in \tiles} \ \cup \\
          & & \setcomp{\tup{p_i, \tile}}
                      {2 \leq i \leq n+1 \land \tile \in \tiles} 
\end{array}
\]
\[
\begin{array}{rcl}
        \delta &=& \setcomp{q \xrightarrow{\resetchar} q_0}{q \in Q} \ \cup \\
               & & \set{q_0 \xrightarrow{\rowdelim} q_1}\ \cup \\
               & & \setcomp{\tup{q_1, \tile}
                            \xrightarrow{\rowdelim}
                            \tup{q_1, \tile}}
                           {\tile \in \tiles} \ \cup \\
               & & \setcomp{q_i
                            \xrightarrow{\repl{i}{\bit}{\bit}}
                            q_{i+1}}
                           {1 \leq i \le n \land \bit \in \set{0,1}} \ \cup \\
               & & \setcomp{\tup{q_i, \tile}
                            \xrightarrow{\repl{i}{\bit}{\bit}}
                            \tup{q_{i+1}, \tile}}
                           {1 \leq i \le n \land \bit \in \set{0,1}} \ \cup
\end{array}
\]
\[
\begin{array}{r c l}
               & & \setcomp{q_i
                            \xrightarrow{\repl{i}{\bit}{\bit'}}
                            p_{i+1}}
                           {1 \leq i \le n \land
                            \bit \neq \bit' \in \set{0,1}} \ \cup \\
               & & \setcomp{\tup{q_i, \tile}
                            \xrightarrow{\repl{i}{\bit}{\bit'}}
                            \tup{p_{i+1}, \tile}}
                           {1 \leq i \le n \land
                            \bit \neq \bit' \in \set{0,1}} \ \cup \\
               & & \setcomp{p_i
                            \xrightarrow{\repl{i}{\bit}{\bit'}}
                            p_{i+1}}
                           {2 \leq i \le n \land \bit, \bit' \in \set{0,1}} \ \cup \\
               & & \setcomp{\tup{p_i, \tile}
                            \xrightarrow{\repl{i}{\bit}{\bit'}}
                            \tup{p_{i+1}, \tile}}
                           {2 \leq i \le n \land \bit, \bit' \in \set{0,1}} \ \cup \\
               & & \setcomp{q_{n+1} \xrightarrow{\tile} \tup{q_1, \tile}}
                           {\tile \in \tiles} \ \cup \\
               & & \setcomp{\tup{q_{n+1}, \tile}
                            \xrightarrow{\tile'}
                            \tup{q_1, \tile'}}
                           {\tup{\tile, \tile'} \in \vrel} \ \cup \\
               & & \setcomp{\tup{p_{n+1}, \tile}
                            \xrightarrow{\tile'}
                            \tup{q_1, \tile}}
                           {\tile, \tile' \in \tiles} \\
        \\
        F &=& \set{q_0}
    \end{array}
\]
The first set of transitions in $\delta$ are the reset transitions.
From all states (including $q_0$) a $\resetchar$ will bring the automaton back to its initial state.
Since we enforce separately that the value of $x_0$ begins and ends with $\resetchar$, it will remain true through the concatenations and $\replaceall$ operations that the value of $x_i$ still begins and ends with $\resetchar$.
Thus, the only final state that can occur on a solution to the string constraints is $q_0$ and the initial state is immaterial.

Next, we define the transitions over $\rowdelim$ which simply track the beginning of a new row.
Since we enforce separately that the input is of the correct format, we can simply use a self-loop on $\tup{q_1, \tile}$ to pass over $\rowdelim$ -- it will only be used once.

The next transitions read
$\repl{i}{\bit}{\bit}$
characters.
This means that in
$L_{\bit_1 \ldots \bit_n}$
we are reading the value $\bit = \bit_i$ at position $i$.
Thus, this is part of the sequence of bits identifying the column
$L_{\bit_1 \ldots \bit_n}$
is checking.
Thus, we continue reading the sequence of bits in the $q$ states.

Conversely, we next deal with the case when
$\repl{i}{\bit}{\bit'}$
with $\bit \neq \bit'$ is being read.
That is, we are reading a column that is not the
$\bit_1\ldots\bit_n$th
and hence we simply skip over it by using the $p$ states.

Finally, we read the tiles.
If we are in state $q_n$ then we are reading the
$\bit_1\ldots\bit_n$th
column, but this is the first row (after a reset) and thus there is no matching relation to verify.
If we are in state $\tup{q_{n+1}, \tile}$ then we are reading the
$\bit_1\ldots\bit_n$th
column, but this is not the first row and we have to verify the matching relation.
Wherefrom, there is only a transition reading $\tile'$ if
$\tup{\tile, \tile'} \in \vrel$.
The tile $\tile'$ is then saved for comparison with the next tile.

Otherwise, if we are in a state $p_{n+1}$ or $\tup{p_{n+1}, \tile}$ then we are not in the correct column and there is nothing to verify.
In this case we simply skip over the tile and continue.

\subsection{Completing the Proof}

Given a tiling problem over a corridor of width $2^n$, we have given a (polynomially sized in $n$)
$\strline[\replaceall]$
constraint $\varphi$ using languages
$L$, $L_\vrel$, $L_\hrel$, $L_1$, \ldots, $L_n$
(representable by automata polynomially sized in $n$)
such that $\varphi$ is satisfiable iff the variable $x_1$ takes on a value
\[
    \resetchar
    \rowdelim \nbit{1}{0} \ldots \nbit{n}{0} \tile^1_1
              \nbit{1}{0} \ldots \nbit{n}{1} \tile^1_2
              \ldots
              \nbit{1}{1} \ldots \nbit{n}{1} \tile^1_\tilewidth
    \rowdelim \ldots
    \rowdelim \nbit{1}{0} \ldots \nbit{n}{0} \tile^\tileheight_1
              \ldots
              \nbit{1}{1} \ldots \nbit{n}{1} \tile^\tileheight_\tilewidth
    \rowdelim
    \resetchar
\]
such that
\[
    \rowdelim \tile^1_1 \ldots \tile^1_\tilewidth
    \rowdelim \tile^2_1 \ldots \tile^2_\tilewidth
    \rowdelim \ldots
    \rowdelim \tile^\tileheight_1 \ldots \tile^\tileheight_\tilewidth
    \rowdelim
\]
is a solution to the tiling problem.
Since such a tiling problem is $2^n$-SPACE-hard, we have shown \expspace-hardness of
the satisfiability problem for $\strline[\replaceall]$. Moreover, this holds even in the single-letter case.



\section{Proof of Theorem~\ref{thm:length}} \label{app:thmlength}
	
\begin{proof}
The proof follows the same line as that of Theorem~\ref{thm-ftconrev}. The decision procedure  consists of four steps, where the first two steps are the same as those in the proof of Theorem~\ref{thm-ftconrev}. 
		
The third step is adapted by applying the following additional transformation to length assertions: When splitting each variable $x$ into multiple fresh variables, say, $x_1 \ldots x_k$, replace all occurrences of $|x|$ (resp. $|x|_a$)  in length assertions by $\sum \limits_{1 \le i \le k} |x_i|$ (resp. $\sum  \limits_{1 \le i \le k}|x_i|_a$). 
		
The fourth step is adapted by applying the  following additional transformation to length assertions: For each variable $x \in X$, if $x^{(r)}$ occurs in the assignments of $S_4$, then replace each occurrence of $|x|$ (resp. $|x|_a$) with $|x^{(r)}|$ (resp. $|x^{(r)}|_a$), otherwise, replace each occurrence of $|x^{(r)}|$ (resp. $|x^{(r)}|_a$) with $|x|$ (resp. $|x|_a$).
		
It follows that we obtain a symbolic execution of polynomial size that contains only \FT{}s, regular constraints and length assertions. The \expspace{} upper bound follows from \cite[Theorem~12]{LB16}.
	\end{proof}

}

\end{document}